\DeclareMathOperator*{\argmin}{arg\,min}
\DeclareMathOperator{\diam}{diam}
\renewcommand{\P}{\ensuremath{\mathbb{P}}}
\renewcommand{\T}{\ensuremath{\mathsf{T}}}
\newcommand{\rF}{\ensuremath{\mathrm{F}}}
\newcommand{\RSS}{\ensuremath{\mathsf{RSS}\hspace*{0.2mm}}}
\newcommand{\cA}{\ensuremath{\mathcal{A}}}
\newcommand{\cB}{\ensuremath{\mathcal{B}}}
\newcommand{\cI}{\ensuremath{\mathcal{I}}}
\newcommand{\bs}{\boldsymbol}
\newcommand{\cd}{\overset{d}{\longrightarrow}}
\newcommand{\cp}{\overset{\P}{\longrightarrow}}
\def\1{\ensuremath{\mathbf{1}}}
\newlist{inlineroman}{enumerate*}{1}
\setlist[inlineroman]{afterlabel=~,label=(\roman*)}
\DeclareFontFamily{U}{mathx}{\hyphenchar\font45}
\DeclareFontShape{U}{mathx}{m}{n}{
  <5> <6> <7> <8> <9> <10>
  <10.95> <12> <14.4> <17.28> <20.74> <24.88>
  mathx10
}{}
\DeclareSymbolFont{mathx}{U}{mathx}{m}{n}
\DeclareMathAccent{\widecheck}{0}{mathx}{"71}
\DeclareMathAccent{\wideparen}{0}{mathx}{"75}
\newcommand{\vvvert}{{
    \vert\kern-0.25ex\vert\kern-0.25ex\vert}}
\newcommand{\bigvvvert}{{
    \big\vert\kern-0.35ex\big\vert\kern-0.35ex\big\vert}}
\newcommand{\Bigvvvert}{{
    \Big\vert\kern-0.3ex\Big\vert\kern-0.3ex\Big\vert}}
\newcommand{\biggvvvert}{{
    \bigg\vert\kern-0.25ex\bigg\vert\kern-0.25ex\bigg\vert}}
\titlespacing*{\paragraph}{0pt}{6pt}{0pt}
\newcounter{proofparagraphcounter}
\newtheoremstyle{break}%
  {\topsep}{\topsep}%
  {\itshape}{}%
  {\bfseries}{}%
  {\newline}%
\newtheoremstyle{breakplainproof}
  {\topsep}{\topsep}%
  {}{}%
  {\bfseries}{}%
  {\newline}%
\theoremstyle{break}
\newtheorem{theorem}{Theorem}
\newtheorem{lemma}{Lemma}
\newtheorem{assumption}{Assumption}
\newtheorem{example}{Example}
\theoremstyle{breakplainproof}
\newtheorem*{proof}{Proof}
\theoremstyle{remark}
\newtheorem*{remark}{Remark}
\begin{document}

\title{Balancing Flexibility and Interpretability: A Conditional Linear Model Estimation via Random Forest}

\author{
  Ricardo Masini \textsuperscript{1*} \and 
  Marcelo Medeiros\textsuperscript{2}
}

\maketitle

\footnotetext[1]{
  Department of Statistics,
  University of California, Davis.
}
\footnotetext[2]{
  Department of Economics,
  University of Illinois, Urbana-Champaign.
}
\let\thefootnote\relax
\footnotetext[1]{
  \textsuperscript{*}Corresponding author:
  \href{mailto:rmasini@ucdavis.edu}{\texttt{rmasini@ucdavis.edu}}
}
\newcommand{\thefootnote}{\arabic{footnote}}

\setcounter{page}{0}\thispagestyle{empty}

\vspace{-0.5cm}
\noindent
\begin{abstract}
Traditional parametric econometric models often rely on rigid functional forms, while nonparametric techniques, despite their flexibility, frequently lack interpretability. This paper proposes a parsimonious alternative by modeling the outcome $Y$ as a linear function of a vector of variables of interest $\bs{X}$, conditional on additional covariates $\bs{Z}$. Specifically, the conditional expectation is expressed as $\E[Y|\bs{X},\bs{Z}]=\bs{X}^{\T}\bs\beta(\bs{Z})$, where $\bs\beta(\cdot)$ is an unknown Lipschitz-continuous function. We introduce an adaptation of the Random Forest (RF) algorithm to estimate this model, balancing the flexibility of machine learning methods with the interpretability of traditional linear models. This approach addresses a key challenge in applied econometrics by accommodating heterogeneity in the relationship between covariates and outcomes. Furthermore, the heterogeneous partial effects of $\bs{X}$ on $Y$ are represented by $\bs\beta(\cdot)$ and can be directly estimated using our proposed method. Our framework effectively unifies established parametric and nonparametric models, including varying-coefficient, switching regression, and additive models. We provide theoretical guarantees, such as pointwise and $\mathcal{L}^p$-norm rates of convergence for the estimator, and establish a pointwise central limit theorem through subsampling, aiding inference on the function $\bs\beta(\cdot)$. We present Monte Carlo simulation results to assess the finite-sample performance of the method.

\vspace{10pt}
\noindent\textbf{Keywords}:
random forests; heterogeneous partial effects; machine learning.
\end{abstract}

\maketitle

\clearpage
\pagebreak

\tableofcontents
\pagebreak

\def\spacingset#1{\renewcommand{\baselinestretch}%
{#1}\small\normalsize} \spacingset{1}

\spacingset{1.5}

\section{Introduction}

In economics and related social sciences, epidemiology and medicine, psychology, and many other areas, estimating the partial effects (causal or not) of one or more factors on a target variable is extremely important. In general, partial effects estimation is conducted either on parametric models with strong functional-form assumptions or in overly simplified semi-parametric alternatives. These models provide interpretive clarity and computational efficiency but often impose restrictive assumptions that may inadequately capture the complexity of real-world data. In recent years, machine learning (ML) methods have broadened the statistician/econometrician’s toolkit by offering more flexible approaches to modeling relationships within potentially high-dimensional datasets without imposing stringent parametric assumptions; see, for example, \citet{sAgI2019} and \citet{rMeMmM2023} for recent review papers. Despite these advances, the challenge of balancing flexibility with interpretability remains a significant issue for applied researchers.

This paper presents a locally linear model that addresses the challenge of integrating the flexibility of machine learning (ML) methods with the interpretability of linear models. The proposed model aligns with several well-established parametric and semi-parametric specifications in the literature, including switching regression \citep{mgD1969,smGreQ1972}, varying-coefficient models, and additive models \citep{tHrT1993,rCrsT1993b}. It also incorporates recent advancements in machine learning, such as Random Forests (RF) introduced by \citet{breiman2001random}, Generalized Random Forests (GRF) presented by \citet{sAjTsW2019}, and local linear forests (LLF) discussed in \citet{rFjTsAsW2021}. Our central contribution is a robust framework that enables the nonparametric estimation of heterogeneous partial effects of one or more variables of interest on an outcome. Our approach is intuitive and computationally simple, achieved through an adaptation of the RF method. 

\subsection{Motivation}
Let $Y$ be a response random variable and define $h(\bs{x},\bs{z}):=\E(Y|\bs{X}=\bs{x},\bs{Z}=\bs{z})$, where $h(\cdot)$ is an unknown Lipschitz continuous function of two vectors of random covariates $\bs{X}$ and $\bs{Z}$. Suppose the goal is to estimate $h(\bs{x},\bs{z})$ for arbitrary data points $\bs{x}$ and $\bs{z}$, as well as the partial effects $\partial h(\bs{x},\bs{z})/\partial \bs{x}$  assuming $h$ is differentiable with respect to its first argument. Modern machine learning methods provide a range of nonparametric algorithms to estimate $h(\bs{x},\bs{z})$, enabling greater flexibility in modeling intricate relationships. However, these models often lack interpretability, and the computation of partial effects can be both intensive and non-trivial, particularly in the context of deep learning models.\footnote{Deep learning methods usually require the definition of too many hyperparameters, the network architecture, and also require large amounts of data for effective training.} Moreover, conducting inference on partial effects within the framework of general machine learning models remains an open problem, necessitating further research and development to enhance methodological rigor and applicability in statistical analysis.

This paper presents an alternative approach rooted in contemporary machine learning literature, specifically aimed at enhancing interpretability. We consider covariates $\bs{X}$ and $\bs{Z}$, which are not necessarily mutually independent, with a focus on the estimation  $\partial h(\bs{x},\bs{z})/\partial \bs{x}$. We propose a model in which the conditional expectation of $Y$, given $\bs{X}$ and 
$\bs{Z}$, is represented as follows:
\[
h(\bs{x},\bs{z}):=\E(Y|\bs{X}=\bs{x},\bs{Z}=\bs{z})=\bs\beta(\bs{z})^{\T}\bs{x}.
\]

This is a varying-coefficient model where the coefficients of the linear relationship between 
$\bs{X}$ and $Y$ vary as a function of $\bs{Z}$. Importantly, the partial effect of $\bs{X}$ on $Y$, $\partial h(\bs{x},\bs{z})/\partial \bs{x}$, is given directly by $\bs\beta(\bs{z})$, and thus exhibits heterogeneity across different values of $\bs{Z}$. Estimating the map $\bs{z}\mapsto \bs\beta(\bs{z})$ is equivalent to recovering the heterogeneous partial effects, providing a natural and intuitive model interpretation. Clearly, this is only the case when $\bs{Z}$ and $\bs{X}$ do not share the same covariates. We consider the estimation of $\bs{\beta}(\bs{z})$ by modification of the Random Forest (RF) method, where the trees have a linear model on each of their leaves. If $\bs{X}$ is a fixed scalar, the model equals the Random Forest (RF) method, where the trees have an intercept model on each of their leaves. 

This model presents two primary advantages. First, it facilitates flexible, data-driven estimation of complex relationships, all while ensuring the interpretability of partial effects, which is often a pivotal focus in applied research. Second, by integrating the flexibility of the Random Forest (RF) algorithm within a locally linear framework, we effectively capture heterogeneity in partial effects in a computationally feasible manner. This characteristic renders the approach particularly well-suited for large-scale empirical applications.

In many empirical applications, the dimensionality of $\bs{X}$ is expected to be small relative to the sample size (it is not uncommon to have an univariate $\bs{X}$), simplifying the application of our model. Partial effects of covariates can be estimated directly using modern machine learning and nonparametric techniques. We choose the RF algorithm for its robustness and flexibility, mainly because it requires fewer hyperparameter choices, mitigating the risks of overfitting and cherry-picking. Additionally, RF's efficient estimation algorithms make it computationally attractive. While alternative approaches like deep learning offer comparable flexibility, they are more sensitive to hyperparameter tuning and initial conditions, require larger datasets, and are computationally intensive. By contrast, RF offers a more stable and efficient solution.

\subsection{Main contributions and comparison to the literature}

The concept of introducing nonlinearity through varying coefficients in linear models is well-established. It originated in threshold regression models by \citet{mgD1969}, \citet{rQ1972}, \citet{smGreQ1972}, \citet{nmK1978}, \citet{hTksL1980}, and later work by \citet{rsT1989}. These studies focus on sharp parameter changes based on a univariate threshold variable, capturing structural shifts in the data. Smooth transitions (shifts) were introduced by \cite{ksChT1986a} and \citet{tT1994a}, but these models relied on a single transition variable. Extensions to multiple transition variables inspired by the neural network (NN) literature were explored by \citet{mcMaV2000}, \citet{mcMaV2005}, and \citet{sMcPmcM2004}, though these approaches were parametric and limited to low-dimensional settings. Nonparametric alternatives have been proposed by \citet{tHrT1993}, \cite{jFwZ1999}, and \citet{zCjFqY2000}. However, estimating these models becomes challenging with an increase in the number of covariates. In contrast, our semi-parametric approach based on random forests accommodates a large set of covariates, provided the dimensionality remains smaller than the sample size. 


Estimating regression trees with linear models in the terminal nodes is not new and is nested in the more general Generalized Random Forest model by \citet{sAjTsW2019}. \citet{rFjTsAsW2021} formalized the locally linear random forests and recommended a local Ridge regression to estimate the parameters. Nevertheless, this paper differs and complements the ones cited above in a few ways. First, we estimate the local linear models with the usual ordinary least-squares method, making our algorithm simpler than the ones in \citet{sAjTsW2019} and \citet{rFjTsAsW2021}. Second, as our primary focus is on the estimation of the partial effects $\bs{\beta}(\bs{z})$ and not $\E(Y|\bs{X}=\bs{x},\bs{Z}=\bs{z})$, we complement the previous papers by deriving a new consistency and asymptotic normality results for $\widehat{\bs\beta}(\bs{z})$. We also worked out the rates of convergence and derived the asymptotic covariance matrix for the estimator of $\widehat{\bs\beta}(\bs{Z})$. Third, we proposed a consistent estimator for the covariance matrix of $\widehat{\bs\beta}(\bs{Z})$. Fourth, unlike most papers in the literature, we show that our results are also valid when $\bs{Z}$ contains discrete random variables. Note that having discrete-valued elements of $\bs{Z}$ is not the same as introducing interaction effects between $\bs{X}$ and dummy variables constructed from the classes in $\bs{Z}$. Our approach allows the partial effect heterogeneity to be determined by unknown interactions among the elements of $\bs{Z}$. Finally, based on our new convergence results, we derived two tests to conduct inference on the partial effects and test whether the partial effects are homogeneous. 

\subsection{Outline of the Paper}
Following this introduction, the paper is structured as follows. In Section \ref{S:Setup}, we define the model, outline assumptions related to the data-generating mechanism, and present special cases of our proposal. In Section \ref{S:Main}, we present the theoretical results of this paper. More specifically, Section \ref{S:Estimation} details the main convergence results, while Sections \ref{S:GLRT} and \ref{S:LM} provide descriptions of the specification tests proposed in the paper. The case of discrete random variables is examined in Section \ref{S:Discrete}. Monte Carlo simulations are illustrated in Section \ref{S:Simulation}, and empirical samples are presented in Section \ref{S:Empirical}. Finally, Section \ref{S:Conclusions} wraps up the paper. All technical derivations are provided in the Supplemental Material.

\subsection{Notation}

Vectors are denoted by bold lowercase $\bs{x}$ and matrices by bold uppercase $\bs{M}$. $\bs{x}^{\T}$ denotes the transpose of the vector $\bs{x}$. Similarly $\bs{M}^{\T}$ denotes the transpose for matrix $\bs{M}$. We write $\|\bs{x}\|_p$ for $p\in[1,\infty]$ to denote the $\ell^p$-norm
if $\bs{x}$ is a
(possibly random) vector or the induced operator
$\ell^p$--$\ell^p$-norm if $\bs{M}$
is a matrix. For a matrix $\bs{M}$, we write
$\|\bs{M}\|_{\max}$ for the
maximum absolute entry and $\|\bs{M}\|_\rF$ for the Frobenius norm. We denote
positive semi-definiteness by $\bs{M} \succeq 0$ and write $\bs{I}_d$ for the $d \times
d$ identity matrix.

For scalar sequences $x_n$ and $y_n$, we write $x_n \lesssim y_n$ if there
exists a positive constant $C$ such that $|x_n| \leq C |y_n|$ for sufficiently
large $n$. We write $x_n \asymp y_n$ to indicate both $x_n \lesssim y_n$ and
$y_n \lesssim x_n$. Similarly, for random variables $X_n$ and $Y_n$, we write
$X_n \lesssim_\P Y_n$ if for every $\varepsilon > 0$ there exists a positive
constant $C$ such that $\P(|X_n| \geq C |Y_n|) \leq \varepsilon$, and write
$X_n \cp X$ and  $X_n \cd X$ for limits in probability and in distribution, respectively. For real numbers $a$ and $b$ we use
$a \land b = \max\{a,b\}$ and $a \lor b = \min\{a,b\}$.

\section{Setup}\label{S:Setup}

We consider the following model.

\begin{assumption}[Locally Linear Model]\label{ass:local-linear_spec}
Let $Y$ be an integrable random variable and $\bs{X}$,$\bs{Z}$ be random vectors taking value on $\R^{d_Z}$ and $\R^{d_X}$ respectively that do not share common variables. We assume that for some Lipschitz function $\bs\beta:\mathcal{Z}\subseteq\R^{d_Z}\to\R^{d_X}$ 
\begin{equation}\label{eq:local-linear_model}
h(\bs{x},\bs{z}):=\E[Y|\bs{X}=\bs{x},\bs{Z}=\bs{z}]=\bs{x}^{\T}\bs{\bs\beta}(\bs{z}).
\end{equation}
\end{assumption}

The local-linear model \eqref{eq:local-linear_model} has the advantage of having the marginal treatment effect built-in since $h(\bs{x},\bs{z})/\partial \bs{x}=\bs\beta(\bs{z})$. At the same time, the model is flexible enough in terms of the covariates $\bs{Z}$ to accommodate complex heterogeneous partial effects, which include higher-order interactions between $\bs{X}$ and $\bs{Z}$. Model \eqref{eq:local-linear_model} nests several notable cases of interest.

Since $\E[Y|\bs{X},\bs{Z}]$ is the minimizer of $\E[(Y-f(\bs{X},\bs{Z}))^2]$ over the class of functions $f$ such that $\E[f(\bs{X},\bs{Z})^2]<\infty$, we can explicitly characterize the function $\bs{z}\mapsto \beta(\bs{z})$ such that
$\bs\beta(\bs{z}) =  \bs{\Omega}(\bs{z})^{-1}\bs{\gamma}(\bs{z})$, where $\bs{\Omega}(\bs{z}):= \E[\bs{X}\bs{X}^{\T}|\bs{Z}=\bs{z}]$ and $\bs{\gamma}(\bs{z}):=\E[\bs{X}Y|\bs{Z}=\bs{z}]$,
and provided that $\bs{\Omega}(\bs{z})$ is almost sure positive definite.

\begin{example}[Heterogeneous treatment effects]
Consider a collection of potential outcomes $\{Y^*(t):t\in\mathcal{T}\}$ where $Y(t)$ is a random variable and $\mathcal{T}\subseteq\R$. In a binary treatment case $\mathcal{T}=\{0,1\}$. Here we allow continuous treatment such as $\mathcal{T}=[a,b]$ for $a<b$ or $\mathcal{T}=\R$. In the continuous treatment literature, the function $t\mapsto Y^*(t)$ is called the dose-response function, and $t\mapsto \E[Y^*(t)]$ is the average dose function. For each unit in the sample, we observe the realization of the treatment status $T$, a vector of covariates $\bs{Z}$ taking values on $\R^{d_Z}$, and the potential outcome corresponding to the treatment received $Y:=Y^*(t)$. 

The interest relies on estimating the (potentially) heterogeneous (marginal) treatment effect.
\[
\tau(\bs{z}) :=\begin{cases}
    \frac{d\E[Y^*(t)|\bs{Z}=\bs{z}]}{dt} & \text{if $\mathcal{T}$ is an interval}, \\
    \E[Y^*(1) -Y^*(0)|\bs{Z}=\bs{z}] &\text{if $\mathcal{T}=\{0,1\}$}.
\end{cases}
\]
By definition we have that $\E[Y^*(t)|T=t]=\E[Y|T=t]$ for $t\in\mathcal{T}$. However, in general, we have  $\E[Y^*(t)]\neq \E[Y|T=t]$ due to confounding effects. If conditional on the covariates $\bs{Z}$ the average dose-response function is mean independent of the treatment, we can consider the following model
\begin{equation}\label{ex:Treat0}
Y = \beta(\bs{Z})T +U,
\end{equation}
where $\E(U|\bs{Z},T)=0$. We can also include additional controls in the above model. For example, let $\bs{W}$ be a set of control variables and write
\begin{equation}\label{ex:Treat}
Y = \beta(\bs{Z})T + \bs\delta^{\T}\bs{W}+U,
\end{equation}
where $U$ is an error term such that $\E(U|\bs{Z},T,\bs{W})=0$. Equations \eqref{ex:Treat0} and\eqref{ex:Treat} are models where the treatment effects are not constant and may vary with subject characteristics defined by the vector $\bs{Z}$. Examples of such specifications can be found at \citet{beH2000}, \citet{sAgI2016}, or \citet{vCiFyL2018}, among many others.
\end{example}

\begin{example}[Smooth Transition Regression]
Suppose that the coefficients of a regression model change smoothly and monotonically between zero and one according to a scalar variable $Z_i$ such that
\begin{equation}\label{ex:STR}
Y = \bs\beta_0^{\T}\bs{X} + g(\bs{Z})\bs\beta_1^{\T}\bs{X} + U,
\end{equation}
where $\lim_{z\rightarrow\infty}g(\bs{z})=1$ and $\lim_{z\rightarrow-\infty}g(\bs{z})=0$ and  $\E(U|\bs{X},Z)=0$. In this case, $\bs\beta(\bs{z})=\bs\beta_0 + g(\bs{z})\bs\beta_1$. This is a general case of a two-regime smooth transition regression model. An interesting example of such specification is the nonlinear Phillips curve model discussed in \citet{wAmMmM2011}.
\end{example}

\begin{example}[Grouped patterns of heterogeneity]
Consider a model where the intercept varies according to a set of variables $\bs{Z}$. In this case, 
\begin{equation}\label{ex:Group}
Y = \bs\beta(\bs{Z}) + \bs\delta^{\T}\bs{X}+U,
\end{equation}
where $\E(U|\bs{X},Z)=0$. This type of specification is commonly employed to model clustering behavior and has gained traction in the empirical economic growth literature. See, for example, \citet{sBeM2015}. This specification is also related to the binscatter methods discussed in \citet{mdCrkCmhFyF2024}.
\end{example}

Given a random sample $\{(Y_i,\bs{X}_i,\bs{Z}_i):1\leq i\leq n\}$ of $(Y,\bs{X},\bs{Z})$ we propose to estimate \eqref{eq:local-linear_model} by a modification of the Random Forest procedure (refer to Algorithm \ref{alg:cap}). We start by defining for any nonempty set of indices $\mathcal{I}\subseteq [n]$, the residual sum of squares ($\RSS$) of a least-square estimation (we will impose conditions such that the estimator is well-defined). Hence, write
\begin{align*}
    \RSS(\cI)&:=\sum_{i\in\cI}\left[Y_i-\bs{X}_i^ {\T}\widehat{\bs\beta}(\cI)^2\right],\qquad \text{and}\qquad\widehat{\bs\beta}(\cI):=\left[\sum_{i\in\cI}\bs{X}_i\bs{X}_i^{\T}\right]^{-1}\sum_{i\in\cI}\bs{X}_iY_i.
\end{align*}
Also, for any element of $\bs{Z}$ indexed by $j\in[d_Z]$, and $\delta\in[0,1]$ define 
\begin{equation}\label{eq:MSE_continuos}
\Delta(\mathcal{I},j,\delta):= \RSS(\{i\in\cI:Z_{ij}\leq \delta \} ) + 
\RSS(\{i\in\cI:Z_{ij}> \delta \} ). 
\end{equation}
The optimum splitting point $\delta^*$ of the index set $\mathcal{I}$  along direction $j$ is given by
\begin{equation}\label{eq:split_criteria}
    \delta^*:=\delta^*(\mathcal{I},j) \in\argmin_{\delta\in[0,1]} \Delta(R,j,\delta).
\end{equation}
Note that we might take $\delta^*\in\{Z_{i,j}:i\in\mathcal{I}\}$. Finally, define the left and right child nodes of $\mathcal{I}$ by
\[
\cI^{-}:=\{i\in\cI:Z_{ij}\leq \delta^*\},\qquad\text{and}\qquad \cI^+:=\{i\in\cI:Z_{ij}> \delta^*\}.
\]

Starting for an initial index set $\mathcal{B}\subseteq [n]$ and repeating the steps above, we end up with a partition of $[0,1]^{d_Z}$ into disjoint rectangles with axis-aligned sides (leaves). Furthermore, since the initial index set (root node) and the slipt direction $j$ are randomly chosen independent of the data, we denote by $\omega$ this independent source of randomness in the algorithm. 

For $\bs{z}\in [0,1]^{d_Z}$, let $R(\bs{z},\omega)$ denotes the unique (random) leaf containing $\bs{z}$, i.e.,  $R(\bs{z},\omega)$ is a random rectangle that depends on the observation indexed by $\mathcal{B}$ and an external source of randomness. For an nonempty $\cA\subseteq [n]$ such that $\cA\cap\cB=\emptyset$, define $\cA(\bs{z},\omega)=\{i\in\mathcal{A}:\bs{Z}_i\in R(\bs{z},\omega)\}$, $\widehat{m}(\bs{x},\bs{z},\omega):=\bs{x}^{\T}\widehat{\bs\beta}(\bs{z},\omega)$, and $\widehat{\bs\beta}(\bs{z},\omega):= \widehat{\bs\beta}(\mathcal{A}(\bs{z},\omega))$. 

More explicitly, write
\begin{equation}\label{eq:beta_hat_definition}
    \widehat{\bs\beta}(\bs{z},\omega)= \widehat{\bs{\Omega}}(\bs{z},\omega)^ {-1}\widehat{\bs{\gamma}}(\bs{z},\omega),
\end{equation}
where $\widehat{\bs{\Omega}}(\bs{z},\omega):= \frac{1}{|\cA(\bs{z},\omega)|}\sum_{i\in\cA(\bs{z},\omega)}\bs{X}_i\bs{X}_i^{\T}$ and  $\widehat{\bs{\gamma}}(\bs{z},\omega):=\frac{1}{|\cA(\bs{z},\omega)|}\sum_{i\in\cA(\bs{z},\omega)}\bs{X}_iY_i$. Finally, for $B\geq 1$, we propose to estimate $m(\bs{x},\bs{z})$ using $\overline{m}(\bs{x},\bs{z}):= \frac{1}{B}\sum_{b=1}^B\widehat{m}(\bs{X},\bs{z},\omega_b) = \bs{X}^{\T}\overline{\bs\beta}(\bs{z})$ with
\begin{equation}\label{eq:RF_estimator_definition}
\overline{\bs\beta}(\bs{z}):=\frac{1}{B}\sum_{b=1}^B\widehat{\bs\beta}(\bs{z},\omega_b)
\end{equation}
where $\{\omega_b:b\in[B]\}$ is an independent and identically sequence independent of $\{(Y_i,\bs{X}_i,\bs{Z}_i):i\in[n]\}$. We also investigate the properties of the following estimator:
\[
\widecheck{\bs\beta}(\bs{z}):=\overline{\bs{\Omega}}(\bs{z})^{-1}\overline{\bs{\gamma}}(\bs{z}),
\]
where $\overline{\bs{\Omega}}(\bs{z}):= \frac{1}{B}\sum_{b=1}^B \widehat{\bs{\Omega}}(\bs{z},\omega_b)$ and $\overline{\bs{\gamma}}(\bs{z}):= \frac{1}{B}\sum_{b=1}^B \widehat{\bs{\gamma}}(\bs{z},\omega_b)$.

\begin{algorithm}[H]
\caption{Local-Linear Double-sample tree}\label{alg:cap}
\begin{algorithmic}
\Require $n\geq 2$ training samples of the form $(y_i,\bs{x}_i,\bs{z}_i)\in (\R\times\R^ {d_X} [0,1]^{d_Z})$, a subsample size $s<n$, minimum leaf size $k\leq s$, number of trees $B\in\N$, minimum fraction of observation in each child node $\alpha\in(0,0.5)$ and the probability of selecting a variable to split is at least $\pi/d_Z$ for $\pi\in(0,1]$.
\For{$b=1,\dots, B$}
    \State 1. Draw a random subsample of size $s$ for $\{1,\dots, n\}$ without replacement
    \State 2. Partition the subsample into $\mathcal{A}\cup\mathcal{B}$ such that $|\mathcal{A}|=\lfloor s/2\rfloor$ and $|\mathcal{B}|=\lceil s/2\rceil$
    \State 3. Grow a tree using only data from $\mathcal{B}$ via standard CART algorithm 
    \While{the number of observations in every leaf is not between $k$ and $2k-1$}
    \State Sample a variable $Z_j$ from $\{Z_1,\dots, Z_{d_Z}\}$ 
    \State Decide the cutoff in $Z_j$ based on criteria \eqref{eq:split_criteria} imposing the $\alpha$-regularity condition above
    \EndWhile

    \State 4. Estimate the leaves responses using only data from  $\mathcal{A}$ as in \eqref{eq:beta_hat_definition}
\EndFor
\State Compute the random forest estimator given in \eqref{eq:RF_estimator_definition}
\end{algorithmic}
\end{algorithm}

\section{Main Results}\label{S:Main}
In this section, we state the main results of the paper and their underlying assumptions.

\subsection{Estimation}\label{S:Estimation}

\begin{assumption}\label{ass:main}
We consider the following conditions:
\begin{enumerate}
    \item[(a)] (Sampling) The training sample $\{(Y_i,\bs{X}_i,\bs{Z}_i):i\in[n]\}$ is $n$ independent copies of $(Y,\bs{X},\bs{Z})$ taking values on $\R\times \R^{d_X} \times [0,1]^{d_Z}$ and $\bs{X}_i$ has finite moments up to order $4$.
    \item[(b)] (Density)  $\bs{Z}$ admits a density $f$ that is bounded away from zero and infinity
    \item[(c)] (Smoothness) $\bs{z}\mapsto \bs\beta(\bs{z})$, $\bs{z}\mapsto \bs{\Omega}(\bs{z})$, $\bs{z}\mapsto \E[\bs{X}(Y-\E[Y|\bs{X},\bs{Z}])|\bs{Z}=\bs{z}]$ and $\bs{z}\mapsto \V[\bs{X}(Y-\E[Y|\bs{X},\bs{Z}]|\bs{Z}=\bs{z}]$ are Lipschitz-continuous, and $\bs{z}\mapsto \V[X_jX_k|\bs{Z}=\bs{z}]$ and $\bs{z}\mapsto \V[X_j(Y-\E[Y|\bs{X},\bs{Z}]|\bs{Z}=\bs{z}]$ are bounded for $j,k\in[d_Z]$;
    \item[(d)] (Conditional Moment Lower bounds) The matrices $\E[\bs{X}\bs{X}^{\T}|\bs{Z}=\bs{z}]$ and $\V[\bs{X}(Y-\E[Y|\bs{X},\bs{Z}])|\bs{Z}=\bs{z}]$ are non-singular for each $\bs{z}\in[0,1]^d$;
    \item[(e)] The trees are generated by Algorithm \eqref{alg:cap} with  with $\alpha\in (0,0.5)$ and $\pi\in (0,1]$.
\end{enumerate}
\end{assumption}
In Assumption \ref{ass:main}(e), we require the minimum of observations per leaf to increase with the same size to ensure that each tree is consistent, as opposed to the consistency of the random forest. Specifically, we need the variance of $\widehat{\bs{\Omega}}(\bs{z})$ to vanish on each tree as the sample size increases. At the same time, $k$ must grow slower than the subsampling rate $s$ so that each cell is split enough times to make its diameter shrink toward zero, and the tree bias vanishes as the sample size increases.

Recall that, for the tree constructed with randomness $\omega$, $R(\bs{z},\omega)$ denotes the leaf containing $\bs{z}$. Define the map $\widetilde{\bs{\beta}}:[0,1]^d\to \R^{d_X}$ by
\[
\widetilde{\bs\beta}(\bs{z}) := \nabla_x\E[Y|\bs{X}=\bs{x},\bs{Z}\in R(\bs{z},\omega)] = \E[\bs\beta(\bs{Z})|\bs{Z}\in R(\bs{z},\omega)];\qquad \bs{z}\in[0,1]^d.
\]
Note that in general we expect $\widetilde{\bs{\beta}}(\cdot)\neq \bs{\beta}(\cdot)$.
However, due to the sample split we have that $\widehat{\bs\beta}(\cdot,\omega)$ is an unbiased estimator for $\widetilde{\bs\beta}(\cdot)$ even when $\widehat{\bs\beta}(\cdot,\omega)$ is  biased for $\bs{\beta}(\cdot)$ (and $\widetilde{\bs\beta}(\cdot)$).

\begin{theorem}[Unbiasedness]\label{thm:unbiasedness}
    Under Assumptions \ref{ass:local-linear_spec} and \ref{ass:main}, $\E[\widehat{\bs\beta}(\bs{z})] = \widetilde{\bs\beta}(\bs{z})$ for all $\bs{z}\in[0,1]^d$.
\end{theorem}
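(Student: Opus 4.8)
The plan is to fix a single tree, i.e.\ to condition throughout on the external randomness $\omega$ and on the subsample $\cB$ used to grow it, so that the leaf $R=R(\bs{z},\omega)$ and the index set $\cA(\bs{z},\omega)$ become deterministic functions of the $\cA$-design. The entire argument rests on the honesty afforded by the double-sample split in Algorithm~\ref{alg:cap}: since the partition is built only from $\cB$ while $\widehat{\bs\beta}(\bs{z},\omega)$ is estimated only from $\cA$, and the two index sets are disjoint and chosen independently of the data, the observations $\{(Y_i,\bs{X}_i,\bs{Z}_i):i\in\cA\}$ remain i.i.d.\ draws from the law of $(Y,\bs{X},\bs{Z})$ after conditioning on the tree. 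First I would record this independence formally and verify that $\widehat{\bs{\Omega}}(\bs{z},\omega)$ is invertible on the relevant event, so that $\widehat{\bs\beta}(\bs{z},\omega)$ is well defined; this uses Assumption~\ref{ass:main}(d) together with the minimum-leaf-size rule, which guarantees enough observations per leaf.

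The core step exploits that ordinary least squares is \emph{linear in the responses}. Writing
\[
\widehat{\bs\beta}(\bs{z},\omega)=\Big[\textstyle\sum_{i\in\cA(\bs{z},\omega)}\bs{X}_i\bs{X}_i^{\T}\Big]^{-1}\sum_{i\in\cA(\bs{z},\omega)}\bs{X}_i Y_i,
\]
the normalizing matrix and the vectors $\bs{X}_i$ are measurable with respect to the design $\{(\bs{X}_i,\bs{Z}_i):i\in\cA\}$, so conditioning on the tree and on this design and invoking the tower property leaves only the $Y_i$ to be averaged. By Assumption~\ref{ass:local-linear_spec} together with honesty, $\E[Y_i\mid\bs{X}_i,\bs{Z}_i,\text{tree},\text{design}]=\E[Y_i\mid\bs{X}_i,\bs{Z}_i]=\bs{X}_i^{\T}\bs\beta(\bs{Z}_i)$ for each $i\in\cA$, whence
\[
\E\big[\widehat{\bs\beta}(\bs{z},\omega)\,\big|\,\text{tree},\text{design}\big]=\Big[\textstyle\sum_{i\in\cA(\bs{z},\omega)}\bs{X}_i\bs{X}_i^{\T}\Big]^{-1}\sum_{i\in\cA(\bs{z},\omega)}\bs{X}_i\bs{X}_i^{\T}\bs\beta(\bs{Z}_i).
\]
Thus, conditionally on the design, $\widehat{\bs\beta}$ is exactly unbiased for the within-leaf average of $\bs\beta(\cdot)$ weighted by $\bs{X}_i\bs{X}_i^{\T}$, and this identity is the content that honesty buys us.

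It then remains to integrate out the design and identify the result with $\widetilde{\bs\beta}(\bs{z})=\E[\bs\beta(\bs{Z})\mid\bs{Z}\in R(\bs{z},\omega)]$. Here I would condition on the leaf count $|\cA(\bs{z},\omega)|=m$ and use the standard fact that, given the tree and $m\ge1$, the triples falling in the leaf are i.i.d.\ draws from the restricted law $\mathcal{L}\big((Y,\bs{X},\bs{Z})\mid\bs{Z}\in R\big)$, with $\widehat{\bs\beta}$ a symmetric function of them. The main obstacle is precisely this last passage: because the weighting matrix $\widehat{\bs{\Omega}}$ is itself assembled from the covariates, one must show that the data-dependent reweighting $\widehat{\bs{\Omega}}^{-1}$ does not distort the average—i.e.\ that in expectation over the design the weights $\bs{X}_i\bs{X}_i^{\T}$ decouple from the fluctuations of $\bs\beta(\bs{Z}_i)$ about $\widetilde{\bs\beta}(\bs{z})$, so that the $\bs{\Omega}$-weighted within-leaf average reproduces $\widetilde{\bs\beta}(\bs{z})$. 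This is where the care lies and where the honest construction does its essential work; the preceding steps—the sample-split independence, well-definedness of the OLS fit, and the linearity-in-$Y$ reduction—are routine once the conditioning has been set up correctly.
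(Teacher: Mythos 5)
You set up the proof exactly as the paper does in its first step---honesty from the double-sample split, well-definedness of the within-leaf OLS, and the reduction via linearity in $Y$ to
\[
\E\big[\widehat{\bs\beta}(\bs{z},\omega)\,\big|\,\text{tree},\text{design}\big]
=\widehat{\bs{\Omega}}(\bs{z},\omega)^{-1}\,\frac{1}{|\cA(\bs{z},\omega)|}\sum_{i\in\cA(\bs{z},\omega)}\bs{X}_i\bs{X}_i^{\T}\bs\beta(\bs{Z}_i)
\]
---but you stop precisely where the theorem actually has to be proved. Showing that the expectation of this $\widehat{\bs{\Omega}}^{-1}$-weighted average equals $\widetilde{\bs\beta}(\bs{z})=\E[\bs\beta(\bs{Z})\mid\bs{Z}\in R(\bs{z},\omega)]$ is not a routine integration: having conditioned on the \emph{full} design $\{(\bs{X}_i,\bs{Z}_i):i\in\cA\}$, both the weights $\bs{X}_i\bs{X}_i^{\T}$ and the values $\bs\beta(\bs{Z}_i)$ are fixed, the weighted average is generically different from $\widetilde{\bs\beta}(\bs{z})$ pathwise, and no amount of averaging over the design by itself produces the claimed decoupling. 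Your proposal names this as ``the main obstacle'' and ``where the care lies,'' but supplies no argument for it; since this is the only nontrivial step, the proposal as written is incomplete.

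The idea that closes the gap---and it is the paper's key move---is to condition on a \emph{coarser} sigma-algebra than the full design: let $\mathcal{G}$ be generated by $\bs{X}_1,\dots,\bs{X}_n$ together with the leaf-membership indicators $\1\{\bs{Z}_1\in R(\bs{z},\omega)\},\dots,\1\{\bs{Z}_n\in R(\bs{z},\omega)\}$ only, keeping the exact locations $\bs{Z}_i$ random. Then $\widehat{\bs{\Omega}}(\bs{z},\omega)^{-1}$ and every weight $\bs{X}_i\bs{X}_i^{\T}$ are $\mathcal{G}$-measurable, while for $i\in\cA(\bs{z},\omega)$ the fluctuation satisfies $\E[\bs\beta(\bs{Z}_i)-\widetilde{\bs\beta}(\bs{z})\mid\mathcal{G}]=\E[\bs\beta(\bs{Z})-\widetilde{\bs\beta}(\bs{z})\mid\bs{Z}\in R(\bs{z},\omega)]=0$ by the very definition of $\widetilde{\bs\beta}$. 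Writing
\[
\widehat{\bs\beta}(\bs{z},\omega)-\widetilde{\bs\beta}(\bs{z})
=\widehat{\bs{\Omega}}(\bs{z},\omega)^{-1}\left[\sum_{i\in\cA(\bs{z},\omega)}\bs{X}_i\bs{X}_i^{\T}\big(\bs\beta(\bs{Z}_i)-\widetilde{\bs\beta}(\bs{z})\big)+\sum_{i\in\cA(\bs{z},\omega)}\bs{X}_i\epsilon_i\right]
\]
and taking $\E[\,\cdot\mid\mathcal{G}]$ annihilates both terms (the second by honesty, since $R(\bs{z},\omega)$ is built from the $\cB$-sample and is independent of the $\epsilon_i$), after which the tower property gives $\E[\widehat{\bs\beta}(\bs{z})]=\widetilde{\bs\beta}(\bs{z})$. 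Note, finally, that this decoupling quietly uses that conditioning on $\bs{X}_i$ \emph{in addition to} the event $\{\bs{Z}_i\in R(\bs{z},\omega)\}$ does not shift the conditional mean of $\bs\beta(\bs{Z}_i)$; so your instinct about where the difficulty sits is correct---but the resolution is a specific, deliberately coarse choice of conditioning, not something that conditioning on the full design can deliver.
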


\begin{theorem}[Rate of Convergence]\label{thm:beta_converge_rate}
    Under Assumptions \ref{ass:local-linear_spec} and \ref{ass:main}, if $k\gtrsim s^\epsilon$ for some  $\epsilon\in(0,1)$  then for any $\delta \in (0,1)$
\[
|\widehat{\bs\beta}(\bs{z},\omega) - \bs\beta(\bs{z})|\lesssim_\P k^{-\frac{1}{2}}\left(\frac{s}{k}\right)^{\epsilon/2} + \left(\frac{k}{s}\right)^{(1-\delta)\frac{K(\alpha)\pi}{d_Z}};\qquad \bs{z}\in[0,1]^d,  
\] 
where $K(\alpha):= \frac{\log ((1-\alpha)^{-1})}{\log(1/\alpha)}\in(0,1)$.
If further
then , for $2\leq p\leq q$,
\begin{align*}
    \left[\int_{[0,1]^d}|\widehat{\bs\beta}(\bs{z},\omega) - \bs\beta(\bs{z})|^pf(\bs{z}) dz \right]^{1/p} &\lesssim_\P k^{-\frac{1}{2}}\left(\frac{s}{k}\right)^{\epsilon/2} + \left(\frac{k}{s}\right)^{(1-\delta^*)\frac{K(\alpha)\pi}{d_Z}},
\end{align*}
where $\delta^*\in (0,1)$ is defined in Lemma \ref{lem:upper_bound_diameter}. 
\end{theorem}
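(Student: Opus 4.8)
The plan is to split the pointwise error into a stochastic (variance) part and a within-leaf approximation (bias) part, exploiting the honesty of the double-sample construction in Algorithm~\ref{alg:cap}: since the leaf $R(\bs{z},\omega)$ is determined only by $\cB$ and $\omega$, conditionally on the partition the observations indexed by $\cA(\bs{z},\omega)$ are fresh i.i.d.\ draws from the law of $(Y,\bs{X},\bs{Z})$ restricted to $R(\bs{z},\omega)$. Writing $U_i := Y_i-\bs{X}_i^{\T}\bs\beta(\bs{Z}_i)$, so that $\E[U_i\mid\bs{X}_i,\bs{Z}_i]=0$ under Assumption~\ref{ass:local-linear_spec}, and $N:=|\cA(\bs{z},\omega)|$, I would add and subtract $\bs\beta(\bs{Z}_i)$ inside $\widehat{\bs\gamma}$ to obtain
\[
\widehat{\bs\beta}(\bs{z},\omega)-\bs\beta(\bs{z})
= \widehat{\bs\Omega}(\bs{z},\omega)^{-1}\Bigl[\underbrace{\tfrac{1}{N}\textstyle\sum_{i\in\cA(\bs{z},\omega)}\bs{X}_iU_i}_{=:\,\bs{T}_1}
+\underbrace{\tfrac{1}{N}\textstyle\sum_{i\in\cA(\bs{z},\omega)}\bs{X}_i\bs{X}_i^{\T}\bigl(\bs\beta(\bs{Z}_i)-\bs\beta(\bs{z})\bigr)}_{=:\,\bs{T}_2}\Bigr].
\]
Here $\bs{T}_1$ carries the sampling noise and $\bs{T}_2$ is a purely geometric error; this mirrors the unbiasedness decomposition behind Theorem~\ref{thm:unbiasedness}.

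Next I would establish three ingredients. (i) \emph{Normalization.} By Assumption~\ref{ass:main}(d) and the continuity in (c), the population within-leaf matrix $\E[\bs{X}\bs{X}^{\T}\mid\bs{Z}\in R(\bs{z},\omega)]$ is uniformly bounded below for small leaves; concentration of $\widehat{\bs\Omega}(\bs{z},\omega)$ (using the finite fourth moments and $N\asymp k$) then yields $\|\widehat{\bs\Omega}(\bs{z},\omega)^{-1}\|\lesssim_\P 1$, and the density bounds in (b) give $N\asymp k$ with high probability. (ii) \emph{Variance term.} Conditionally on the partition, $\bs{T}_1$ is a normalized sum of mean-zero terms with variance $\asymp 1/N\asymp 1/k$; to get a right-hand side free of $\bs{z}$ I would pass to a maximal inequality over the $\asymp s/k$ leaves of the tree, and the order-four moment assumption produces a polynomial-in-$(s/k)$ inflation, giving the first term $k^{-1/2}(s/k)^{\epsilon/2}$ (the constraint $k\gtrsim s^{\epsilon}$ is precisely what keeps this factor $o(1)$). (iii) \emph{Bias term.} Lipschitz continuity of $\bs\beta$ with constant $L$ gives $|\bs\beta(\bs{Z}_i)-\bs\beta(\bs{z})|\le L\,\diam R(\bs{z},\omega)$, hence $|\bs{T}_2|\lesssim_\P \diam R(\bs{z},\omega)\cdot\tfrac{1}{N}\sum_i\|\bs{X}_i\|^2\lesssim_\P \diam R(\bs{z},\omega)$; invoking Lemma~\ref{lem:upper_bound_diameter} to bound the leaf diameter by $(k/s)^{(1-\delta)K(\alpha)\pi/d_Z}$ delivers the second term. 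Combining (i)--(iii) gives the pointwise rate.

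For the $\mathcal{L}^p$ statement I would raise $|\widehat{\bs\beta}-\bs\beta|\lesssim_\P\|\widehat{\bs\Omega}^{-1}\|\,(|\bs{T}_1|+|\bs{T}_2|)$ to the $p$-th power and integrate against $f$, using Minkowski's inequality and the uniform moment bounds to pass the norm through the two terms. The variance contribution is already uniform in $\bs{z}$, so it integrates to the same $k^{-1/2}(s/k)^{\epsilon/2}$ (since $\int f=1$); the bias contribution becomes the $\mathcal{L}^p(f)$-norm of $\diam R(\cdot,\omega)$, controlled by the integrated diameter bound of Lemma~\ref{lem:upper_bound_diameter} with the constant $\delta^*$, which replaces the arbitrary $\delta$ that was available pointwise.

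The hard part will be the uniform control of the noise term $\bs{T}_1$: with only four moments available while the number of leaves grows like $s/k$, the maximal inequality must be run carefully (via Rosenthal combined with a union bound, or a direct $\mathcal{L}^p$ maximal bound), and it is this step that pins down the exponent $\epsilon/2$ and forces the lower bound $k\gtrsim s^{\epsilon}$. A secondary difficulty is handling the randomness of $N=|\cA(\bs{z},\omega)|$ together with the ratio structure $\widehat{\bs\Omega}^{-1}\widehat{\bs\gamma}$ simultaneously across all leaves, so that $\|\widehat{\bs\Omega}(\bs{z},\omega)^{-1}\|\lesssim_\P 1$ holds uniformly rather than only at a fixed $\bs{z}$. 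The diameter estimate, being the geometric core of the argument, is already isolated in Lemma~\ref{lem:upper_bound_diameter}, so I would simply cite it.
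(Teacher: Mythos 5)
Your skeleton --- the split $\widehat{\bs\beta}(\bs{z},\omega)-\bs\beta(\bs{z})=\widehat{\bs\Omega}(\bs{z},\omega)^{-1}(\bs{T}_1+\bs{T}_2)$, the Cauchy--Schwarz/Lipschitz bound of $\bs{T}_2$ by $\diam R(\bs{z},\omega)$ plus Lemma \ref{lem:upper_bound_diameter}, and the integrated-diameter argument with $\delta^*$ for the $\mathcal{L}^p$ claim --- is the same as the paper's, which routes both $\widehat{\bs\Omega}-\bs\Omega$ and the noise term through a common tree-rate lemma (Lemma \ref{lem:tree_covergence_rate}, applied entrywise with $W_i=X_{i,j}X_{i,k}$ and $W_i=X_{i,j}\epsilon_i$). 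The genuine gap is in your step (ii). The theorem is pointwise in $\bs{z}$ (and must be: Lemma \ref{lem:upper_bound_diameter}(d) rules out uniform-in-$\bs{z}$ convergence for this algorithm), so no maximal inequality over the $\asymp s/k$ leaves is needed; worse, a maximal inequality cannot deliver the claimed exponent. With the moments available (at most $q=4$, and effectively $q=2$ for the entries $X_{i,j}X_{i,k}$ under Assumption \ref{ass:main}(a)), a union bound over $\asymp s/k$ leaves inflates the pointwise rate by $(s/k)^{1/q}\geq(s/k)^{1/4}$; this is exactly the paper's own uniform bound (the $\Delta_\infty$ statement in Lemma \ref{lem:tree_variance_rate}, whose rate is $k^{-1/2}(s/k)^{1/q+\epsilon/2}$), and $(s/k)^{1/4}$ is not $O\bigl((s/k)^{\epsilon/2}\bigr)$ whenever $\epsilon<1/2$. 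So your route proves a strictly weaker statement than the theorem for small $\epsilon$.

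Relatedly, your step (i) claim that $N=|\cA(\bs{z},\omega)|\asymp k$ with high probability is too strong, and it hides exactly where the factor $(s/k)^{\epsilon/2}$ actually comes from. Only the $\cB$-sample count lies in $[k,2k-1]$ by construction; conditionally on the leaf, $N$ is $\mathrm{Binomial}(s,p(\bs{z},\omega))$ with $p(\bs{z},\omega)\asymp(k/s)^{1+o(1)}$, and Lemma \ref{lem:lower_bound}(c) only yields $k(k/s)^{\epsilon}\lesssim_\P N\lesssim_\P k(k/s)^{-\epsilon}$, not $N\asymp_\P k$. The paper's pointwise variance bound is obtained by a Marcinkiewicz--Zygmund inequality conditional on the partition, combined with an inverse-binomial moment bound $\E[(1+\mathrm{Bin}(s,p))^{-q/2}]\lesssim(sp)^{-q/2}$; the resulting rate is $k^{-1/2}(s/k)^{o(1)}$, where the $o(1)$ exponent (from the $\alpha$-regularity estimates behind $p\asymp(k/s)^{1+o(1)}$) is absorbed into $(s/k)^{\epsilon/2}$, and the hypothesis $k\gtrsim s^\epsilon$ serves only to guarantee $sp\to\infty$, not to pay for a union bound. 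Replacing your maximal inequality and the ``$N\asymp k$'' shortcut with this fixed-$\bs{z}$ conditional-binomial argument makes the rest of your outline go through.
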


\begin{remark}
    The second term in the right-hand size is due to the tree's bias. Given the Lipschitz condition, asymptotic unbiasedness is obtained by shrinking the leaf diameter towards zero, which is a consequence of subsequent splits in the node, which, in turn, depend on $k/s$. The second one is due to the variance within each leaf and, not surprisingly, improves for larger $k$. The bias-variance trade-off could be optimized with select $k\asymp s^\eta$ for $\eta>0$ as a function of $d_Z$, $\alpha$, and $\pi$ such that those two terms are of the same order.
\end{remark}


Although a tree construct using Algorithm \ref{alg:cap} is indeed honest and symmetric, there is no guarantee to be a $k$-PNN predictor. To see that, fix a test point $\bs{z}\in[0,1]^d$ and let $R(\bs{z},\omega)$ denote the unique leaf 
containing $z$. Recall that $R(\bs{z},\omega)$ is independent of $\mathcal{A}$-sample due to honesty. Then the number of observations in the $R(\bs{z},\omega)$ wich we denoted by $|\mathcal{A}(\omega,z)|$ follows a Binomial distribution conditional on $R(\bs{z},\omega)$ with $s$ trial an probability of success $p(\bs{z},\omega) :=\P\big(\bs{Z}\in R(\bs{z},\omega)|R(\bs{z},\omega)\big)$.

Therefore, the expected number of $\mathcal{A}$-sample observations in $R(\bs{z},\omega)$ conditional on $R(\bs{z},\omega)$ is given by $s p(\bs{z},\omega)$ while the number of $\mathcal{B}$-sample in $R(\bs{z},\omega)$ is between $k$ and $2k-1$ by construction. The question becomes how $|\mathcal{A}(\bs{z},\omega)|$ relates to $ k$. As it is shown in Lemma \ref{lem:lower_bound}, $|\mathcal{A}(\omega,z)|$ can be upper and lower bound in probability as 
\[
\P\left[s\left(\frac{s}{k}\right)^{-1/K(\alpha)}\lesssim|\mathcal{A}(\bs{z},\omega)|\lesssim s\left(\frac{s}{k}\right)^{-K(\alpha)}\right]\gtrsim 1 - \frac{1}{s(s/k)^{1/K(\alpha)}}.
\]
Set $k\asymp s^\eta$ for $\eta\in [0,1)$. For $\eta\in (1-K(\alpha),1)$ we have that $|A(\bs{z},\omega)|$ diverges as $s\to\infty$ with high probability. Precisely.
\[
\P\left[s^{1-\frac{1-\eta}{K(\alpha)}}\lesssim|\mathcal{A}(\bs{z},\omega)|\lesssim s^{1-(1-\eta) K(\alpha)}\right]\to 1
\]
For $\eta\in[0,1-K(\alpha)]$ and, in particular, $\eta=0$ (fixed $k$), the above bound is vacuous. It is not clear whether, for any fixed $k>0$, it is possible to claim that 
$|\mathcal{A}(\bs{z},\omega)|\gtrsim k$ with high probability.

\begin{theorem}[Random Forest Asymptotic Normality]\label{thm:asym_normality_RF}
    Under Assumptions \ref{ass:local-linear_spec} and \ref{ass:main}, if $k\asymp s^{\eta}$ for $\eta\in(0,1)$ and $s\asymp  n^\omega$. where  $\omega\in \left(\left[(1-\eta)\left(\frac{\pi K(\alpha)}{d_Z}+\frac{1}{K(\alpha)}\right)\right]^{-1},1\right)$, then as $n\to\infty$
    \[
\bs\Sigma(\bs{z})^{-1/2}\left[\overline{\bs\beta}(\bs{z})-\bs\beta(\bs{z})\right] = \frac{s}{n}\sum_{i=1}^n h_i(\bs{z}) \cd \mathsf{N}(0,\bs I_{d_X}),\qquad \bs{z}\in[0,1]^{d_Z},   
    \]
     where $\bs\Sigma(\bs{z}):=\bs{\Omega}^{-1}(\bs{z}) \bs\Lambda(\bs{z})\bs{\Omega}^{-1}$ with 
     \begin{equation}\label{eq:asymp_variance_eX}
    \bs\Lambda(\bs{z}):=\frac{s^2}{n}\E\left[\epsilon^2\theta(\bs{z},\bs{Z})^2 \bs X \bs{X}^ {\T}\right],
\end{equation}
and $\theta(\bs{z},\bs{z}')=\E_\omega[S(\bs{z},\bs{z}',\omega)]$ with $S(\bs{z},\bs{z}',\omega) = |\{j\in\mathcal{A}:Z_j\in R(\bs{z},\omega)\}|^{-1}$ if $z'\in R(\bs{z},\omega)$ and zero otherwise, for $\bs{z},\bs{z}'\in[0,1]^{d_Z}.$
In particular
    \[
    \|\overline{\bs\beta}(\bs{z})-\bs\beta(\bs{z})\|\lesssim_\P n^{-\frac{1}{2}\big[\bs\beta(1-\eta)\big(\frac{\pi K(\alpha)}{d_Z}+  \frac{1}{K(\alpha)}\big) -1\big]}(\log n)^{{d_Z}/2}\to 0;\qquad \bs{z}\in[0,1]^{d_Z}.
    \]
\end{theorem}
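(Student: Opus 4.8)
The plan is to prove asymptotic normality by the now-standard route for honest subsampled forests (as in \citet{sAjTsW2019,rFjTsAsW2021}): linearize the estimator, represent its leading term as an incomplete infinite-order U-statistic, pass to the Hájek projection to obtain the CLT, and control the within-leaf bias and the linearization remainder against the standard error $\bs\Sigma(\bs{z})^{1/2}$. Throughout I write $\epsilon_i:=Y_i-\bs{X}_i^{\T}\bs\beta(\bs{Z}_i)$, so that Assumption \ref{ass:local-linear_spec} gives $\E[\bs{X}_i\epsilon_i\mid\bs{Z}_i]=\bs{0}$.

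First I would linearize. Since $\widehat{\bs\beta}(\bs{z},\omega)-\bs\beta(\bs{z})=\widehat{\bs\Omega}(\bs{z},\omega)^{-1}\big(\widehat{\bs\gamma}(\bs{z},\omega)-\widehat{\bs\Omega}(\bs{z},\omega)\bs\beta(\bs{z})\big)$, decomposing the leaf residual as $Y_i-\bs{X}_i^{\T}\bs\beta(\bs{z})=\epsilon_i+\bs{X}_i^{\T}(\bs\beta(\bs{Z}_i)-\bs\beta(\bs{z}))$ gives
\[
\widehat{\bs\beta}(\bs{z},\omega)-\bs\beta(\bs{z})=\widehat{\bs\Omega}(\bs{z},\omega)^{-1}\frac{1}{|\cA(\bs{z},\omega)|}\sum_{i\in\cA(\bs{z},\omega)}\bs{X}_i\big[\epsilon_i+\bs{X}_i^{\T}(\bs\beta(\bs{Z}_i)-\bs\beta(\bs{z}))\big].
\]
I would then replace $\widehat{\bs\Omega}(\bs{z},\omega)^{-1}$ by $\bs\Omega(\bs{z})^{-1}$, the error being governed by $\|\widehat{\bs\Omega}(\bs{z},\omega)-\bs\Omega(\bs{z})\|$, whose order is already available from the proof of Theorem \ref{thm:beta_converge_rate} (concentration of $\widehat{\bs\Omega}$ on a leaf plus the diameter bound of Lemma \ref{lem:upper_bound_diameter}). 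Averaging over the $B$ trees and recalling the weights $S(\bs{z},\bs{Z}_i,\omega)$, this isolates a leading \emph{noise} term $\bs\Omega(\bs{z})^{-1}\,\frac1B\sum_b\sum_{i\in\cA}S(\bs{z},\bs{Z}_i,\omega_b)\bs{X}_i\epsilon_i$, a \emph{bias} term carrying the Lipschitz increments $\bs\beta(\bs{Z}_i)-\bs\beta(\bs{z})$, and the inverse remainder.

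The core step is the CLT for the noise term. Honesty---$R(\bs{z},\omega)$ is a function of $\cB$ and $\omega$, hence independent of the $\cA$-sample---means that conditional on the tree the summands $\bs{X}_i\epsilon_i$ over $i\in\cA(\bs{z},\omega)$ are i.i.d.\ and mean zero (exactly what underlies Theorem \ref{thm:unbiasedness}). Passing to the infinite-forest limit replaces $\frac1B\sum_b$ by $\E_\omega$, producing the weights $\theta(\bs{z},\bs{Z}_i)=\E_\omega[S(\bs{z},\bs{Z}_i,\omega)]$; the resulting object is an incomplete infinite-order U-statistic, and I would take its Hájek projection $\frac{s}{n}\sum_{i=1}^n h_i(\bs{z})$, where each $h_i(\bs{z})$ is, after pre-multiplication by $\bs\Sigma(\bs{z})^{-1/2}\bs\Omega(\bs{z})^{-1}$, proportional to $\theta(\bs{z},\bs{Z}_i)\bs{X}_i\epsilon_i$. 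A direct second-moment computation using $\E[\bs{X}\epsilon\mid\bs{Z}]=\bs{0}$ and the definition of $\theta$ yields variance $\bs\Lambda(\bs{z})(1+o(1))$ for the noise term, whence $\V(\bs\Sigma(\bs{z})^{-1/2}(\overline{\bs\beta}(\bs{z})-\bs\beta(\bs{z})))\to\bs I_{d_X}$, pinning down \eqref{eq:asymp_variance_eX}. The CLT for the projection then follows from Lindeberg--Feller, whose condition I would verify from the fourth-moment bound in Assumption \ref{ass:main}(a) together with the control of $|\cA(\bs{z},\omega)|$---and hence of $\theta$---supplied by Lemma \ref{lem:lower_bound}.

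I expect the main obstacle to be showing the higher-order terms of the Hoeffding decomposition are negligible against the projection, i.e.\ that the noise variance equals its first-order part up to $1+o(1)$. For subsampled honest forests this hinges on $s/n\to0$ (guaranteed by $\omega<1$), but the genuine difficulty---flagged in the discussion preceding the theorem---is that an Algorithm \ref{alg:cap} tree is \emph{not} a $k$-PNN predictor: the leaf count $|\cA(\bs{z},\omega)|$ is random and only controlled in probability. I would therefore work on the high-probability event $\{s(s/k)^{-1/K(\alpha)}\lesssim|\cA(\bs{z},\omega)|\lesssim s(s/k)^{-K(\alpha)}\}$ of Lemma \ref{lem:lower_bound}, bound the kernel contributions there, and absorb its complement with the accompanying tail bound and the moment assumptions. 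Finally, the bias term is bounded in probability by the leaf diameter by Lipschitz continuity of $\bs\beta$, of order $(k/s)^{(1-\delta^*)K(\alpha)\pi/d_Z}$ by Lemma \ref{lem:upper_bound_diameter}; the admissible range for $\omega$ is precisely what forces $\bs\Sigma(\bs{z})^{-1/2}\times(\text{bias})\to 0$, so the bias and the inverse remainder drop out and the stated convergence holds. The in-probability rate then follows by combining the order of $\bs\Sigma(\bs{z})^{1/2}$ (read off from the $|\cA(\bs{z},\omega)|$ bounds) with the bias order, inserting $k\asymp s^\eta$ and $s\asymp n^\omega$, and paying the $(\log n)^{d_Z/2}$ factor needed to make the diameter and leaf-count bounds hold uniformly.
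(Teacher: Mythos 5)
Your proposal follows essentially the same route as the paper's proof: the same linearization via the tree/forest decomposition with $\widehat{\bs\Omega}^{-1}$ replaced by $\bs\Omega(\bs{z})^{-1}$, a Hájek-projection CLT for the generalized (infinite-order) U-statistic formed by the noise term $\frac1B\sum_b|\cA(\bs{z},\omega_b)|^{-1}\sum_{i\in\cA(\bs{z},\omega_b)}\bs{X}_i\epsilon_i$ (which the paper packages as Lemma \ref{lem:generic_RF_assymptotic_normality}, with the variance-ratio/higher-order control of Lemma \ref{lem:variance_ratio} carried out exactly on the high-probability leaf-count event of Lemma \ref{lem:lower_bound} that you identify), and the same comparison of the Lipschitz bias, bounded by the leaf diameter, against $\|\bs\Lambda(\bs{z})\|^{1/2}$, with the admissible range of $\omega$ ensuring this ratio vanishes before concluding by Slutsky. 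The plan is correct and matches the paper's argument in both structure and the role played by each auxiliary lemma.
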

\begin{remark}
    Converge of each tree is necessary for the random forest convergence. Hence, the requirement to set $k\asymp s^\eta$ according to Theorem \ref{thm:asym_normality_RF}. It is interesting to compare the rate of convergence of a (single) tree with $s=n$ with the rate of convergence of the forest with the restrictions to ensure asymptotic normality.
\end{remark}

We propose to estimate the covariance matrix $\bs\Sigma(\bs{z})$ appearing in Theorem \ref{thm:asym_normality_RF} using
\begin{equation}\label{eq:covariance_estimator}
    \widehat{\bs\Sigma}(\bs{z}):=[\widehat{\bs{\Omega}}(\bs{z})]^{-1}\widehat{\bs\Lambda}(\bs{z}) [\widehat{\bs{\Omega}}^{-1}(\bs{z})]^{-1},
\end{equation}
where $\widehat{\bs{\Omega}}(\bs{z})$ is given by $\eqref{eq:beta_hat_definition}$. As for an estimator for $\widehat{\bs\Lambda}(\bs{z})$, define the residual function of the RF  by $\bs{z}\mapsto \widehat{\epsilon}_i(\bs{z}):= Y_i - \bs{X}_i^{\T}\overline{\bs\beta}(\bs{z})$ for $i\in[n]$ and $\bs{z}\in[0,1]^{d_Z}$. Note  any pair $\bs{z},\bs{z}'\in[0,1]^{d_Z}$ we observe $B$ independent realizations of $S(\bs{z},\bs{z}',\omega_b)$. So $\theta(\bs{z},\bs{z}')$ can be unbiased estimated by $\widehat{\theta}(\bs{z},\bs{z}'):=\frac{1}{B}\sum_{b=1}^B S(\bs{z},\bs{z}',\omega_b)$. So we propose to estimate $\widehat{\bs\Lambda}(\bs{z})$ using

\begin{equation}\label{eq:jacknife_estimator}
\widehat{\bs\Lambda}(\bs{z}):=\frac{s^2}{n}\sum_{i=1}^n\widehat{\epsilon}_i(\bs{z})^2\widehat{\theta}(\bs{z},\bs{Z}_i)^2\bs{X}_i \bs{X}_i^{\T} = \frac{s^2}{n}\sum_{i=1}^n\widehat{\epsilon}_i(\bs{z})^2\left[\frac{1}{B}\sum_{b=1}^B S(\bs{z},\bs{Z}_i,\omega_b)\right]^2\bs{X}_i \bs{X}_i^{\T}.
\end{equation}

Even when an observation $\bs{Z}_i$ is not used to grow a tree $\omega_b$, we can compute $S(\bs{z},\bs{Z}_i,\omega)$ by checking whether $\bs{Z}_i$ lies in $R(\bs{z},\omega)$ and if so count how many observations are in $R(\bs{z},\omega)$. Since we have $\E[\epsilon\theta(\bs{z},\bs{Z})X]=0$, we might consider the centered version of the estimator below given by
\begin{equation}\label{eq:asymp_var_estimator_centered}
\widehat{\bs\Lambda}_c(\bs{z}):=\widehat{\bs\Lambda}(\bs{z}) - s^2.
\end{equation}

\subsection{Testing for Heterogeneity}\label{S:Testing}

\subsubsection{Generalized Likelihood Ratio Test}\label{S:GLRT}

Suppose we wish to test that the conditional expectation does not depend on $\bs{Z}$ (homogeneous partial effect). Specifically, we are interested in the parametric null.
\[
\mathcal{H}_0:\bs\beta(\bs{z}) = \bs\beta_0\qquad\text{for some $\bs\beta_0\in \R^{d_X}$ and all $\bs{z}\in[0,1]^d$}
\]
against the non-parametric alternative hypothesis $\mathcal{H}_1:\bs\beta(\bs{z})$ is Lipschitz. Following \cite{jFcZjZ2001}; we propose to test $\mathcal{H}_0$ using the Generalized LRT, which is given as (after taking logs)
\[
\Lambda(\mathcal{H}_0):=\frac{n}{2}\log\frac{\RSS_0}{\RSS},
\]
where $\RSS_0 := \sum_{i=1}^n (Y_i - \bs{X}_i^{\T}\widetilde{\bs\beta})^2 $, $\RSS := \sum_{i=1}^n \left[Y_i - \bs{X}_i^{\T}\overline{\bs\beta}(\bs{Z}_i)\right]^2 $ and $\widetilde{\bs\beta}$ is the OLS estimator.

\begin{theorem}[Generalized LRT]\label{thm:LR_Test} Under the same condition of Theorem \ref{thm:asym_normality_RF} and $\mathcal{H}_0$, if further $\V[Y|\bs{X},\bs{Z}]=\sigma^2$ for some positive constant $\sigma^2$, then $\Lambda\stackrel{a}{\sim} \chi^2_{\mu}$ where $\chi^2_\mu$ is a chi-squared distribution with $\mu$ degrees of freedom, in the sense that,
    \[
    \nu^ {-1}\left[\Lambda(\mathcal{H}_0) - \mu + o(\mu)\right]\cd \mathsf{N}(0,1)
    \]
where
\begin{align*}
    \mu &:= d_X \left[2s\E[\theta(\bs{Z},\bs{Z})]+\frac{s^2}{n}\E[\theta(\bs{Z},\bs{Z)}^2] + s^2\E[\theta(\bs{Z},\bs{Z'})^2]\right]\\
    \nu^2 &:= 4 d_X\E\Big[\Big(s\theta(\bs{Z},\bs{Z}') +s^2\E[\theta(\bs{Z},\bs{Z}')\theta(\bs{Z},\bs{Z}'')]\Big)^2 \Big].
\end{align*}
and $\theta(\bs{z},\bs{z}')$ is defined in Theorem \ref{thm:asym_normality_RF}; and $\bs{Z}'$ and $\bs{Z}''$ are independent copies of $\bs{Z}$.
\end{theorem}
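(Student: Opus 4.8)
The plan is to linearize $\Lambda(\mathcal{H}_0)$ into a quadratic form in the regression errors and then invoke a central limit theorem for quadratic forms. Write $\bs\epsilon=(\epsilon_1,\dots,\epsilon_n)^{\T}$ with $\epsilon_i:=Y_i-\bs X_i^{\T}\bs\beta_0$, and set $D:=\RSS_0-\RSS$. Since under $\mathcal{H}_0$ the law of large numbers gives $\RSS_0=n\sigma^2(1+o_\P(1))$ (note $\E[\bs\epsilon^{\T}\bs P\bs\epsilon]=\sigma^2 d_X=O(1)$ for the OLS hat matrix $\bs P$), I would expand
\[
\Lambda=\tfrac{n}{2}\log\!\big(1+D/\RSS\big)=\tfrac{1}{2\sigma^2}\,D+R_n,
\]
and show the remainder $R_n$ shifts the mean by only $o(\mu)$ and fluctuates at order $o_\P(\nu)$. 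This uses that $\mu\asymp\nu^2$ both diverge but strictly slower than $n$ under $s\asymp n^\omega$, $\omega<1$, so the second-order log term $D^2/(4n\sigma^4)$ and the $\RSS_0/(n\sigma^2)$ correction are negligible relative to the $\nu$-scale fluctuations. The problem reduces to the asymptotics of $D/(2\sigma^2)$.

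Next I would identify $D$ as a quadratic form. Under $\mathcal{H}_0$ the restricted residuals satisfy $Y_i-\bs X_i^{\T}\widetilde{\bs\beta}=[(\bs I_n-\bs P)\bs\epsilon]_i$ exactly, so $\RSS_0=\bs\epsilon^{\T}(\bs I_n-\bs P)\bs\epsilon$. Honesty makes the leaf geometry independent of the $\mathcal{A}$-sample responses, so conditionally on the tree structures the fitted values are linear in the responses; since each local OLS reproduces linear functions exactly, the forest smoother $\bs L$ satisfies $(\bs I_n-\bs L)\mathbb{X}\bs\beta_0=0$ and $\RSS=\bs\epsilon^{\T}(\bs I_n-\bs L)^{\T}(\bs I_n-\bs L)\bs\epsilon$. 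Using the influence-function expansion underlying Theorem \ref{thm:asym_normality_RF}, namely $\overline{\bs\beta}(\bs z)-\bs\beta_0=\tfrac{s}{n}\bs\Omega(\bs z)^{-1}\sum_j\theta(\bs z,\bs Z_j)\bs X_j\epsilon_j+o_\P(\cdot)$ (which carries no bias under $\mathcal{H}_0$ because $\bs\beta\equiv\bs\beta_0$, so $\widetilde{\bs\beta}(\cdot)\equiv\bs\beta_0$), I would replace $\bs L$ to leading order by $\bs A$ with entries
\[
A_{ij}=\tfrac{s}{n}\,\theta(\bs Z_i,\bs Z_j)\,\bs X_i^{\T}\bs\Omega(\bs Z_i)^{-1}\bs X_j,
\]
obtaining $D=\bs\epsilon^{\T}\bs Q\bs\epsilon+o_\P(\nu)$ with the symmetric matrix $\bs Q=\bs A+\bs A^{\T}-\bs A^{\T}\bs A-\bs P$.

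The moment computation comes next. Conditioning on $(\bs X,\bs Z)$ and the partitions, $\E[\bs\epsilon^{\T}\bs Q\bs\epsilon\mid\cdot]=\sigma^2\Tr\bs Q$ and $\V[\bs\epsilon^{\T}\bs Q\bs\epsilon\mid\cdot]=2\sigma^4\Tr\bs Q^2+(\E\epsilon^4-3\sigma^4)\sum_iQ_{ii}^2$. I would then integrate $\Tr\bs Q$ and $\Tr\bs Q^2$ over $(\bs X,\bs Z,\omega)$, splitting each into diagonal $(i=j)$ and off-diagonal $(i\neq j)$ parts and eliminating $\bs X$ via $\E[\bs X\bs X^{\T}\mid\bs Z]=\bs\Omega(\bs Z)$ and
\[
\E\big[(\bs X_i^{\T}\bs M\bs X_j)^2\,\big|\,\bs Z_i,\bs Z_j\big]=\Tr\big(\bs M\bs\Omega(\bs Z_j)\bs M^{\T}\bs\Omega(\bs Z_i)\big).
\]
The diagonal of $\bs A$ contributes the self-influence term $2s\,\E\theta(\bs Z,\bs Z)$, and the diagonal/off-diagonal parts of $\bs A^{\T}\bs A$ contribute $\tfrac{s^2}{n}\E\theta(\bs Z,\bs Z)^2$ and $s^2\E\theta(\bs Z,\bs Z')^2$; the factor $d_X$ arises because, since $\theta(\bs z,\bs z')\neq0$ only when $\bs z'$ lies in the leaf of $\bs z$, the diameter bound of Lemma \ref{lem:upper_bound_diameter} together with Lipschitz continuity of $\bs\Omega$ lets me replace $\Tr(\bs\Omega(\bs Z_i)^{-1}\bs\Omega(\bs Z_j))$ by $d_X$ up to $o(1)$ relative error. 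An identical bookkeeping for $\Tr\bs Q^2$ produces the squared weight-combination $s\theta(\bs Z,\bs Z')+s^2\E[\theta(\bs Z,\bs Z')\theta(\bs Z,\bs Z'')]$ inside $\nu^2$, the quartic term being of smaller order once the weights are shown to concentrate; establishing that $\Tr\bs Q$ and $\Tr\bs Q^2$ concentrate about these expectations (a law of large numbers for the $\theta$-indexed $U$-statistics) legitimizes defining $\mu$ and $\nu^2$ through expectations.

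The final and hardest step is the distributional limit $\nu^{-1}\big(\bs\epsilon^{\T}\bs Q\bs\epsilon-\sigma^2\Tr\bs Q\big)\cd\mathsf{N}(0,1)$, which I would obtain from a central limit theorem for generalized quadratic forms of independent variables (de Jong's theorem, or a martingale-difference CLT), after verifying the negligibility condition that the largest eigenvalue of $\bs Q$ is $o_\P(\nu)$ and the fourth-moment ratio vanishes. The main obstacle is precisely this CLT: $\bs Q$ is random and its off-diagonal mass is governed by the overlapping forest weights $\theta$, so $\bs\epsilon^{\T}\bs Q\bs\epsilon$ is a degenerate $U$-statistic rather than a sum of independent terms, and its leading eigenvalues must be controlled using the $k$-lower bound on leaf occupancy recorded before Theorem \ref{thm:asym_normality_RF}. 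A secondary difficulty is the double-sample entanglement: the partition depends on the $\mathcal{B}$-sample responses, so the conditioning that renders $\bs Q$ independent of the estimation errors must be executed with care, showing that the dependence of the splits on $\mathcal{B}$-errors perturbs only $o(\mu)$ and $o(\nu)$ terms because under $\mathcal{H}_0$ the coefficient function is constant and the splits are asymptotically noninformative.
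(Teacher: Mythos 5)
Your proposal follows essentially the same route as the paper's proof: both reduce $\Lambda(\mathcal{H}_0)$ to the difference $\Delta=\RSS_0-\RSS$, substitute the H\'ajek/influence expansion $\overline{\bs\beta}(\bs{z})-\bs\beta_0=\bs{\Omega}^{-1}\tfrac{s}{n}\sum_j\theta(\bs{z},\bs{Z}_j)\bs{X}_j\epsilon_j+o_\P(\cdot)$ (bias-free under $\mathcal{H}_0$), split the resulting quadratic form in the errors into diagonal terms (yielding $\mu$, with the $d_X$ factor arising from traces against $\bs{\Omega}$) and off-diagonal terms (yielding $\nu^2$), and close with a CLT for the degenerate quadratic form. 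Your matrix packaging $\bs{Q}=\bs{A}+\bs{A}^{\T}-\bs{A}^{\T}\bs{A}-\bs{P}$ and explicit appeal to de Jong's theorem are simply a cleaner rendering of the paper's term-by-term computation with $T_{ij}=s\theta(\bs{Z}_i,\bs{Z}_j)-1$ and $\tau$; you also make explicit two steps the paper glosses over, namely the $\log$ linearization $\Lambda=\Delta/(2\sigma^2)+o_\P(\nu)$ and the verification of the quadratic-form CLT conditions.
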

%

\subsubsection{Langrage Multiplier Test} \label{S:LM}

The idea is to explore the fact that for a Lagrange Multiplier (LM) type test, we are only required to estimate the model under the null and obtain a consistent estimator under both the null and alternative. When the null completely characterizes $\bs\beta(\bs{z})$ or when $\bs\beta(\bs{z})$ is known up to finite unknowns (parametric), we propose a somewhat canonical test.

Under $\mathcal{H}_0:\bs\beta(\bs{z})$ is not a function of $z$, we have that $\E[\bs{Z}(Y-\bs{X}^{\T}\widetilde{\bs\beta})=0$ for some unknown $\widetilde{\bs\beta}\in\R^{d_X}$. Let $\widehat{\epsilon}_{i}^ {OLS}:=Y_i - \bs{X}_i^{\T}\widehat{\bs\beta}_{OLS}$ for $i\in[b]$ where $\widehat{\bs\beta}_{OLS}$ is the OLS estimator of $Y$ regressed onto $X$

So, we can use the sample moment condition below as a basis for constructing our test statistic.
\[
M=\sum_{i=1}^n \widehat{\epsilon}_{i}^ {OLS}\bs{Z}_i,
\]
because under $\mathcal{H}_0$
\[
M/\sqrt{n} = L\frac{1}{\sqrt{n}}\sum_{i=1}^n \epsilon_iW_i + o_\P(1);\qquad W_i := (\bs{Z}_i^{\T}, \bs{X}_i^ {\T})^{\T}
\]
where $L:=(I_d: -\E[Z\bs{X}^{T}]\big(\E[\bs{X}\bs{X}^{T}\big)^{-1})$ is an $(d_Z (d_Z+d_X))$ matrix. Since $\frac{1}{\sqrt{n}}\sum_{i=1}^n \epsilon_iW_i\cd \mathsf{N}(0,\E[\epsilon^2WW^ {\T}])$, we have that $M/\sqrt{n}\cd \mathsf{N}(0,L\E[\epsilon^2WW^ {\T}]L^{\T})$ therefore
\[
M^ {\T} (nV)^{-1} M\cd\chi^2_{d_Z};\qquad V:=L\E[\epsilon^2WW^ {\T}]L^{\T}
\]
Let $\widehat{\epsilon}_{i}^{RF}:=Y_i - \bs{X}_i^{\T}\overline{\bs\beta}$ be the residuals of the RF. Under the alternative (and the null) $\overline{\bs\beta}(\bs{z})\cp \bs\beta(\bs{z})$ then $\epsilon_{i}^{RF}\cp\epsilon_i$ for $i\in[n]$. Define the plug-in estimators
\[
\widehat{L}:=\left[\bs{I}_d: -\sum_{i=1}^nZ_i\bs{X}^{\T}_i\left(\sum_{i=1}^n\bs{X}_i\bs{X}^{\T}_i\right)^{-1}\right],\qquad \widehat{V}:=\widehat{L}\frac{1}{n}\sum_{i=1}^ n\left[(\widehat{\epsilon}_{i}^{RF})^2W_iW^{\T}_i\right]\widehat{L}^{\T}
\]

Hence, the test statistics become
\[
T:= M^ {\T} (n \widehat{V})^{-1} M\cd\chi^2_{d_Z}
\]
as $n\to\infty$ under $\mathcal{H}_0$.

Note that the same process works to test $\mathcal{H}_0:\bs\beta(\bs{z})=\bs\beta_0(\bs{z})$ for some known function $\bs{z}\mapsto\bs\beta_0(\bs{z})$.

\subsection{Extension to discrete controls }\label{S:Discrete}

Partition the control variables as $\bs{Z}= (\bs{Z}',\bs{Z}'')$ where $\bs{Z}''= (Z_{1}'',\dots Z_{d_{\bs{Z}''}}'')$ and $\bs{Z}''_j$ are discrete random variables with $m_j\geq 2$ categories  for $j\in[d_{\bs{Z}''}]$. Without loss of generality we may assume that $Z_{j}''$ is supported on $\mathcal{S}_j=\{0,1/(m_j-1),1/(m_j-2), \dots, 1\}$ and thus $\bs{Z}''$ has support on $\mathcal{S} = \bigtimes_{j=1}^{d_{\bs{Z}''}} \mathcal{S}_j$.

Let $R=\bigtimes_{j=1}^{d_Z}[a_j,b_j]$ denote a rectangle in $[0,1]^{d_Z}$. For convenience, we can define the length of a rectangle along a discrete variable $\bs{Z}''_j$ as the fraction of the categories in the rectangle. Specifically $0\leq \diam(R)_j := (|\mathcal{S}_j\cap [a_j,b_j]|-1)/(m_j-1)\leq 1$.

When a discrete variable $Z_j$ is chosen to split we replace $\eqref{eq:MSE_continuos}$ by the condition
\begin{equation}\label{eq:MSE_discrete}
\delta^* \in\argmin_{\delta\in \mathcal{S}_j}\big[ \RSS(\{i\in\cI:Z_{ij}=\delta) \} ) + \RSS(\{i\in\cI:Z_{ij}\neq \delta) \} ) \big],
\end{equation}
and identify the left and right child nodes by the index sets
\begin{equation}\label{eq:index_discrete}
      \cI^{-}:=\{i\in\cI:Z_{ij}= \delta^*\},\quad \cI^+:=\{i\in\cI:Z_{ij}\neq \delta^*\}.
\end{equation}

Fix a test point $\bs{z}=(\bs{z}',\bs{z}'' )\in[0,1]^{d_{\bs{Z}'}}\times \mathcal{S}$ and let $M_j(\bs{z})$ denote the number of splits along the $j$-th discrete variable to form the (unique) leaf containing $z$. Since $s/k\to \infty$ we have that some large $s$
\[
\P(M_j(\bs{x})< m_j-1)\leq \P\left( M_{j}(\bs{z})\leq  \tfrac{(1-\delta)\pi}{d_Z} \frac{\log(s/(2k-1))}{\log(1/(1-\alpha))}\right)\leq \left(\frac{s}{2k-1}\right)^{-\frac{\delta^2\pi^2}{2d_Z^2\log(1/(1-\alpha))}}.
\]
Let $\mathcal{E}_D(\bs{x}) = \bigcap_{j\in[d_Z'']}\{M_j(\bs{x}) = m_j-1\}$, then by the union bound we have
\[
\P(\mathcal{E}_D(\bs{x})) \gtrsim 1 -d_Z''\left(\frac{s}{k}\right)^{-\frac{\delta^2\pi^2}{2d_Z^2\log(1/(1-\alpha))}}
\]
i.e., all discrete variables have a single class in the leaf $R(\bs{z},\omega)$ with probability at least $1-\left(\frac{s}{k}\right)^\frac{-\delta^2\pi^2}{2d_Z^2\log(1/(1-\alpha))}$. Hence, the bias on the leaf $R(\bs{z},\omega)$ can be upper bounded on $\mathcal{E}_D(\bs{x})$ as
\[
\max_{(u',u'')\in R((z',z''),\omega)}\|\bs\beta(u',u'')-\bs\beta(z',z'')\|\leq \|\bs\beta(u',z'')-\bs\beta(z',z'')\|\leq C\diam(R(\bs{z},\omega))
\]

\begin{assumption}\label{ass:main_discrete}
We consider the following conditions:
\begin{enumerate}
    \item[(b)] (Conditional Density)  We can partition $\bs{Z}= (\bs{Z}', \bs{Z}'')$ such that $\bs{Z}'$ has a conditional density with respect to $\bs{Z}''$, which we denote by $f_{\bs{Z}'|\bs{Z}''}$, and  $1/C\leq f_{\bs{Z}'|\bs{Z}''}\leq C$ for some constant $C>0$. Finallly, $\P(\bs{Z}''=\bs{z}'')>0$ for all $z''\in\mathcal{S}$. 
    \item[(c)] (Smoothness) Let $z=(z',z'' )\in[0,1]^{d_{Z_C}} \mathcal{S}$. We assume that $z'\mapsto \bs\beta(z',z'')$ $z'\mapsto \bs{\Omega}(z',z'')$, $z'\mapsto \E[\bs{X}(Y-\E[Y|\bs{X},\bs{Z}])|Z_C=\bs{z}', Z_D=\bs{z}'']$ and $\bs{z}\mapsto \V[\bs{X}(Y-\E[Y|\bs{X},\bs{Z}]|Z_C=\bs{z}', Z_D=\bs{z}'']$ are Lipschitz-continuous for every $z''\in\mathcal{S}$, and $\bs{z}\mapsto \V[X_jX_k|\bs{Z}=\bs{z}]$ and $\bs{z}\mapsto \V[X_j(Y-\E[Y|\bs{X},\bs{Z}]|\bs{Z}=\bs{z}]$ are bounded for $j,k\in[d_Z]$;
\end{enumerate}
\end{assumption}

\begin{theorem}[Discrete Control Variables]\label{thm:discrete}
    Under Assumptions \ref{ass:local-linear_spec} and \ref{ass:main} with \ref{ass:main}(b) replaced by \ref{ass:main_discrete}(b) and \ref{ass:main}(c) replaced by \ref{ass:main_discrete}(c), Theorems \ref{thm:unbiasedness}-\ref{thm:asym_normality_RF} hold.
\end{theorem}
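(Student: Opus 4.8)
The plan is to argue that each of Theorems \ref{thm:unbiasedness}--\ref{thm:asym_normality_RF} reduces to its continuous-variable counterpart once we condition on the event $\mathcal{E}_D(\bs{z})$ on which every discrete coordinate has been split down to a single category inside the leaf $R(\bs{z},\omega)$. The concentration bound $\P(\mathcal{E}_D(\bs{z})) \gtrsim 1 - d_Z''(s/k)^{-\delta^2\pi^2/(2 d_Z^2 \log(1/(1-\alpha)))}$ derived above shows this event holds with probability tending to one, and its complement decays polynomially in $s/k$, fast enough to be absorbed into the error terms of each theorem. On $\mathcal{E}_D(\bs{z})$ the leaf is pure in the discrete coordinates, so the problem effectively becomes a continuous-control problem in $\bs{Z}'$ with the discrete level $\bs{z}''$ frozen.

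First, for unbiasedness (Theorem \ref{thm:unbiasedness}), I would observe that the proof relies only on honesty: since $\widehat{\bs\beta}(\cdot,\omega)$ is computed on the $\mathcal{A}$-sample, which is independent of the partition grown on $\mathcal{B}$, the identity $\E[\widehat{\bs\beta}(\bs{z})\mid R(\bs{z},\omega)] = \E[\bs\beta(\bs{Z})\mid \bs{Z}\in R(\bs{z},\omega)] = \widetilde{\bs\beta}(\bs{z})$ holds regardless of whether $\bs{Z}$ has atoms, provided $\bs{\Omega}(\bs{z})$ is nonsingular (Assumption \ref{ass:main}(d)). No use is made of a density for $\bs{Z}$, so this theorem carries over verbatim with $\bs{Z}''$ discrete.

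Next, for the rate of convergence (Theorem \ref{thm:beta_converge_rate}), I would leave the variance term untouched, as it depends only on the count $|\mathcal{A}(\bs{z},\omega)|$ and the fourth-moment assumption, neither of which sees the discrete/continuous split, and re-derive only the bias term. On $\mathcal{E}_D(\bs{z})$ the bias collapses to $\|\bs\beta(u',z'')-\bs\beta(z',z'')\| \le C\,\diam(R(\bs{z},\omega))$ as displayed above, i.e. it is governed entirely by the continuous diameter. Since isolating the discrete categories consumes at most $\sum_{j}(m_j-1)$ splits, a fixed constant, the continuous coordinates still receive $\Theta(\log(s/k))$ splits and the diameter bound of Lemma \ref{lem:upper_bound_diameter} applies with $\bs{Z}'$ in place of $\bs{Z}$. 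Off $\mathcal{E}_D(\bs{z})$ the bias is at most the Lipschitz constant times the unit diameter, multiplied by a probability decaying faster than any term in the stated rate, hence negligible; the $\mathcal{L}^p$ bound follows by the same conditioning after integrating against $f_{\bs{Z}'|\bs{Z}''}$, which is bounded above and below by Assumption \ref{ass:main_discrete}(b).

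Finally, for asymptotic normality (Theorem \ref{thm:asym_normality_RF}), the H\'ajek-projection representation $\bs\Sigma(\bs{z})^{-1/2}[\overline{\bs\beta}(\bs{z}) - \bs\beta(\bs{z})] = \frac{s}{n}\sum_{i} h_i(\bs{z}) + o_\P(1)$ is built from the combinatorial weights $\theta(\bs{z},\bs{z}')$ and the moment matrices $\bs\Omega(\bs{z}),\bs\Lambda(\bs{z})$, all of which remain well-defined when $\bs{Z}''$ is discrete because $\P(\bs{Z}''=\bs{z}'')>0$ guarantees each leaf retains non-vanishing mass. I would verify that the bias is asymptotically negligible, which now follows from the adapted rate above under the same bandwidth scaling required in Theorem \ref{thm:asym_normality_RF}, and that the Lindeberg condition holds, which uses only the fourth-moment bound in Assumption \ref{ass:main}(a); the variance identity \eqref{eq:asymp_variance_eX} is unchanged since its expectation is taken over the joint mixed law of $(\epsilon,\bs{X},\bs{Z})$. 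The main obstacle throughout is controlling the allocation of splits: one must confirm that conditioning on $\mathcal{E}_D(\bs{z})$ neither starves the continuous coordinates of splits nor distorts the diameter bound, which is precisely why the constant bound $\sum_j(m_j-1)$ on discrete splits, together with the sub-Gaussian concentration of $M_j(\bs{z})$ used to establish $\P(\mathcal{E}_D(\bs{z}))$, is essential.
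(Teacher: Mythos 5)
Your proposal is correct and takes essentially the same route as the paper: the paper's own argument in Section \ref{S:Discrete} likewise conditions on the event $\mathcal{E}_D(\bs{z})$ that every discrete coordinate has been split down to a single category, uses the polynomial bound $\P(\mathcal{E}_D(\bs{z}))\gtrsim 1-d_Z''\,(s/k)^{-\delta^2\pi^2/(2d_Z^2\log(1/(1-\alpha)))}$, and on that event reduces the bias to the continuous-coordinate diameter via the conditional Lipschitz assumption, so that Theorems \ref{thm:unbiasedness}--\ref{thm:asym_normality_RF} carry over unchanged. Your added observations (unbiasedness needs no density, the variance term is insensitive to atoms, and isolating the discrete categories costs only $\sum_j(m_j-1)$ splits) are consistent with, and slightly more explicit than, what the paper records.
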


\section{Monte Carlo Simulation}\label{S:Simulation}

We conducted a simulation study to test the validity of our results in finite sample. We consider several specifications with different sample sizes. Specifically, we set the number of observations to be $n\in\{250,500,1000\}$. The dimension $d_Z$ of the variables determining model heterogeneity is $d_Z=\{1,2,3,5\}$. Each simulated model is estimated by the Random Forest as per the Algorithm \ref{alg:cap} with $B=3000$ subsampling replications. We consider a fraction of $s=0.8$ observations in each subsample, $k=s^{1/6}$ and $\alpha=0.005$. The number of Monte Carlo replications is 500.

\subsection{Model Estimation}
 
The simulated models are determined by the general equation
 \begin{equation}
 Y_i=\beta_0(\bs{Z}_i)+\beta_1(\bs{Z}_i)X_i +U_i,
 \end{equation}
where $U_i$ is an independent and normally distributed random variable with zero mean and standard deviation equal to 0.5. The scalar continuous treatment variable is $X_i\sim\textsf{Uniform}(0,1)$. $\bs{Z}_i$ is a random vector of $d_Z$ mutually independent random variables uniformly distributed between 0 and 1. 

We set $\beta_0(Z_i)=0$ for all $i=1,\ldots,n$, and consider the following specifications for the slope parameter $\beta_1(\bs{Z}_i)$:
\begin{itemize}
\item{Model I: $p=1$}
\begin{align*}
\beta_1(\bs{Z}_i) = 5\left[1 + \frac{1}{1+e^{-100(Z_{i}-0.3)}}-\frac{1}{1+e^{-100(Z_{i}-0.7)}}\right].
\end{align*}
Figure \ref{fig:Simul1}(a) illustrates the evolution of $\beta_1(Z_i)$ as a function of $Z_i$. As can be observed from the plot, the slope coefficient smoothly changes between zero and one according to the source of heterogeneity $Z_i$.

\item{Model II: $p=2$}
\begin{align*}
\beta_1(\bs{Z}_i)=2 f_{i,1} f_{i,2} - 2 f_{i,1}  (1-f_{i,2}),
\end{align*}
where 
\begin{align*}
f_{i,1}&=1 + \frac{1}{1+e^{-100(Z_{1,i}-0.3)}}-\frac{1}{1+e^{-100(Z_{1,i}-0.7)}}\quad\text{and}\\
f_{i,2}&=1 + \frac{1}{1+e^{-100(Z_{2,i}-0.3)}}-\frac{1}{1+e^{-100(Z_{2,i}-0.7)}}
\end{align*}
Figure \ref{fig:Simul1}(b) illustrates the evolution of $\beta_1(\bs{Z}_i)$ as a function of $\bs{Z}_i$. As in the case of the first simulated model, the slope coefficient smoothly changes between different regimes.

\item{Model III: $p=3$}
\begin{align*}
\beta_1(\bs{Z}_i)= 2 f_{i,1} f_{i,2}  f_{i,3} - f_{i,1} f_{i,2} (1-f_{i,3}) - 1.5 f_{i,1} (1-f_{i,2}),
\end{align*}
where $f_{i,j}=\frac{1}{1+e^{-100(Z_{j,i}-0.5)}}$, for $j=1,2,3$ and $i=1,\ldots,n$.

The resulting model is a product of logistic functions. Figure \ref{fig:Simul1}(c) illustrates the heterogeneity in the slope coefficient as a function of $Z_1$ and $Z_2$ holding $Z_3$ fixed.

\item{Model IV: $p=5$}
\begin{align*}
\beta_1(\bs{Z}_i)& =f_{i,1} f_{i,2} f_{i,3} - f_{i,1} f_{i,2}(1-f_{i,3}) - 1.5*f_{i,1}(1-f_{i,2}) + 1.5(1-f_{i,1}) f_{i,4} \\&- 0.8(1-f_{i,1})(1-f_{i,4}) f_{i,5} + 0.7(1-f_{i,1})(1-f_{i,4})(1-f_{i,5})
\end{align*}
where $f_{i,j}=\frac{1}{1+e^{-100(Z_{j,i}-0.5)}}$, for $j=1,\ldots,5$ and $i=1,\ldots,n$.

As in the previous models, Model IV also implies a heterogeneity pattern where the slope coefficient smoothly changes among six limiting regimes.

\end{itemize}

The results are reported in Figure \ref{F:simul2} and Tables \ref{T:results1}--\ref{T:results5}. Figure \ref{F:simul2} reports results for Model I where $p=1$. It illustrates the median slope estimation across the Monte Carlo simulations, as well as the 95\% confidence bands. It is evident the estimation is precise in this case.

Table \ref{T:results1} presents the averages derived from the Monte Carlo simulations for various descriptive statistics pertinent to goodness-of-fit assessments. The evaluated statistics include the mean, standard deviation, kurtosis, and skewness. For Model I, Panel (I.a) examines the estimated residuals from the model fit, Panel (I.b) addresses the estimation error associated with the varying intercept of the model, and Panel (I.c) pertains to the estimation error for the varying slope coefficient. The subsequent panels replicate these findings for Models II, III, and IV.

Several facts emerge from the table. First, the model approximation is satisfactory even in samples as small as 250 observations. Notably, the average of the residual standard deviation is close to 0.5, which is the true value. Furthermore, the average estimated kurtosis is approximately three, and the average estimated skewness is close to zero, indicating that the residuals are approximately normally distributed. Additionally, the results suggest that the estimation of the varying intercept is more precise than the slope estimation, as the average standard deviation of the errors is significantly smaller for the former than for the latter. Finally, as expected, the performance of the method slightly deteriorates as the dimension of $\bs{Z}$ increases. 

Tables \ref{T:results2} and \ref{T:results3} present results for several test values pertinent to the vector $\bs{Z}$. We analyze 11 points where $\bs{Z}$ constitutes a $5 1$ that spans from a vector of zeros to a vector of ones, with $0.1$ increments. For simplicity and without loss of generality, we assume that all elements of the test points are equal. Table \ref{T:results2} reports the average bias and the mean squared error (MSE) for the varying intercept coefficient evaluated at each test point for different sample sizes. It is evident that the intercept estimation is precise and improves with increasing sample size. Table \ref{T:results3} reports the average bias and the mean squared error (MSE) for the varying slope coefficient evaluated at each test point for different sample sizes. The results in the table corroborate our previous conclusion that as the sample size increases, the method's performance improves. Finally, there is evidence that the function approximation is better for points closer to the center of the distribution of $\bs{Z}$.

\subsection{Inference}
We present coverage results in Tables \ref{T:results4} and \ref{T:results5}. Specifically, we provide the average 90\% and 95\% coverage across Monte Carlo simulations for both the intercept and the slope parameters. Table \ref{T:results4} details the coverage for the intercept, while \ref{T:results5} presents the corresponding results for the slope parameter. It is evident from the tables that the coverage probabilities for the intercept are close to the expected values, even with small sample sizes and an increased number of variables. This finding supports our previous simulation results, indicating minimal bias in intercept estimation. Conversely, the coverage for the slope parameter is significantly underestimated, particularly for values that lie farther from the center of the covariate distribution. This observation aligns with the larger biases reported in Table \ref{T:results3}. It is important to mention that such narrow coverage probabilities have also been reported in the simulations in \citet{rFjTsAsW2021}.

\subsection{Linarity Testing}
Finally, to evaluate the finite sample performance of the Lagrange Multiplier homogeneity (linearity) test described in Section \ref{S:LM}. We simulated linear (homogeneous) models with $p\in\{1,2,3,5\}$ covariates. The results are reported in Figure \ref{F:size}, which illustrates the size discrepancy relative to the nominal size. As observed, the size distortions remain negligible. 

\section{Empirical Illustrations}\label{S:Empirical}
In this section, we illustrate our proposed methodology utilizing two distinct datasets. The first is a synthetic dataset generated by a widely used model specifically designed to evaluate nonparametric methods. The second is a real dataset concerning the economic convergence of Brazilian municipalities.

\subsection{Simulated data}

The first dataset consists of observations generated according to the following model:
\begin{align}
Y_i &= \beta(\bs{Z}_i)X_i + U_i,\quad i=1,\ldots,n\\
\beta(\bs{Z}_i)&=10\sin(\pi Z_{1,i} Z_{2i})+20(Z_{3,i}-0.5)^2+10 Z_{4,i}+5 Z_{5,i},
\end{align}
where $\bs{Z}_i$ is a vector of mutually independent uniform random variables taking values on $[0,1]^5$, and $U_i$ is a zero-mean normally distributed random variable with unit variance. The model for $\beta(\bs{Z}_i)$ has been widely employed in the literature to evaluate semi-parametric models. In this context, heterogeneity is jointly influenced by interactions, quadratic forms, and a robust linear signal. Figure \ref{F:emp1} illustrates the three sources of heterogeneity.
See, for example, \citet{rFjTsAsW2021} for a similar data-generating process.     

To evaluate the performance of the model presented in this paper, we consider various sample sizes (2000, 4000, 8000, 16000, 32000). Figure \ref{F:empdist} illustrates the empirical distribution of $\bs\beta(\bs Z_i)$, $i=1, \ldots, 32000$. We estimate the locally linear random forest model using the same hyperparameter settings as those utilized in the Monte Carlo simulation.

Figure \ref{F:scatter} shows results concerning the estimation parameters. Panels (a) and (b) illustrate the scatter plot of the fitted $\bs\beta(\bs{Z})$ against the true values for $n=2000$ and $n=36000$, respectively. A linear regression line is also included. Figure \ref{F:biasmse} reports the evolution of the bias and the MSE as a function of the sample size. As we can observe from both figures, the estimation improves as the sample size increases, which aligns with established statistical theory. However, as anticipated by our theoretical results, the convergence to the true values is notably slow.

\subsection{Economic growth and convergence among Brazilian municipalities}

We illustrate our methodology by testing heterogeneity in the convergence among Brazilian municipalities between 1970 and 2000. Our starting point is the simplified convergence equation presented in \citet{rjBxS1992}:
\begin{equation}\label{converg}
\log \left( \frac{Y_{i,t}}{Y_{i,t-1}}\right) =a_{i}+\gamma_i \log
\left(Y_{i,t-1}\right)+\phi_{i}\left( t-1\right)+U_{i,t},
\end{equation}
where $Y_{i,t}$ is the per capita income of region $i$ in period $t$, $a_{i}$ is associated with the steady-state level of per capita income and the rate of technological progress, $\phi_{i}$ is a parameter related to the time trend determined by the technological progress, and $U_{i,t}$ is the random term. Convergence corresponds to the parameter $\gamma_i$.

From a conceptual point of view, two alternative assumptions determine the most important distinction of convergence concepts. First, we can assume that $a_{i}=a$, $\gamma_i=\gamma$, and $\phi_{i}=\phi $, i.e., that the basic parameters of preference and technology are the same for all economies represented in the sample. This is when $\gamma<0$ represents \emph{unconditional convergence} - a situation where poorer regions tend unconditionally to grow more quickly than richer ones. Alternatively, we can state a weaker assumption, allowing for possible differences in the steady state across the economies considered and heterogeneity in the convergence rate. In terms of equation (\ref{converg}), $a_{i}$, $\gamma_i$ and $\phi _{i}$ are allowed to vary across different regions. 

We estimate (\ref{converg}) in a cross-section setup, where there is no identifiable time trend, and we are not able to distinguish between $\phi_{i}$ and $a_{i}$. Thus, we estimate the following equation:
\begin{equation}\label{converg2}
\log \left( \frac{Y_{i,2000}}{Y_{i,1970}}\right) =\alpha_{i}+\gamma_i\log\left(Y_{i,1970}\right)+U_{i,2000}.  
\end{equation}

The data originate from the Brazilian Demographic Censuses conducted in 1970 and 2000. The geographical units were adjusted to account for the reorganization of Brazilian territory throughout this time frame. In 1970, Brazil was composed of 3,951 municipalities. By 2000, the count had increased to 5,507 municipalities. Consequently, all data collected in 2000 were aggregated to align with the municipal structure as it existed in 1970. Our dependent variable is defined as the average growth in per capita income from 1970 to 2000 for each municipality. The independent variable utilized in this analysis is the logarithm of the per capita income level recorded in 1970. 

Figure \ref{fig:geo1} illustrates the differences across municipalities in terms of growth rate. Figure \ref{fig:geo1} shows that the variations in growth do not coincide with the administrative state frontiers. There are substantial variations within many of the Brazilian states.

We employ the semi-parametric approach outlined in the previous section to assess conditional convergence. Variations in preferences and technological parameters are endogenously incorporated into the analysis through geographical proximities. The underlying assumption suggests that cities situated in close proximity experience similar steady states. Within our modeling framework, we estimate (\ref{converg2}), recognizing that $\alpha _{i}$ represents a semi-parametric function of the latitude and longitude coordinates. The results are displayed in Figures \ref{fig:geo2} and \ref{fig:geo3}, which report the heterogeneity patterns in the intercept and the convergence parameter. The estimated geographic heterogeneity varies remarkably, with notable clusters of high-income municipalities in the central and southern parts of the country. Significant geographic differences across municipalities do not coincide with the geographic structure of the Brazilian states.

\section{Conclusions}\label{S:Conclusions}
This paper presents a semi-parametric framework for estimating heterogeneous partial effects, combining the flexibility of machine learning techniques with the interpretability of traditional parametric models. By utilizing a Random Forest-based methodology, we provide a robust and adaptable approach to capturing complex heterogeneity in the relationship between explanatory variables and outcomes.

Our theoretical contributions establish key consistency and asymptotic normality results, ensuring the reliability of our estimator. Importantly, our framework accommodates both continuous and discrete covariates while maintaining desirable statistical properties. The Monte Carlo simulations demonstrate the method’s accuracy, even in moderate sample sizes, and highlight the precision of our approach in recovering varying intercepts and slopes. Additionally, our empirical analysis of Brazilian municipal economic convergence underscores the practical relevance of our method, revealing substantial geographic heterogeneity in growth dynamics.

Future research can extend this methodology to settings with dependent data and high-dimensional covariates, broadening its applicability to dynamic panel models and network data. The current approach can accommodate high-dimensional controls ($\bs{Z}$) under additional conditions on the underlying data-generating process. Specifically, it would require that (i) only a small number of controls are relevant to the model (sparsity assumption) and (ii) the relevant controls are independent of the irrelevant controls and the outcome variable. The latter is a strong assumption in most applications, so we did not pursue this route here. 

Regarding the dependent data across observations, the subsampling and sample split steps must be adjusted to preserve both the data dependence structure and the independence of the two samples. A block-subsampling technique is a natural candidate; however, implementing this would significantly alter the proof techniques for the subsequent steps and might obscure the main idea of the paper. Furthermore, refinements in inference procedures for heterogeneous effects, including the construction of a uniformly valid confidence interval, would enhance the model’s applicability in applied research. Additionally, considering Lemma \ref{lem:upper_bound_diameter}(d), a significant modification to the algorithm would be necessary to achieve uniform convergence of the proposed estimators.
 
Overall, this paper offers a flexible, interpretable, and computationally efficient tool for studying heterogeneity in the partial effect of a variable of interest, bridging the gap between parametric and nonparametric estimation. 

\begin{table}[H]
\caption{Godness-of-Fit}
\label{T:results1}
\centering{}
\resizebox{\textwidth}{!}{
\begin{threeparttable}
This table reports the average across the Monte Carlo simulations for several descriptive statistics related to goodness-of-fit. The statistics are the mean, the standard deviation, the kurtosis, and skewness. Panel (I.a) considers the estimated residuals from the model fit. Panel (I.b) considers the estimation error for the varying intercept of the model. Panel (I.c) is related to the estimation error for the varying slope coefficient. The remaining panels show the same results for models II, III, and IV.
\begin{tabular}{@{}lccccccccccccccc@{}}
\toprule
& \multicolumn{3}{c}{Panel (I.a): $p=1$} && \multicolumn{3}{c}{Panel (II.a): $p=2$} && \multicolumn{3}{c}{Panel (III.a): $p=3$} && \multicolumn{3}{c}{Panel (IV.a): $p=5$} \\
& \multicolumn{3}{c}{\underline{\textbf{error term}}} && \multicolumn{3}{c}{\underline{\textbf{error term}}} && \multicolumn{3}{c}{\underline{\textbf{error term}}} && \multicolumn{3}{c}{\underline{\textbf{error term}}}\\
          & \multicolumn{3}{c}{Sample Size} && \multicolumn{3}{c}{Sample Size} && \multicolumn{3}{c}{Sample Size} && \multicolumn{3}{c}{Sample Size}\\
          \cmidrule(lr){2-4}
          \cmidrule(lr){6-8}
          \cmidrule(lr){10-12}
          \cmidrule(lr){14-16}
          & 250 & 500 & 1000 && 250 & 500 & 1000 && 250 & 500 & 1000 && 250 & 500 & 1000\\
\midrule
mean               &  -0.0019  & -0.0002 &   0.0001 &&  -0.0087 &  -0.0069 &  -0.0047 &&   0.0024 &   0.0014  &  0.0009  &&  0.0004  &  0.0002 &   0.0001 \\
standard deviation &   0.4506  &  0.4458 &   0.4568 &&   0.6207 &   0.5058 &   0.4807 &&   0.4750 &   0.4533  &  0.4571  &&  0.5638  &  0.5235 &   0.5154 \\
kurtosis           &   3.0154  &  3.0182 &   3.0346 &&   3.9624 &   3.3529 &   3.0946 &&   3.0793 &   3.1119  &  3.0315  &&  2.8920  &  2.9941 &   2.9761 \\
skewness           &   0.0128  &  0.0015 &   0.0027 &&   0.5457 &   0.2077 &   0.0683 &&   0.0710 &   0.0337  &  0.0182  &&  0.0102  &  0.0095 &   0.0126 \\
\midrule
& \multicolumn{3}{c}{Panel (I.b): $p=1$} && \multicolumn{3}{c}{Panel (II.b): $p=2$} && \multicolumn{3}{c}{Panel (III.b): $p=3$} && \multicolumn{3}{c}{Panel (IV.b): $p=5$} \\
& \multicolumn{3}{c}{\underline{\textbf{intercept}}} && \multicolumn{3}{c}{\underline{\textbf{intercept}}} && \multicolumn{3}{c}{\underline{\textbf{intercept}}} && \multicolumn{3}{c}{\underline{\textbf{intercept}}}\\
          & \multicolumn{3}{c}{Sample Size} && \multicolumn{3}{c}{Sample Size} && \multicolumn{3}{c}{Sample Size} && \multicolumn{3}{c}{Sample Size}\\
          \cmidrule(lr){2-4}
          \cmidrule(lr){6-8}
          \cmidrule(lr){10-12}
          \cmidrule(lr){14-16}
          & 250 & 500 & 1000 && 250 & 500 & 1000 && 250 & 500 & 1000 && 250 & 500 & 1000\\
\midrule
mean               &    -0.0077  & -0.0004  & -0.0019  &&  0.0044  & -0.0037  &  0.0028 &&  -0.0049 &   0.0010  &  0.0001  &&  0.0023 &  -0.0024  & -0.0044 \\
standard deviation &     0.2396  &  0.2383  &  0.1967  &&  0.2483  &  0.2179  &  0.1642 &&   0.1609 &   0.1544  &  0.1256  &&  0.1540 &   0.1460  &  0.1089 \\
kurtosis           &     3.0068  &  3.2621  &  3.0542  &&  3.3724  &  3.4301  &  3.0503 &&   2.9366 &   3.1829  &  3.0051  &&  3.0813 &   3.3344  &  3.0628 \\
skewness           &    -0.0370  &  0.0671  & -0.0195  &&  0.0905  & -0.0007  &  0.0072 &&  -0.0084 &   0.0119  &  0.0058  && -0.0173 &  -0.0068  & -0.0001 \\
\bottomrule
& \multicolumn{3}{c}{Panel (I.c): $p=1$} && \multicolumn{3}{c}{Panel (II.c): $p=2$} && \multicolumn{3}{c}{Panel (III.c): $p=3$} && \multicolumn{3}{c}{Panel (IV.c): $p=5$} \\
& \multicolumn{3}{c}{\underline{\textbf{slope}}} && \multicolumn{3}{c}{\underline{\textbf{slope}}} && \multicolumn{3}{c}{\underline{\textbf{slope}}} && \multicolumn{3}{c}{\underline{\textbf{slope}}}\\
          & \multicolumn{3}{c}{Sample Size} && \multicolumn{3}{c}{Sample Size} && \multicolumn{3}{c}{Sample Size} && \multicolumn{3}{c}{Sample Size}\\
          \cmidrule(lr){2-4}
          \cmidrule(lr){6-8}
          \cmidrule(lr){10-12}
          \cmidrule(lr){14-16}
          & 250 & 500 & 1000 && 250 & 500 & 1000 && 250 & 500 & 1000 && 250 & 500 & 1000\\
\midrule
mean               &     0.0137 &  -0.0013  &  0.0021 &&  -0.0237 &  -0.0061 &  -0.0101  &&  0.0168  &  0.0003 &   0.0034 &&  -0.0020  &  0.0000  &  0.0126\\
standard deviation &     0.4567 &   0.4239  &  0.3499 &&   0.9791 &   0.6639 &   0.4672  &&  0.4841  &  0.3962 &   0.3019 &&   0.7661  &  0.6474  &  0.5455\\
kurtosis           &     3.7132 &   3.3192  &  3.0741 &&   4.0835 &   4.1379 &   3.8797  &&  3.2178  &  3.3337 &   3.3105 &&   1.8627  &  2.0687  &  1.9109\\
skewness           &-   -0.0132 &  -0.0308  & -0.0121 &&   0.8756 &   0.5659 &   0.3234  &&  0.2993  &  0.2660 &   0.2157 &&   0.0338  &  0.0724  &  0.0331\\
\bottomrule
\end{tabular}
\end{threeparttable}
}
\end{table}       

\begin{table}[H]
\caption{Test Points (Goodness-of-Fit -- Intercept)}
\label{T:results2}
\centering{}
\resizebox{\textwidth}{!}{
\begin{threeparttable}
The table shows the average bias and mean squared error for the intercept estimation at different test points (MSE). The first column shows the points considered. For the case where $p>1$, all the values of $\bs{Z}$ are equal to the number indicated in the first column. Panel (I) considers the case of Model I ($p=1$). Panel (II) considers the case of Model II ($p=2$).  Panel (III) considers the case of Model III ($p=3$).  Panel (IV) considers the case of Model IV ($p=5$).   
\begin{tabular}{@{}lccccccccccccccc@{}}
\toprule
& \multicolumn{7}{c}{\underline{Panel I: $p=1$}} && \multicolumn{7}{c}{\underline{Panel II: $p=2$}}\\
&\multicolumn{3}{c}{\underline{(a): Bias}} && \multicolumn{3}{c}{\underline{(b): MSE}} && \multicolumn{3}{c}{\underline{(a): Bias}} && \multicolumn{3}{c}{\underline{(b): MSE}}\\
& \multicolumn{3}{c}{Sample Size} && \multicolumn{3}{c}{Sample Size} && \multicolumn{3}{c}{Sample Size} && \multicolumn{3}{c}{Sample Size}\\
          \cmidrule(lr){2-4}
          \cmidrule(lr){6-8}
          \cmidrule(lr){10-12}
          \cmidrule(lr){14-16}
Point     & 250 & 500 & 1000 && 250 & 500 & 1000 &&  250 & 500 & 1000 && 250 & 500 & 1000\\
\midrule
$0$   &    0.0561  &   0.0062 &   -0.0111  & &   0.0927  &   0.0702  &   0.0563  & &   0.0028 &    0.0016 &    0.0051  & &   0.0766 &    0.0644  &   0.0321\\
$0.1$ &    0.0024  &   0.0102 &   -0.0312  & &   0.0563  &   0.0510  &   0.0424  & &   0.0037 &   -0.0014 &    0.0049  & &   0.0531 &    0.0462  &   0.0245\\
$0.2$ &    0.0081  &   0.0139 &    0.0044  & &   0.0533  &   0.0566  &   0.0407  & &  -0.0080 &   -0.0053 &    0.0269  & &   0.0646 &    0.0416  &   0.0291\\
$0.3$ &   -0.0196  &  -0.0109 &    0.0289  & &   0.1139  &   0.0811  &   0.0441  & &   0.0161 &   -0.0263 &    0.0114  & &   0.1096 &    0.0701  &   0.0320\\
$0.4$ &   -0.0004  &  -0.0257 &    0.0134  & &   0.0574  &   0.0503  &   0.0348  & &   0.0035 &    0.0017 &    0.0075  & &   0.1097 &    0.0543  &   0.0243\\
$0.5$ &    0.0001  &  -0.0113 &   -0.0008  & &   0.0487  &   0.0590  &   0.0373  & &   0.0145 &   -0.0040 &    0.0276  & &   0.0722 &    0.0404  &   0.0219\\
$0.6$ &   -0.0090  &   0.0064 &    0.0146  & &   0.0645  &   0.0529  &   0.0425  & &  -0.0115 &   -0.0137 &   -0.0057  & &   0.1027 &    0.0626  &   0.0250\\
$0.7$ &   -0.0143  &  -0.0076 &   -0.0235  & &   0.1413  &   0.0598  &   0.0447  & &  -0.0252 &    0.0074 &    0.0050  & &   0.0917 &    0.0696  &   0.0254\\
$0.8$ &   -0.0086  &  -0.0264 &   -0.0281  & &   0.0644  &   0.0623  &   0.0399  & &   0.0177 &   -0.0182 &    0.0001  & &   0.0578 &    0.0334  &   0.0255\\
$0.9$ &   -0.0313  &   0.0170 &    0.0023  & &   0.0650  &   0.0493  &   0.0505  & &   0.0205 &   -0.0040 &    0.0173  & &   0.0539 &    0.0380  &   0.0232\\
$1$   &   -0.0393  &   0.0144 &   -0.0029  & &   0.0761  &   0.0779  &   0.0569  & &   0.0426 &   -0.0113 &    0.0159  & &   0.0600 &    0.0567  &   0.0311\\
\midrule
& \multicolumn{7}{c}{\underline{Panel III: $p=3$}} && \multicolumn{7}{c}{\underline{Panel IV: $p=5$}}\\
&\multicolumn{3}{c}{\underline{(a): Bias}} && \multicolumn{3}{c}{\underline{(b): MSE}} && \multicolumn{3}{c}{\underline{(a): Bias}} && \multicolumn{3}{c}{\underline{(b): MSE}}\\
& \multicolumn{3}{c}{Sample Size} && \multicolumn{3}{c}{Sample Size} && \multicolumn{3}{c}{Sample Size} && \multicolumn{3}{c}{Sample Size}\\
          \cmidrule(lr){2-4}
          \cmidrule(lr){6-8}
          \cmidrule(lr){10-12}
          \cmidrule(lr){14-16}
Point     & 250 & 500 & 1000 && 250 & 500 & 1000 &&  250 & 500 & 1000 && 250 & 500 & 1000\\
\midrule
$0$   &    -0.0195  &  -0.0147  &  -0.0041 & &    0.0282  &   0.0269 &    0.0160  & &  -0.0064 &    0.0031  &  -0.0153 &&     0.0344  &   0.0238  &   0.0123 \\
$0.1$ &    -0.0159  &  -0.0130  &   0.0034 & &    0.0249  &   0.0216 &    0.0157  & &  -0.0096 &    0.0060  &  -0.0123 &&     0.0323  &   0.0245  &   0.0128 \\
$0.2$ &    -0.0035  &  -0.0044  &  -0.0055 & &    0.0222  &   0.0223 &    0.0127  & &  -0.0041 &    0.0064  &  -0.0038 &&     0.0264  &   0.0201  &   0.0095 \\
$0.3$ &     0.0066  &   0.0028  &  -0.0015 & &    0.0209  &   0.0251 &    0.0138  & &  -0.0060 &    0.0046  &   0.0024 &&     0.0198  &   0.0164  &   0.0087 \\
$0.4$ &-    0.0110  &   0.0109  &   0.0026 & &    0.0189  &   0.0155 &    0.0119  & &  -0.0018 &    0.0073  &  -0.0055 &&     0.0185  &   0.0123  &   0.0069 \\
$0.5$ &     0.0172  &   0.0062  &  -0.0023 & &    0.0194  &   0.0123 &    0.0083  & &   0.0146 &   -0.0033  &  -0.0092 &&     0.0211  &   0.0095  &   0.0048 \\
$0.6$ &    -0.0011  &  -0.0174  &   0.0009 & &    0.0254  &   0.0180 &    0.0140  & &   0.0215 &    0.0076  &   0.0033 &&     0.0189  &   0.0141  &   0.0087 \\
$0.7$ &    -0.0233  &  -0.0113  &  -0.0103 & &    0.0265  &   0.0171 &    0.0156  & &   0.0164 &    0.0051  &   0.0152 &&     0.0192  &   0.0136  &   0.0099 \\
$0.8$ &    -0.0341  &  -0.0205  &  -0.0124 & &    0.0316  &   0.0221 &    0.0163  & &   0.0035 &   -0.0195  &   0.0062 &&     0.0237  &   0.1428  &   0.0104 \\
$0.9$ &    -0.0216  &  -0.0147  &  -0.0046 & &    0.0409  &   0.0254 &    0.0204  & &   0.0115 &   -0.0038  &   0.0027 &&     0.0255  &   0.0206  &   0.0112 \\
$1$   &    -0.0178  &  -0.0125  &  -0.0067 & &    0.0380  &   0.0342 &    0.0199  & &   0.0072 &   -0.0036  &   0.0003 &&     0.0268  &   0.0199  &   0.0110 \\
\bottomrule
\end{tabular}
\end{threeparttable}
}
\end{table}

\begin{table}[H]
\caption{Test Points (Goodness-of-Fit -- Slope)}
\label{T:results3}
\centering{}
\resizebox{\textwidth}{!}{
\begin{threeparttable}
The table shows the average bias and mean squared error for the intercept estimation at different test points (MSE). The first column shows the points considered. For the case where $p>1$, all the values of $\bs{Z}$ are equal to the number indicated in the first column. Panel (I) considers the case of Model I ($p=1$). Panel (II) considers the case of Model II ($p=2$).  Panel (III) considers the case of Model III ($p=3$).  Panel (IV) considers the case of Model IV ($p=5$).
\begin{tabular}{@{}lccccccccccccccc@{}}
\toprule
& \multicolumn{7}{c}{\underline{Panel I: $p=1$}} && \multicolumn{7}{c}{\underline{Panel II: $p=2$}}\\
&\multicolumn{3}{c}{\underline{(a): Bias}} && \multicolumn{3}{c}{\underline{(b): MSE}} && \multicolumn{3}{c}{\underline{(a): Bias}} && \multicolumn{3}{c}{\underline{(b): MSE}}\\
& \multicolumn{3}{c}{Sample Size} && \multicolumn{3}{c}{Sample Size} && \multicolumn{3}{c}{Sample Size} && \multicolumn{3}{c}{Sample Size}\\
          \cmidrule(lr){2-4}
          \cmidrule(lr){6-8}
          \cmidrule(lr){10-12}
          \cmidrule(lr){14-16}
Point     & 250 & 500 & 1000 && 250 & 500 & 1000 &&  250 & 500 & 1000 && 250 & 500 & 1000\\
\midrule
$0$   &   -0.0693  &   0.0344  &   0.0421  &&    0.2735  &   0.2201 &    0.1663 & &   -0.6764 &   -0.4972 &   -0.4115  & &   0.2912 &    0.2247  &   0.1126\\
$0.1$ &   -0.0095  &  -0.0085  &   0.0464  &&    0.1532  &   0.1674 &    0.1096 & &   -0.5533 &   -0.2788 &   -0.1789  & &   0.1990 &    0.1399  &   0.0800\\
$0.2$ &   -0.0310  &  -0.0413  &   0.0198  &&    0.1829  &   0.1606 &    0.1260 & &   -0.7367 &   -0.3826 &   -0.2830  & &   0.2374 &    0.1470  &   0.0787\\
$0.3$ &   -0.4974  &  -0.2378  &  -0.1641  &&    0.5734  &   0.3932 &    0.2077 & &   -0.8960 &   -0.9171 &   -1.0405  & &   0.9729 &    0.7729  &   0.5702\\
$0.4$ &    0.0131  &   0.0427  &  -0.0127  &&    0.1655  &   0.1768 &    0.0957 & &    1.8125 &    0.9784 &    0.5650  & &   0.5971 &    0.2518  &   0.0853\\
$0.5$ &    0.0092  &   0.0211  &   0.0164  &&    0.1530  &   0.1779 &    0.1128 & &    1.3114 &    0.6609 &    0.3321  & &   0.3396 &    0.1632  &   0.0669\\
$0.6$ &    0.0331  &  -0.0032  &  -0.0280  &&    0.1818  &   0.1641 &    0.1041 & &    1.8445 &    0.9141 &    0.5676  & &   0.6678 &    0.2635  &   0.0929\\
$0.7$ &    0.2560  &   0.0292  &  -0.0460  &&    0.8037  &   0.3561 &    0.2422 & &    0.7413 &    0.3897 &   -0.0711  & &   0.7615 &    0.8367  &   0.5368\\
$0.8$ &   -0.0146  &   0.0259  &   0.0563  &&    0.1765  &   0.1954 &    0.1028 & &   -0.8750 &   -0.4580 &   -0.2882  & &   0.2634 &    0.1475  &   0.0786\\
$0.9$ &    0.0513  &  -0.0293  &  -0.0013  &&    0.1968  &   0.1386 &    0.1509 & &   -0.6706 &   -0.3357 &   -0.2046  & &   0.2004 &    0.1341  &   0.0766\\
$1$   &    0.0660  &  -0.0088  &   0.0298  &&    0.2547  &   0.2396 &    0.1601 & &   -0.7933 &   -0.5480 &   -0.4556  & &   0.2332 &    0.1989  &   0.1218\\
\midrule
& \multicolumn{7}{c}{\underline{Panel III: $p=3$}} && \multicolumn{7}{c}{\underline{Panel IV: $p=5$}}\\
&\multicolumn{3}{c}{\underline{(a): Bias}} && \multicolumn{3}{c}{\underline{(b): MSE}} && \multicolumn{3}{c}{\underline{(a): Bias}} && \multicolumn{3}{c}{\underline{(b): MSE}}\\
& \multicolumn{3}{c}{Sample Size} && \multicolumn{3}{c}{Sample Size} && \multicolumn{3}{c}{Sample Size} && \multicolumn{3}{c}{Sample Size}\\
          \cmidrule(lr){2-4}
          \cmidrule(lr){6-8}
          \cmidrule(lr){10-12}
          \cmidrule(lr){14-16}
Point     & 250 & 500 & 1000 && 250 & 500 & 1000 &&  250 & 500 & 1000 && 250 & 500 & 1000\\
\midrule
$0$   &     0.2818  &   0.1940 &    0.1568  & &   0.0847  &   0.0828 &    0.0457  & &   0.5893 &    0.4986  &   0.5042 & &    0.1206 &    0.0853  &   0.0447 \\
$0.1$ &     0.2228  &   0.1240 &    0.0670  & &   0.0819  &   0.0647 &    0.0477  & &   0.5565 &    0.4389  &   0.4202 & &    0.1170 &    0.0895  &   0.0423 \\
$0.2$ &     0.1771  &   0.0910 &    0.0604  & &   0.0727  &   0.0639 &    0.0361  & &   0.5083 &    0.3906  &   0.3521 & &    0.1004 &    0.0705  &   0.0331 \\
$0.3$ &     0.1643  &   0.1040 &    0.0662  & &   0.0647  &   0.0769 &    0.0428  & &   0.5059 &    0.4038  &   0.3476 & &    0.0744 &    0.0585  &   0.0321 \\
$0.4$ &-    0.2247  &   0.1486 &    0.1013  & &   0.0607  &   0.0468 &    0.0331  & &   0.5655 &    0.4792  &   0.4418 & &    0.0767 &    0.0535  &   0.0317 \\
$0.5$ &    -0.0243  &  -0.0329 &    0.0215  & &   0.0703  &   0.0607 &    0.0491  & &   0.0192 &    0.0441  &   0.0392 & &    0.0905 &    0.0815  &   0.0586 \\
$0.6$ &     1.0165  &   0.8154 &    0.5452  & &   0.1167  &   0.0768 &    0.0478  & &   0.9135 &    0.8141  &   0.6863 & &    0.0959 &    0.0683  &   0.0428 \\
$0.7$ &     0.7940  &   0.5273 &    0.3219  & &   0.1142  &   0.0590 &    0.0499  & &   0.8217 &    0.6647  &   0.5278 & &    0.0884 &    0.0575  &   0.0414 \\
$0.8$ &     0.7404  &   0.4528 &    0.2789  & &   0.1047  &   0.0748 &    0.0483  & &   0.8040 &    0.6659  &   0.5314 & &    0.1012 &    0.2341  &   0.0425 \\
$0.9$ &     0.7893  &   0.5353 &    0.3340  & &   0.1477  &   0.0838 &    0.0621  & &   0.8085 &    0.6992  &   0.6035 & &    0.1139 &    0.0817  &   0.0469 \\
$1$   &     0.8816  &   0.7217 &    0.5808  & &   0.1392  &   0.1114 &    0.0648  & &   0.8349 &    0.7665  &   0.7086 & &    0.1165 &    0.0792  &   0.0465 \\
\bottomrule
\end{tabular}
\end{threeparttable}
}
\end{table}

\begin{table}[H]
\caption{Test Points (Coverage -- Intercept)}
\label{T:results4}
\centering{}
\resizebox{\textwidth}{!}{
\begin{threeparttable}
The table shows the average coverage for the intercept estimation at different test points (MSE). The first column shows the points considered. For the case where $p>1$, all the values of $\bs{Z}$ are equal to the number indicated in the first column. Panel (I) considers the case of Model I ($p=1$). Panel (II) considers the case of Model II ($p=2$).  Panel (III) considers the case of Model III ($p=3$).  Panel (IV) considers the case of Model IV ($p=5$).
\begin{tabular}{@{}lccccccccccccccc@{}}
\toprule
& \multicolumn{7}{c}{\underline{Panel I: $p=1$}} && \multicolumn{7}{c}{\underline{Panel II: $p=2$}}\\
&\multicolumn{3}{c}{\underline{(a): 90\%}} && \multicolumn{3}{c}{\underline{(b): 95\%}} && \multicolumn{3}{c}{\underline{(a): 90\%}} && \multicolumn{3}{c}{\underline{(b): 95\%}}\\
& \multicolumn{3}{c}{Sample Size} && \multicolumn{3}{c}{Sample Size} && \multicolumn{3}{c}{Sample Size} && \multicolumn{3}{c}{Sample Size}\\
          \cmidrule(lr){2-4}
          \cmidrule(lr){6-8}
          \cmidrule(lr){10-12}
          \cmidrule(lr){14-16}
Point     & 250 & 500 & 1000 && 250 & 500 & 1000 &&  250 & 500 & 1000 && 250 & 500 & 1000\\
\midrule
$0$   &    0.9000  &   0.8900  &   0.8750  & &   0.9400 &    0.9650  &   0.9300 & &    0.9050  &   0.9100  &   0.9000  & &   0.9350 &    0.9500  &   0.9650\\
$0.1$ &    0.9000  &   0.9150  &   0.8800  & &   0.9500 &    0.9500  &   0.9300 & &    0.8750  &   0.9200  &   0.8900  & &   0.9450 &    0.9550  &   0.9250\\
$0.2$ &    0.9150  &   0.8750  &   0.9050  & &   0.9650 &    0.9500  &   0.9500 & &    0.9100  &   0.9050  &   0.9000  & &   0.9550 &    0.9450  &   0.9500\\
$0.3$ &    0.8950  &   0.9000  &   0.9000  & &   0.9450 &    0.9350  &   0.9550 & &    0.8950  &   0.8900  &   0.8950  & &   0.9450 &    0.9450  &   0.9600\\
$0.4$ &    0.9250  &   0.8850  &   0.9000  & &   0.9700 &    0.9800  &   0.9500 & &    0.9350  &   0.8950  &   0.9300  & &   0.9450 &    0.9550  &   0.9650\\
$0.5$ &    0.9150  &   0.9000  &   0.8950  & &   0.9600 &    0.9600  &   0.9400 & &    0.9050  &   0.9050  &   0.8850  & &   0.9500 &    0.9500  &   0.9450\\
$0.6$ &    0.8850  &   0.8900  &   0.8850  & &   0.9650 &    0.9500  &   0.9700 & &    0.9000  &   0.9100  &   0.8850  & &   0.9550 &    0.9500  &   0.9450\\
$0.7$ &    0.9350  &   0.9100  &   0.9200  & &   0.9600 &    0.9350  &   0.9650 & &    0.8950  &   0.8700  &   0.9100  & &   0.9350 &    0.9400  &   0.9550\\
$0.8$ &    0.9050  &   0.9100  &   0.8950  & &   0.9350 &    0.9450  &   0.9450 & &    0.8850  &   0.8850  &   0.9100  & &   0.9500 &    0.9500  &   0.9550\\
$0.9$ &    0.8950  &   0.8900  &   0.8850  & &   0.9350 &    0.9500  &   0.9500 & &    0.8900  &   0.9050  &   0.8950  & &   0.9500 &    0.9600  &   0.9600\\
$1$   &    0.8950  &   0.9200  &   0.9000  & &   0.9400 &    0.9550  &   0.9550 & &    0.9000  &   0.9100  &   0.8900  & &   0.9350 &    0.9550  &   0.9550\\
\midrule
& \multicolumn{7}{c}{\underline{Panel III: $p=3$}} && \multicolumn{7}{c}{\underline{Panel IV: $p=5$}}\\
&\multicolumn{3}{c}{\underline{(a): 90\%}} && \multicolumn{3}{c}{\underline{(b): 95\%}} && \multicolumn{3}{c}{\underline{(a): 90\%}} && \multicolumn{3}{c}{\underline{(b): 95\%}}\\
& \multicolumn{3}{c}{Sample Size} && \multicolumn{3}{c}{Sample Size} && \multicolumn{3}{c}{Sample Size} && \multicolumn{3}{c}{Sample Size}\\
          \cmidrule(lr){2-4}
          \cmidrule(lr){6-8}
          \cmidrule(lr){10-12}
          \cmidrule(lr){14-16}
Point     & 250 & 500 & 1000 && 250 & 500 & 1000 &&  250 & 500 & 1000 && 250 & 500 & 1000\\
\midrule
$0$   &    0.9100 &    0.8800  &   0.9000 & &    0.9300  &   0.9250  &   0.9500  & &   0.9000  &   0.8750  &   0.9150 & &    0.9300 &    0.9400  &   0.9400\\
$0.1$ &    0.9150 &    0.8950  &   0.8900 & &    0.9400  &   0.9450  &   0.9450  & &   0.8900  &   0.9000  &   0.9100 & &    0.9250 &    0.9500  &   0.9500\\
$0.2$ &    0.9050 &    0.8950  &   0.9000 & &    0.9600  &   0.9450  &   0.9600  & &   0.8850  &   0.8950  &   0.9050 & &    0.9450 &    0.9650  &   0.9550\\
$0.3$ &    0.8950 &    0.9200  &   0.8950 & &    0.9500  &   0.9500  &   0.9300  & &   0.8950  &   0.8850  &   0.9050 & &    0.9350 &    0.9550  &   0.9600\\
$0.4$ &    0.9050 &    0.9050  &   0.9100 & &    0.9300  &   0.9550  &   0.9450  & &   0.8850  &   0.8900  &   0.9150 & &    0.9550 &    0.9450  &   0.9450\\
$0.5$ &    0.9150 &    0.8900  &   0.9050 & &    0.9650  &   0.9550  &   0.9600  & &   0.9300  &   0.9100  &   0.8800 & &    0.9700 &    0.9400  &   0.9500\\
$0.6$ &    0.9100 &    0.9100  &   0.8900 & &    0.9550  &   0.9650  &   0.9700  & &   0.8800  &   0.9050  &   0.9050 & &    0.9600 &    0.9450  &   0.9500\\
$0.7$ &    0.9100 &    0.9050  &   0.9050 & &    0.9600  &   0.9600  &   0.9450  & &   0.9100  &   0.9150  &   0.9000 & &    0.9450 &    0.9600  &   0.9450\\
$0.8$ &    0.8850 &    0.9100  &   0.9050 & &    0.9450  &   0.9550  &   0.9550  & &   0.9000  &   0.9950  &   0.8900 & &    0.9350 &    0.9950  &   0.9400\\
$0.9$ &    0.9350 &    0.9150  &   0.9200 & &    0.9750  &   0.9350  &   0.9600  & &   0.9050  &   0.9250  &   0.9000 & &    0.9400 &    0.9550  &   0.9600\\
$1$   &    0.8750 &    0.8950  &   0.9050 & &    0.9500  &   0.9350  &   0.9650  & &   0.8850  &   0.9050  &   0.9050 & &    0.9300 &    0.9650  &   0.9600\\
\bottomrule
\end{tabular}
\end{threeparttable}
}
\end{table} 

\begin{table}[H]
\caption{Test Points (Coverage -- Slope)}
\label{T:results5}
\centering{}
\resizebox{\textwidth}{!}{
\begin{threeparttable}
The table shows the average coverage for the slope estimation at different test points (MSE). The first column shows the points considered. For the case where $p>1$, all the values of $\bs{Z}$ are equal to the number indicated in the first column. Panel (I) considers the case of Model I ($p=1$). Panel (II) considers the case of Model II ($p=2$).  Panel (III) considers the case of Model III ($p=3$).  Panel (IV) considers the case of Model IV ($p=5$).
\begin{tabular}{@{}lccccccccccccccc@{}}
\toprule
& \multicolumn{7}{c}{\underline{Panel I: $p=1$}} && \multicolumn{7}{c}{\underline{Panel II: $p=2$}}\\
&\multicolumn{3}{c}{\underline{(a): 90\%}} && \multicolumn{3}{c}{\underline{(b): 95\%}} && \multicolumn{3}{c}{\underline{(a): 90\%}} && \multicolumn{3}{c}{\underline{(b): 95\%}}\\
& \multicolumn{3}{c}{Sample Size} && \multicolumn{3}{c}{Sample Size} && \multicolumn{3}{c}{Sample Size} && \multicolumn{3}{c}{Sample Size}\\
          \cmidrule(lr){2-4}
          \cmidrule(lr){6-8}
          \cmidrule(lr){10-12}
          \cmidrule(lr){14-16}
Point     & 250 & 500 & 1000 && 250 & 500 & 1000 &&  250 & 500 & 1000 && 250 & 500 & 1000\\
\midrule
$0$   &    0.9100  &   0.9150  &   0.8650  & &   0.9500  &   0.9450  &   0.9450 & &    0.6350 &    0.7250  &   0.6600 & &    0.7200 &    0.8150 &    0.7800\\
$0.1$ &    0.9000  &   0.9000  &   0.8950  & &   0.9450  &   0.9400  &   0.9250 & &    0.6350 &    0.8200  &   0.8500 & &    0.7350 &    0.8750 &    0.9100\\
$0.2$ &    0.9250  &   0.8950  &   0.9050  & &   0.9550  &   0.9450  &   0.9600 & &    0.5650 &    0.7500  &   0.7700 & &    0.6800 &    0.8350 &    0.8450\\
$0.3$ &    0.8450  &   0.8700  &   0.8800  & &   0.9150  &   0.9250  &   0.9450 & &    0.7700 &    0.7050  &   0.6150 & &    0.8350 &    0.8050 &    0.7400\\
$0.4$ &    0.8900  &   0.9100  &   0.9000  & &   0.9500  &   0.9550  &   0.9650 & &    0.2350 &    0.3900  &   0.4100 & &    0.3450 &    0.5550 &    0.5300\\
$0.5$ &    0.9200  &   0.8900  &   0.8850  & &   0.9650  &   0.9400  &   0.9400 & &    0.3200 &    0.5300  &   0.6800 & &    0.4100 &    0.6700 &    0.7700\\
$0.6$ &    0.8900  &   0.8750  &   0.9150  & &   0.9550  &   0.9500  &   0.9800 & &    0.2850 &    0.4550  &   0.3900 & &    0.4050 &    0.5750 &    0.5150\\
$0.7$ &    0.8650  &   0.8900  &   0.8800  & &   0.9400  &   0.9650  &   0.9600 & &    0.7450 &    0.8550  &   0.8950 & &    0.8450 &    0.9250 &    0.9500\\
$0.8$ &    0.8950  &   0.9000  &   0.8900  & &   0.9300  &   0.9500  &   0.9650 & &    0.5150 &    0.6950  &   0.7100 & &    0.5950 &    0.7750 &    0.7950\\
$0.9$ &    0.8950  &   0.9050  &   0.9100  & &   0.9400  &   0.9500  &   0.9500 & &    0.5950 &    0.7350  &   0.8300 & &    0.7200 &    0.8450 &    0.9100\\
$1$   &    0.8950  &   0.9050  &   0.8850  & &   0.9550  &   0.9550  &   0.9450 & &    0.5400 &    0.6650  &   0.6500 & &    0.6450 &    0.7550 &    0.7600\\
\midrule
& \multicolumn{7}{c}{\underline{Panel III: $p=3$}} && \multicolumn{7}{c}{\underline{Panel IV: $p=5$}}\\
&\multicolumn{3}{c}{\underline{(a): 90\%}} && \multicolumn{3}{c}{\underline{(b): 95\%}} && \multicolumn{3}{c}{\underline{(a): 90\%}} && \multicolumn{3}{c}{\underline{(b): 95\%}}\\
& \multicolumn{3}{c}{Sample Size} && \multicolumn{3}{c}{Sample Size} && \multicolumn{3}{c}{Sample Size} && \multicolumn{3}{c}{Sample Size}\\
          \cmidrule(lr){2-4}
          \cmidrule(lr){6-8}
          \cmidrule(lr){10-12}
          \cmidrule(lr){14-16}
Point     & 250 & 500 & 1000 && 250 & 500 & 1000 &&  250 & 500 & 1000 && 250 & 500 & 1000\\
\midrule
$0$   &    0.7800  &   0.7950  &   0.8100  & &   0.8500  &   0.9200 &    0.8750  & &   0.4450 &    0.4400 &    0.2400  & &   0.5900  &   0.5750  &   0.3400\\
$0.1$ &    0.8050  &   0.8550  &   0.8900  & &   0.8850  &   0.9250 &    0.9400  & &   0.4850 &    0.5700 &    0.3700  & &   0.6300  &   0.6850  &   0.4750\\
$0.2$ &    0.8250  &   0.8850  &   0.8800  & &   0.9150  &   0.9250 &    0.9550  & &   0.5200 &    0.5700 &    0.3600  & &   0.6200  &   0.7000  &   0.5500\\
$0.3$ &    0.8500  &   0.9200  &   0.8800  & &   0.9200  &   0.9500 &    0.9400  & &   0.4000 &    0.4800 &    0.3450  & &   0.5500  &   0.6100  &   0.4700\\
$0.4$ &    0.7700  &   0.8250  &   0.8500  & &   0.8650  &   0.8900 &    0.9000  & &   0.3850 &    0.3500 &    0.1750  & &   0.5050  &   0.4700  &   0.2500\\
$0.5$ &    0.8900  &   0.8900  &   0.8850  & &   0.9400  &   0.9450 &    0.9400  & &   0.9000 &    0.8900 &    0.9000  & &   0.9550  &   0.9450  &   0.9600\\
$0.6$ &    0.1000  &   0.0900  &   0.1500  & &   0.1650  &   0.1650 &    0.2650  & &   0.0850 &    0.0550 &    0.0550  & &   0.1350  &   0.1450  &   0.1050\\
$0.7$ &    0.2450  &   0.2900  &   0.5900  & &   0.3400  &   0.4250 &    0.6950  & &   0.1250 &    0.1500 &    0.1750  & &   0.2050  &   0.2150  &   0.2750\\
$0.8$ &    0.2350  &   0.4950  &   0.6100  & &   0.3600  &   0.5950 &    0.7450  & &   0.2000 &    0.7600 &    0.2150  & &   0.2900  &   0.8900  &   0.3000\\
$0.9$ &    0.3200  &   0.4150  &   0.6250  & &   0.4550  &   0.5350 &    0.7400  & &   0.2400 &    0.2150 &    0.1150  & &   0.3550  &   0.2950  &   0.2000\\
$1$   &    0.2050  &   0.3000  &   0.2500  & &   0.3000  &   0.3950 &    0.3900  & &   0.2250 &    0.1150 &    0.0600  & &   0.3000  &   0.2100  &   0.0900\\
\bottomrule
\end{tabular}
\end{threeparttable}
}
\end{table} 
\begin{figure}[t!]
\centering
\begin{subfigure}[t]{0.35\textwidth}
\includegraphics[scale=0.4]{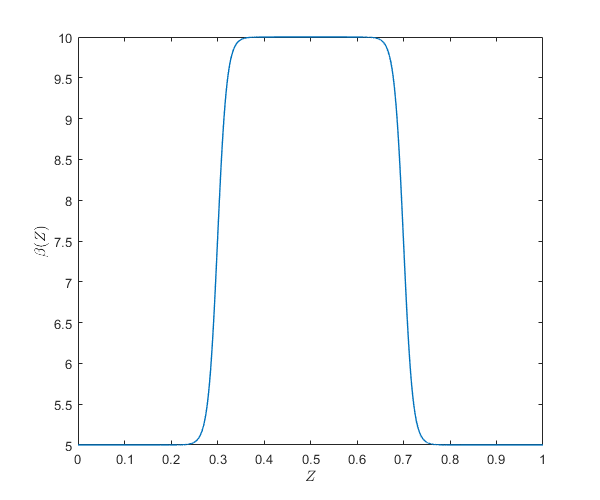}
\caption{Model I}
\end{subfigure}
~
\begin{subfigure}[t]{0.35\textwidth}
\centering
\includegraphics[scale=0.45]{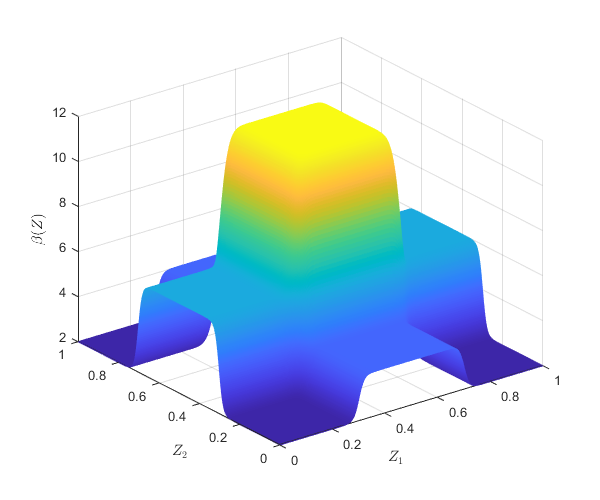}
\caption{Model II}
\end{subfigure}
\\
\begin{subfigure}[t]{0.35\textwidth}
\centering
\includegraphics[scale=0.45]{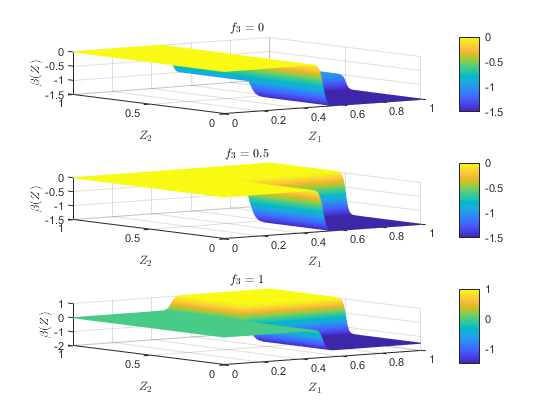}
\caption{Model III}
\end{subfigure}
\caption{Heterogeneity pattern in $\beta_1(\bs{Z}_i)$ for the simulated models.}
\label{fig:Simul1}
\end{figure}

\begin{figure}[t!]
    \centering
    \begin{subfigure}[t]{0.5\textwidth}
        \centering
        \includegraphics[scale=0.4]{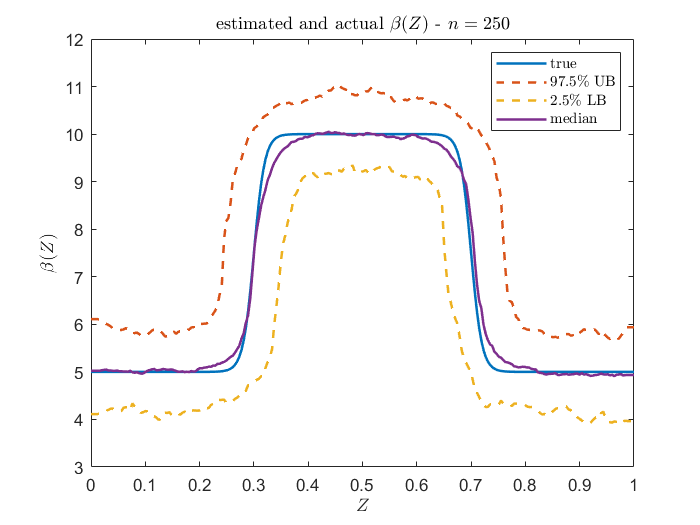}
        \caption{$n=250$}
    \end{subfigure}%
    ~ 
    \begin{subfigure}[t]{0.5\textwidth}
        \centering
        \includegraphics[scale=0.4]{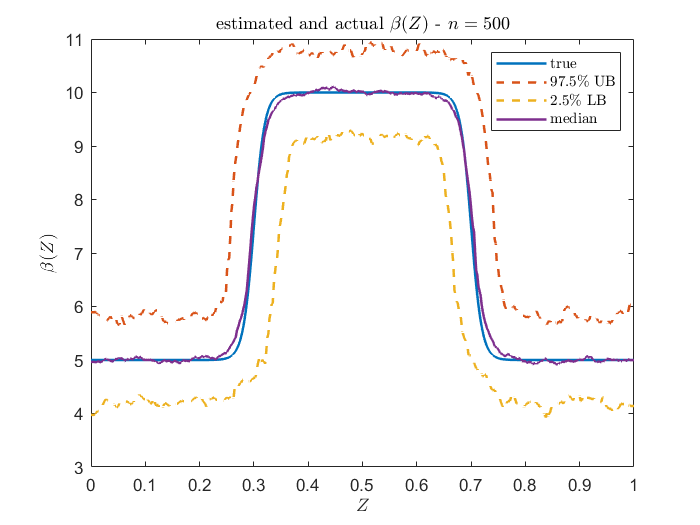}
        \caption{$n=500$}
    \end{subfigure}
\\
\begin{subfigure}[t]{0.5\textwidth}
        \centering
        \includegraphics[scale=0.4]{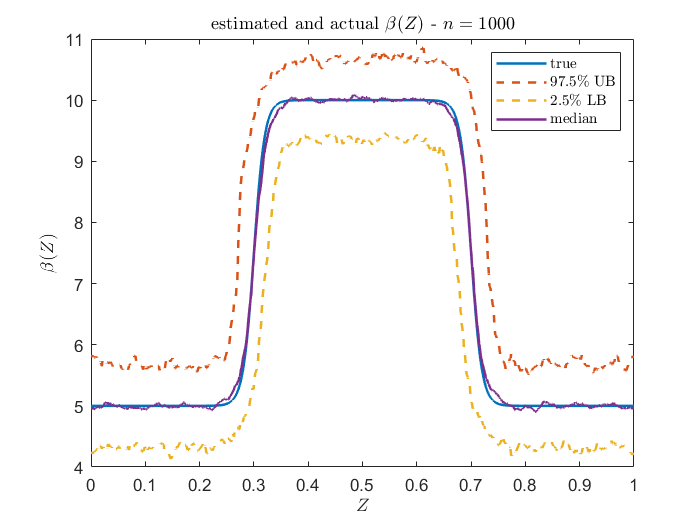}
        \caption{$n=1000$}
    \end{subfigure}%
    \caption{Actual and fitted $\beta(Z)$}
    \label{F:simul2}
\end{figure}

\begin{figure}[t!]
    \centering
    \begin{subfigure}[t]{0.5\textwidth}
        \centering
        \includegraphics[scale=0.4]{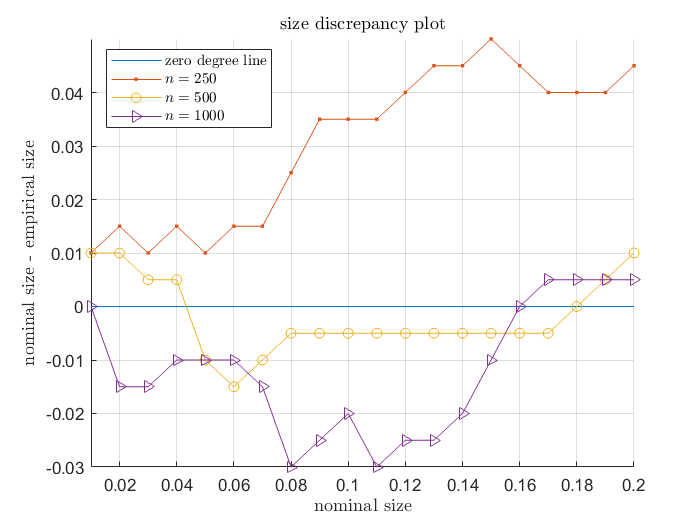}
        \caption{$p=1$}
    \end{subfigure}%
    ~ 
    \begin{subfigure}[t]{0.5\textwidth}
        \centering
        \includegraphics[scale=0.4]{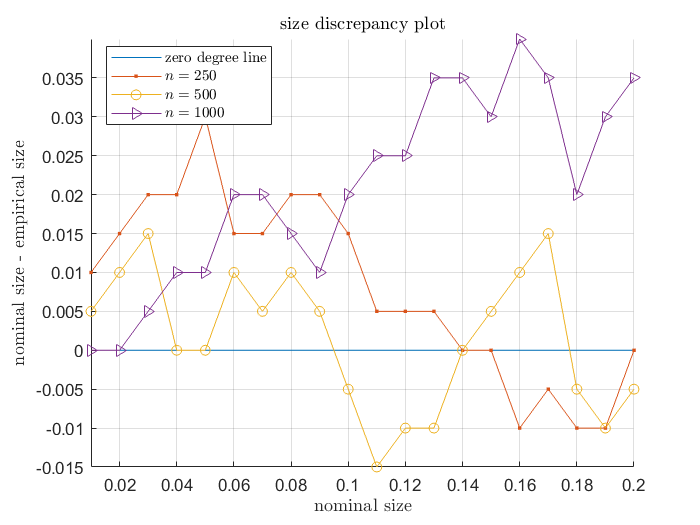}
        \caption{$p=2$}
    \end{subfigure}
\\
\begin{subfigure}[t]{0.5\textwidth}
        \centering
        \includegraphics[scale=0.4]{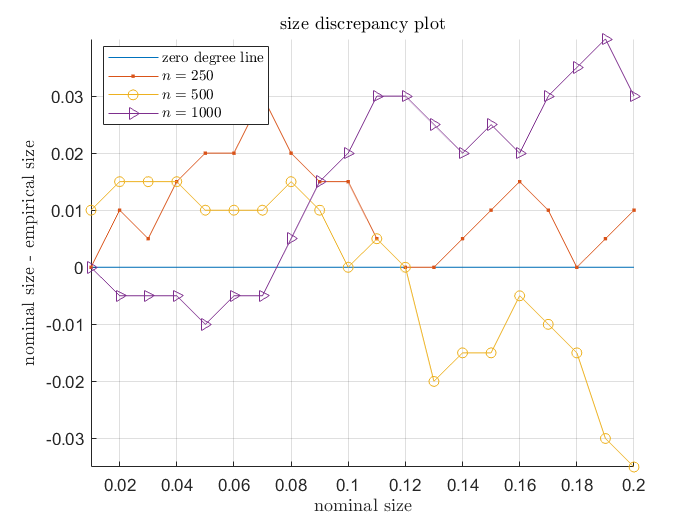}
        \caption{$p=3$}
    \end{subfigure}%
    ~ 
    \begin{subfigure}[t]{0.5\textwidth}
        \centering
        \includegraphics[scale=0.4]{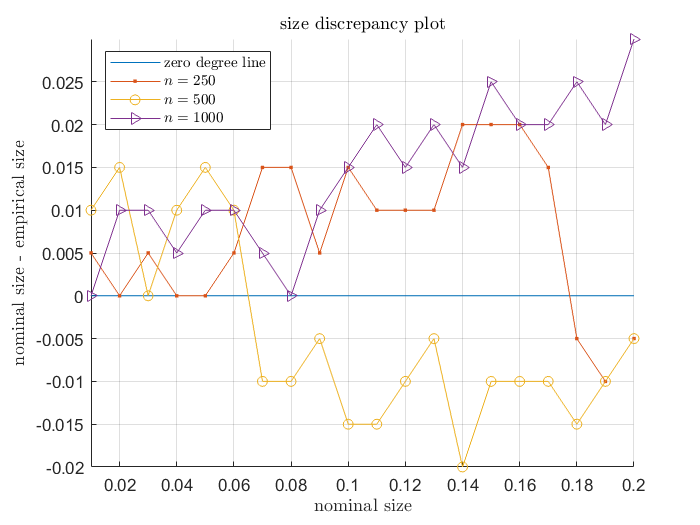}
        \caption{$p=5$}
    \end{subfigure}
    \caption{Size discrepancy plots}
    \label{F:size}
\end{figure}

\begin{figure}[t!]
    \centering
    \begin{subfigure}[t]{0.5\textwidth}
        \centering
        \includegraphics[scale=0.4]{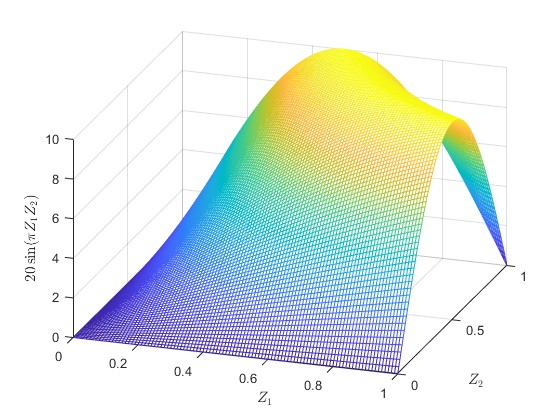}
        \caption{Interaction}
    \end{subfigure}%
    ~ 
    \begin{subfigure}[t]{0.5\textwidth}
        \centering
        \includegraphics[scale=0.4]{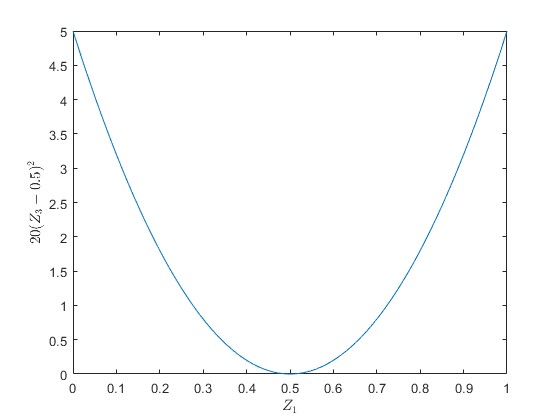}
        \caption{Quadratic}
    \end{subfigure}
\\
\begin{subfigure}[t]{0.5\textwidth}
        \centering
        \includegraphics[scale=0.4]{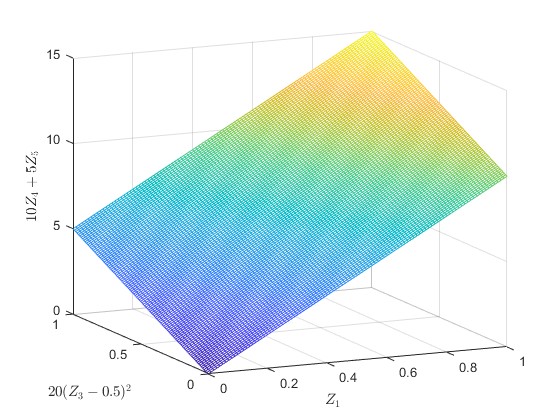}
        \caption{Linear}
    \end{subfigure}%
    \caption{Function componets}
    \label{F:emp1}
\end{figure}

\begin{figure}
    \centering
    \includegraphics[width=0.5\linewidth]{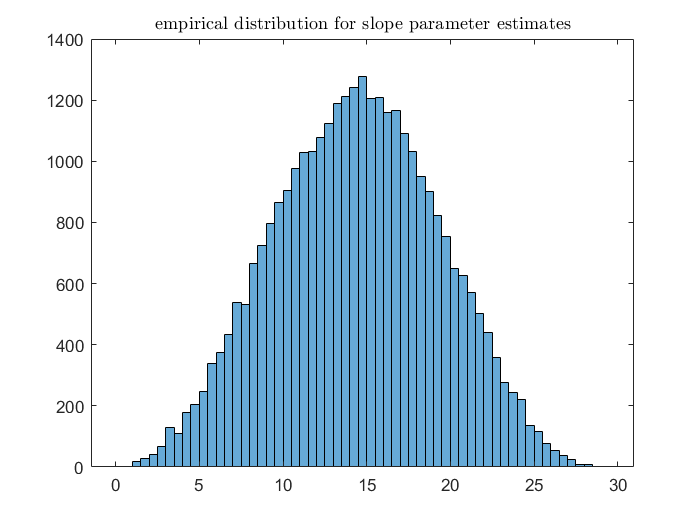}
    \caption{$\gamma$ heterogeneity}
    \label{F:empdist}
\end{figure}

\begin{figure}[t!]
    \centering
    \begin{subfigure}[t]{0.5\textwidth}
        \centering
        \includegraphics[scale=0.4]{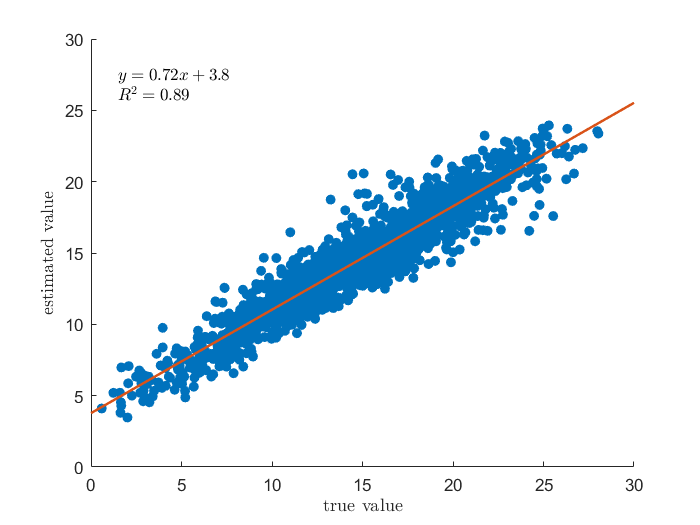}
        \caption{$n=2000$}
    \end{subfigure}%
    ~ 
    \begin{subfigure}[t]{0.5\textwidth}
        \centering
        \includegraphics[scale=0.4]{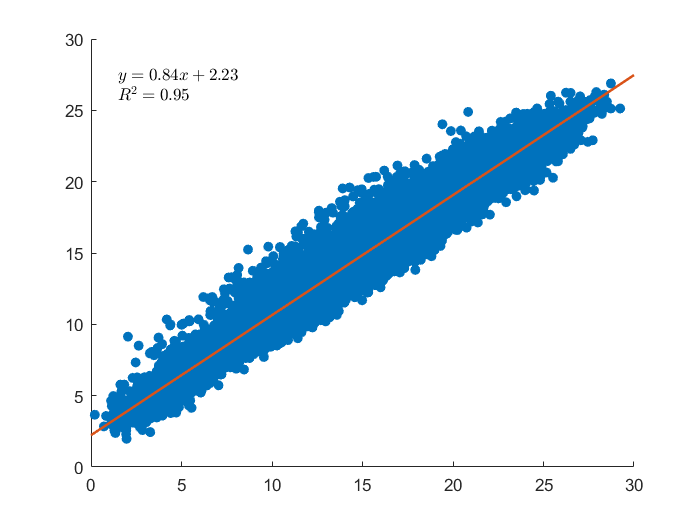}
        \caption{$n=32000$}
    \end{subfigure}
    \caption{Scatter plot of the true slope parameter versus the estimated one}
    \label{F:scatter}
\end{figure}

\begin{figure}
    \centering
    \includegraphics[width=0.5\linewidth]{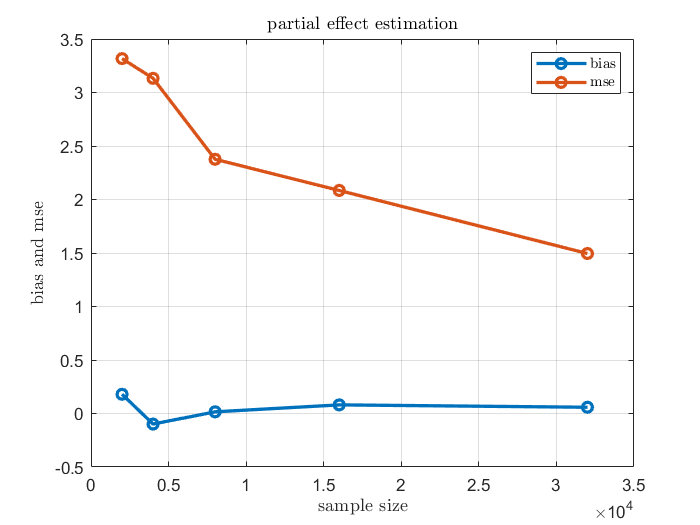}
    \caption{Bias and MSE for the estimation of the slope parameter}
    \label{F:biasmse}
\end{figure}

\begin{figure}
    \centering
    \includegraphics[width=\linewidth]{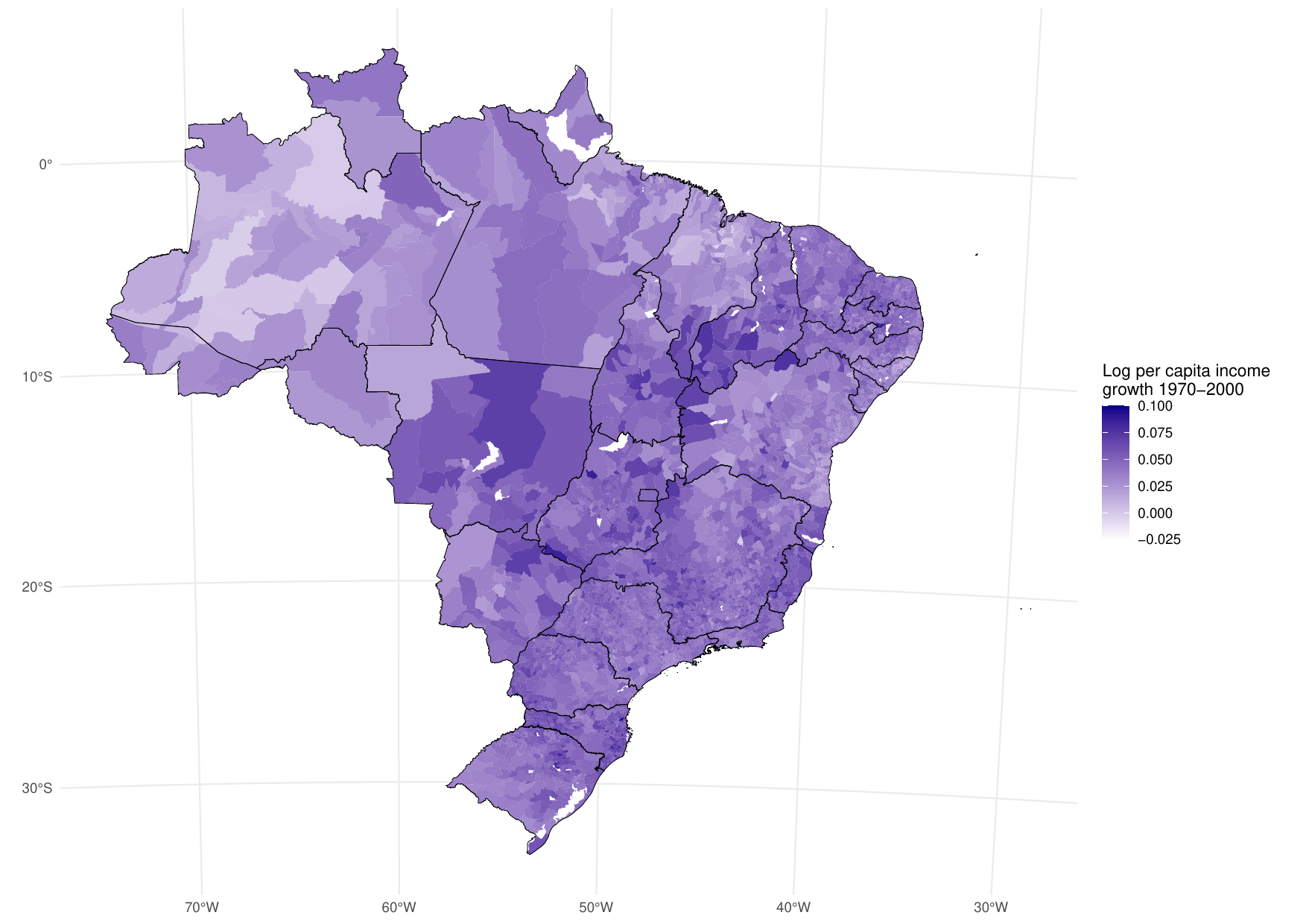}
    \caption{Growth heterogeneity}
    \label{fig:geo1}
\end{figure}

\begin{figure}
    \centering
    \includegraphics[width=\linewidth]{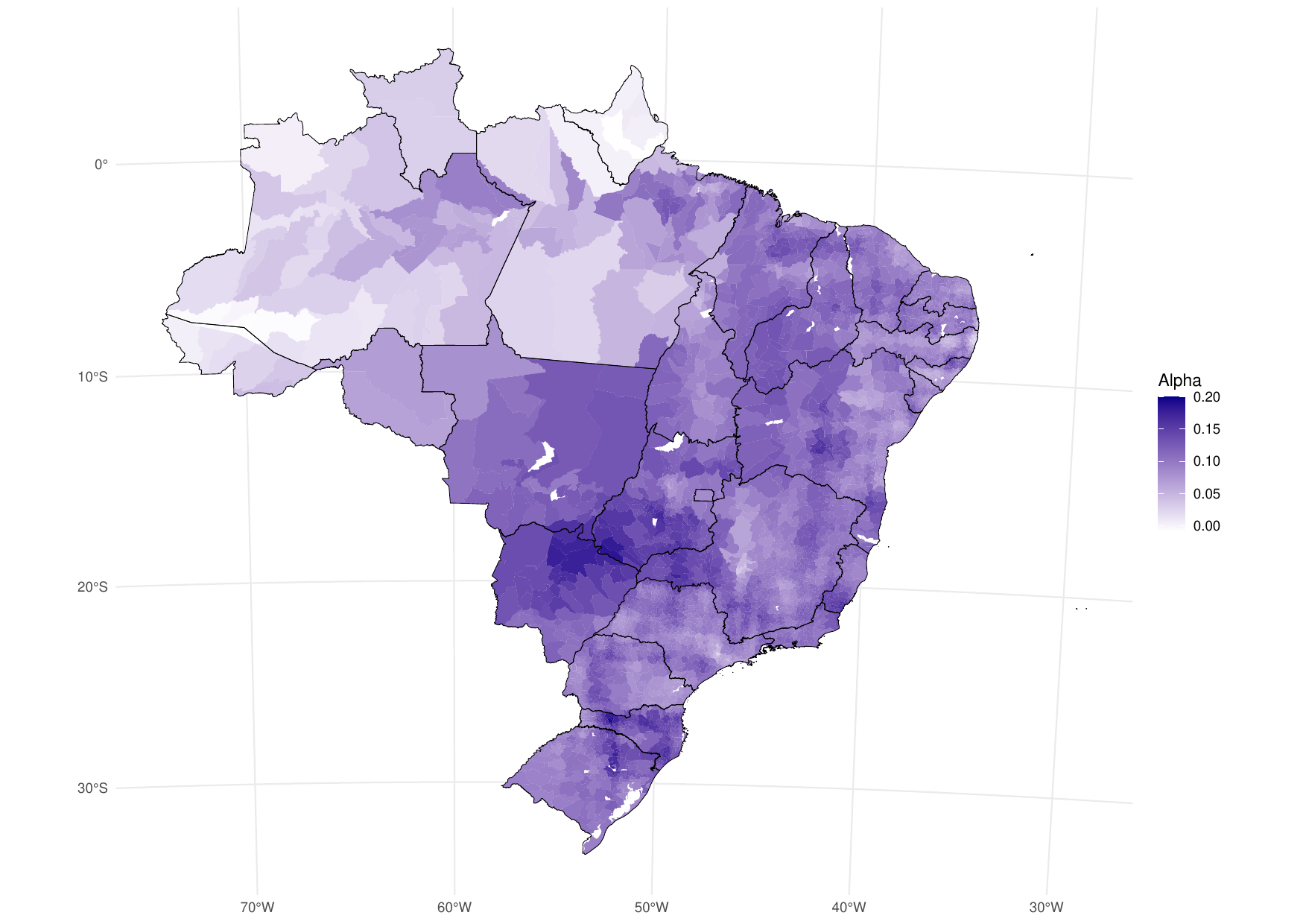}
    \caption{$\alpha$ heterogeneity}
    \label{fig:geo2}
\end{figure}

\begin{figure}
    \centering
    \includegraphics[width=\linewidth]{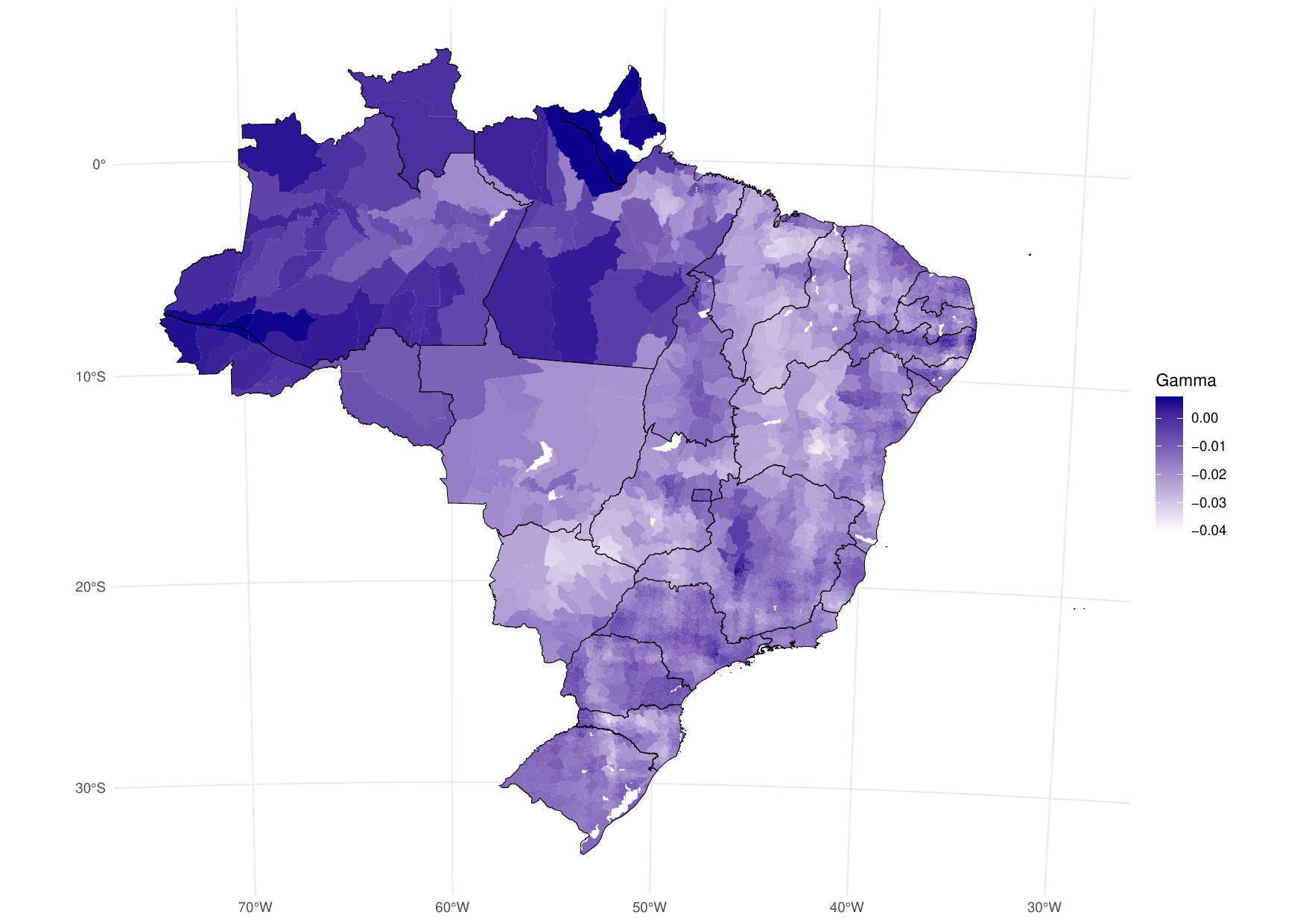}
    \caption{$\gamma$ heterogeneity}
    \label{fig:geo3}
\end{figure}

\newpage

\bibliographystyle{apalike} 
\bibliography{references} 

\newpage

\appendix
\renewcommand{\theequation}{S.\arabic{equation}}

\section{Proofs}


\subsection{Proof of Theorem \ref{thm:unbiasedness}}

Decompose
\[
\widehat{\bs\beta}(\bs{z}) - \widetilde{\bs\beta}(\bs{z}) =\widehat{\bs{\Omega}}(\bs{z})^ {-1}\widehat{\bs{\gamma}}(\bs{z}) - \widetilde{\bs\beta}(\bs{z})= \widehat{\bs{\Omega}}(\bs{z})^ {-1}\left[\sum_{i\in R(\bs{z})}\bs{X}_i\bs{X}_i^{\T}\big(\bs\beta(\bs{Z}_i) - \widetilde{\bs\beta}(\bs{z})\big) + \sum_{i\in R(\bs{z})} \bs{X}_i\epsilon_i\right].
\]
Denote by $\mathcal{G}$ the sigma-algebra generated by $X_1,\dots,X_n$ and $1\{Z_1\in R(\bs{z},\omega)\},\dots,1\{Z_n\in R(\bs{z},\omega)\}$. Then for $i\in[n]$
\[
\E\left[\frac{1}{|\mathcal{A}(\bs{z})|}\sum_{i\in R(\bs{z})} \bs{X}_i\epsilon_i|\mathcal{G}\right] = \frac{1}{|A(\bs{z})|}\sum_{i\in R(\bs{z})}\bs{X}_i\E[\epsilon_i|\bs{X}_i,1\{\bs{Z}_i\in R(\bs{z},\omega)] =0,
\]
where the last equality holds because $R(\bs{z},\omega)$ is independent of $\epsilon_i$ due to the sample split. Furthermore, the second term is also zero since $\E[\bs\beta(\bs{Z}_i) - \widetilde{\bs\beta}(\bs{z})|\mathcal{G}] = \E[\bs\beta(\bs{Z}) - \widetilde{\bs\beta}(\bs{z})|\bs{Z}\in R(\bs{z},\omega) ]=0$ by the definition of $\widetilde{\bs\beta}$. Hence $\E[\widehat{\bs\beta}(\bs{z})|\mathcal{G}] = \widetilde{\bs\beta}(\bs{z})$ and, therefore $\E[\widehat{\bs\beta}(\bs{z})] = \widetilde{\bs\beta}(\bs{z})$ for all $\bs{z}\in[0,1]^d$.

\subsection{Proof of Theorem \ref{thm:beta_converge_rate}}

Recall that based on a subsample $\mathcal{S}\subseteq[n]$, we have the partition $\mathcal{A}\cup \mathcal{B} =\mathcal{S}$ and we define $\mathcal{A}(\bs{z},\omega)\subseteq \mathcal{A}$ as the set of indices $i\in\mathcal{A}$ such that $\bs{Z}_i\in R(\bs{z},\omega)$. By Assumption \ref{ass:main}(b) we have $Y_i = \bs{X}_i^{\T}\bs\beta(\bs{Z}_i) + \epsilon_i$ where $\E[\epsilon_i|\bs{X}_i,\bs{Z}_i]=0$, and write
\begin{align*}
    \widehat{\bs\beta}(\bs{z},w) -\bs\beta(\bs{z}) &= \left[\sum_{i\in\cA(\bs{z},\omega)}\bs{X}_i\bs{X}_i^{\T}\right]^{-1}\left\{\sum_{i\in\cA(\bs{z},\omega)}\bs{X}_i\bs{X}_i^{\T}\bs\beta(\bs{Z}_i) + \sum_{i\in\cA(\bs{z},\omega)}\bs{X}_i\epsilon_i\right\} -\bs\beta(\bs{z})\\
    &=\widehat{\bs{\Omega}}(\bs{z})^{-1}\left\{\frac{1}{|\cA(\bs{z},\omega)|}\sum_{i\in\cA(\bs{z},\omega)}\bs{X}_i\bs{X}_i^{\T}\big[\bs\beta(\bs{Z}_i)-\bs\beta(\bs{z})\big] + \frac{1}{|\cA(\bs{z},\omega)|}\sum_{i\in\cA(\bs{z},\omega)}\bs{X}_i\epsilon_i\right\}.
\end{align*}
Let $\Delta(\bs{z},\omega):=\widehat{\bs{\Omega}}(\bs{z},\omega) - \bs{\Omega}(\bs{z})$, then
\begin{align*}
    \widehat{\bs{\Omega}}(\bs{z},\omega)^{-1}
    &= \left[\bs{\Omega}(\bs{z}) + \Delta(\bs{z})\right]^{-1} = \bs{\Omega}(\bs{z})^ {-1} - \bs{\Omega}(\bs{z})^ {-1} \Delta(\bs{z},\omega)\widehat{\bs{\Omega}}(\bs{z},\omega),
\end{align*}
Therefore, we have the following decomposition
\begin{align}\label{eq:tree_decomposition}
    \widehat{\bs\beta}(\bs{z},w) -\bs\beta(\bs{z}) &=\bs{\Omega}(\bs{z})^{-1}\Big\{I_{d_X} - \Delta(\bs{z},\omega) \big[\Delta(\bs{z},\omega)+\bs{\Omega}(\bs{z})\big]\Big\}\\
    &\quad \left\{\frac{1}{|\cA(\bs{z},\omega)|}\sum_{i\in\cA(\bs{z},\omega)}\bs{X}_i\bs{X}_i^{\T}\big[\bs\beta(\bs{Z}_i)-\bs\beta(\bs{z})\big] + \frac{1}{|\cA(\bs{z},\omega)|}\sum_{i\in\cA(\bs{z},\omega)}\bs{X}_i\epsilon_i\right\}\nonumber.
\end{align}
Apply Lemma \ref{lem:tree_covergence_rate} with $W_i=X_{i,j}X_{i,k}$ for $j,k\in[d_X]$ to conclude that
\begin{equation}\label{eq:XX_convergence_rate}
    \Delta(\bs{z},\omega) \lesssim_\P  k^{-\frac{1}{2}}\left(\frac{s}{k}\right)^{\epsilon/2} + \left(\frac{k}{s}\right)^{(1-\delta)\frac{K(\alpha)\pi}{d_Z}}.
\end{equation}
Similarly, Lemma \ref{lem:tree_covergence_rate} with $W_i=X_{i,j}\epsilon_i$ for $j\in[d_X]$ yields
\[
\frac{1}{|\cA(\bs{z},\omega)|}\sum_{i\in\cA(\bs{z},\omega)}\bs{X}_i\epsilon_i\lesssim_\P k^{-\frac{1}{2}}\left(\frac{s}{k}\right)^{\epsilon/2} + \left(\frac{k}{s}\right)^{(1-\delta)\frac{K(\alpha)\pi}{d_Z}}.
\]

Also, for $j\in[d_X]$, we have
\[
\frac{1}{|\cA(\bs{z},\omega)|}\sum_{i\in\cA(\bs{z},\omega)}\left[\bs{X}_i\bs{X}_i^{\T}\big[\bs\beta(\bs{Z}_i)-\bs\beta(\bs{z})\big]\right]_j = \sum_{\ell=1}^{d_X}\frac{1}{|\cA(\bs{z},\omega)|}\sum_{i\in\cA(\bs{z},\omega)}X_{i,j}X_{i\ell}\big[\beta_\ell(\bs{Z}_i)-\beta_\ell(\bs{z})\big].
\]
By Cauchy-Schwartz inequality, \eqref{eq:generic_tree_unbiasedness} and Markov's inequality we have
\begin{align*}
\frac{1}{|\cA(\bs{z},\omega)|}\sum_{i\in\cA(\bs{z},\omega)}X_{i,j}X_{i\ell}\big[\beta_\ell(\bs{Z}_i)-\beta_\ell(\bs{z})\big]&\leq \left(\frac{1}{|\cA(\bs{z},\omega)|}\sum_{i\in\cA(\bs{z},\omega)}(X_{i,j}X_{i\ell})^2\frac{1}{|\cA(\bs{z},\omega)|}\sum_{i\in\cA(\bs{z},\omega)}\big[\beta_\ell(\bs{Z}_i)-\beta_\ell(\bs{z})\big]^2\right)^{1/2}\\
&\leq \left(\frac{1}{|\cA(\bs{z},\omega)|}\sum_{i\in\cA(\bs{z},\omega)}(X_{i,j}X_{i\ell})^2\right)^ {1/2}\max_{u\in R(\bs{z},\omega)}\|\bs\beta(u)-\bs\beta(\bs{z})\|\\
&\lesssim_\P \left[\E(X_j^2X_k^2|\bs{Z}\in R(\bs{z},\omega)\right]^ {1/2}\max_{u\in R(\bs{z},\omega)}\|\bs\beta(u)-\bs\beta(\bs{z})\|\\
&\lesssim \text{diam}(R(\bs{z},\omega)).
\end{align*}
Finally, the union bound followed by Lemma \ref{lem:upper_bound_diameter}(b) yields
\[
\frac{1}{|\cA(\bs{z},\omega)|}\sum_{i\in\cA(\bs{z},\omega)}\bs{X}_i\bs{X}_i^{\T}\big[\bs\beta(\bs{Z}_i)-\bs\beta(\bs{z})\big]\lesssim_\P\text{diam}(R(\bs{z},\omega)))\lesssim_\P \left(\frac{k}{s}\right)^ {(1-\delta)\frac{K(\alpha)\pi}{d_Z}}.
\]

\subsection{Proof of Theorem \ref{thm:asym_normality_RF}}

Taking the average of all trees using decomposition \eqref{eq:tree_decomposition}, we are left with
\begin{align}\label{eq:forest_decomposition}
    \overline{\bs\beta}(\bs{z},w) -\bs\beta(\bs{z}) &=\bs{\Omega}(\bs{z})^{-1}\frac{1}{B}\sum_{b=1}^B\left\{\Big[\bs{I}_{d_X} - \Delta(\bs{z},\omega) \big(\Delta(\bs{z},\omega)+\bs{\Omega}(\bs{z})\big)\Big]\right.\\
    &\quad \left.\left[\frac{1}{|\cA(\bs{z},\omega)|}\sum_{i\in\cA(\bs{z},\omega)}\bs{X}_i\bs{X}_i^{\T}\big[\bs\beta(\bs{Z}_i)-\bs\beta(\bs{z})\big] + \frac{1}{|\cA(\bs{z},\omega)|}\sum_{i\in\cA(\bs{z},\omega)}\bs{X}_i\epsilon_i\right]\right\}\nonumber.
\end{align}
From the expression above, we note that the term in the form $\frac{1}{B}\sum_{b=1}^B \Delta(\bs{z},\omega)\Delta(\bs{z},\omega)$ will not vanish in probability unless each tree is consistent. In other words, the tree consistency is necessary for the random forest consistency in our setup.

From \eqref{eq:XX_convergence_rate} we have
\begin{align*}
    \overline{\bs\beta}(\bs{z},w) -\bs\beta(\bs{z}) &=\bs{\Omega}(\bs{z})^{-1}\frac{1}{B}\sum_{b=1}^B\left\{\Big[\bs{I}_{d_X} +o_\P(1)\Big]\right.\\
    &\quad \left.\left[\frac{1}{|\cA(\bs{z},\omega)|}\sum_{i\in\cA(\bs{z},\omega)}\bs{X}_i\bs{X}_i^{\T}\big[\bs\beta(\bs{Z}_i)-\bs\beta(\bs{z})\big] + \frac{1}{|\cA(\bs{z},\omega)|}\sum_{i\in\cA(\bs{z},\omega)}\bs{X}_i\epsilon_i\right]\right\}\nonumber.
\end{align*}
Consider the term $\mathcal{T}_1:=\frac{1}{B}\sum_{b=1}^B \frac{1}{|\cA(\bs{z},\omega)|}\sum_{i\in\cA(\bs{z},\omega)}\bs{X}_i\epsilon_i$. By Lemma \ref{lem:tree_covergence_rate}, we have that
\[
\E[\mathcal{T}_1]= \sum_{i=1}^s\E[S_i\bs{X}_i\epsilon_i]=\E[X\epsilon|\bs{Z}\in R(\bs{z},\omega)] = \E\big[X\E[\epsilon|\bs{X},\bs{Z}\in R(\bs{z},\omega)]\big]=0.
\]

Below, we show that, as $n\to\infty$,
\begin{itemize}
    \item[(i)] $\Lambda(\bs{z})^{-1/2}\frac{1}{B}\sum_{b=1}^B
\frac{1}{|\cA(\bs{z},\omega_b)|}\sum_{i\in\cA(\bs{z},\omega_b)}\bs{X}_i\epsilon_i\cd \mathsf{N}(0,I)$ for some covariance matrix $\Lambda(\bs{z})$;
    \item[(ii)] $\Sigma(\bs{z})^{-1/2}\bs{\Omega}(\bs{z})^{-1}\frac{1}{|\cA(\bs{z},\omega)|}\sum_{i\in\cA(\bs{z},\omega)}\bs{X}_i\bs{X}_i^{\T}\big[\bs\beta(\bs{Z}_i)-\bs\beta(\bs{z})\big]\cp 0$ where $\Sigma(\bs{x}):=\bs{\Omega}^{-1}(\bs{z}) \Lambda(\bs{z})\bs{\Omega}^{-1}$, 
\end{itemize}
the result then follows by Slutsky's lemma.

\subsubsection*{Proof of $(i)$}

Recall that, for a sequence of random vectors $(Z_n)$,  $Z_n\cd \mathsf{N}(0,I)$ is equivalent to $a^{\T}Z_n\cd \mathsf{N}(0,1)$ for all $\|a\|=1$. Set $W_i=a^{\T} \bs{X}_i\epsilon_i$ in Lemma \ref{lem:generic_RF_assymptotic_normality} to conclude for all $\|a\|=1$, provided that $k\asymp s^{\eta}$ for $\eta\in (1- K(\alpha,\pi), 1)\subseteq (0,1)$, $s\to\infty$ and $s(\log n)^ {d_Z} = o(n)$, we have
\[
\overline{T}_B(\bs{z}):=\frac{1}{\sqrt{a^{\T}\Lambda(\bs{z}) a}}\frac{1}{B}\sum_{b=1}^ B T_W(\bs{z},\omega_b) \cd \mathsf{N}(0,1),
\]
where let $\mathcal{H}_1(v):=\E[T_W(\bs{z},\omega;Z_1,\dots,Z_s)|Z_1=v]$
\[
    \Lambda(\bs{z}):=\frac{s^2}{n^ 2}\sum_{i=1}^n \V[\mathcal{H}_1(\bs{Z}_i)] =\frac{s^2}{n}\V[X_1\epsilon_1\E[S_1(\bs{z},\omega)|Z_1]],
\]
and $\frac{s^\frac{1-\eta}{K(\alpha)}}{n
(\log s)^{d_Z}}\lesssim \min_{a:\|a\|=1}\lambda^2(\bs{z}) a^{\T}\Lambda(\bs{z}) a$.

\subsubsection*{Proof of $(ii)$}

For $j\in[d_X]$ we wirte
\[
\frac{1}{|\cA(\bs{z},\omega)|}\sum_{i\in\cA(\bs{z},\omega)}[\bs{X}_i\bs{X}_i^{\T}\big[\bs\beta(\bs{Z}_i)-\bs\beta(\bs{z})\big]]_j = \sum_{\ell=1}^{d_X}\frac{1}{|\cA(\bs{z},\omega)|}\sum_{i\in\cA(\bs{z},\omega)}X_{i,j}X_{i\ell}\big[\beta_\ell(\bs{Z}_i)-\beta_\ell(\bs{z})\big]
\]
By Cauchy-Schwartz inequality and Lemma \ref{lem:tree_covergence_rate}
\begin{align*}
\frac{1}{|\cA(\bs{z},\omega)|}\sum_{i\in\cA(\bs{z},\omega)}X_{i,j}X_{i\ell}\big[\beta_\ell(\bs{Z}_i)-\beta_\ell(\bs{z})\big]&\leq \left(\frac{1}{|\cA(\bs{z},\omega)|}\sum_{i\in\cA(\bs{z},\omega)}(X_{i,j}X_{i\ell})^2\frac{1}{|\cA(\bs{z},\omega)|}\sum_{i\in\cA(\bs{z},\omega)}\big[\beta_\ell(\bs{Z}_i)-\beta_\ell(\bs{z})\big]^2\right)^{1/2}\\
&\leq \left(\frac{1}{|\cA(\bs{z},\omega)|}\sum_{i\in\cA(\bs{z},\omega)}(X_{i,j}X_{i\ell})^2\right)^ {1/2}\max_{u\in R(\bs{z},\omega)}\|\bs\beta(u)-\bs\beta(\bs{z})\|\\
&\leq \left[\E(X_j^2X_k^2|\bs{Z}=\bs{z})\right]^ {1/2}\max_{u\in R(\bs{z},\omega)}\|\bs\beta(u)-\bs\beta(\bs{z})\| + o_\P(1).
\end{align*}
Thus
\[
\left\|\frac{1}{|\cA(\bs{z},\omega)|}\sum_{i\in\cA(\bs{z},\omega)}\bs{X}_i\bs{X}_i^{\T}\big[\bs\beta(\bs{Z}_i)-\bs\beta(\bs{z})\big]\right\|\lesssim_\P \text{diam}(R(\bs{z},\omega)).
\]
By the Lipschitz condition on $\bs{z}\mapsto \bs\beta(\bs{z})$ (Assumption \ref{ass:main}(b)), we have
\begin{align*}
    \|\Sigma(\bs{z})^{-1/2}\bs{\Omega}(\bs{z})^{-1}\frac{1}{|\cA(\bs{z},\omega)|}\sum_{i\in\cA(\bs{z},\omega)}\bs{X}_i\bs{X}_i^{\T}\big[\bs\beta(\bs{Z}_i)-\bs\beta(\bs{z})\big]\|&\leq C_{p_X}\frac{\|\bs{\Omega}(\bs{z})^ {-1}\|^2}{\|\Lambda(\bs{z})\|^ {1/2}}\max_{u\in R(\bs{z},\omega)}\|\bs\beta(u)-\bs\beta(\bs{z})\|\\
    &\leq C_{p_X}C_{\bs\beta}\frac{\|\bs{\Omega}(\bs{z})^ {-1}\|^2}{\|\Lambda(\bs{z})\|^ {1/2}}\text{diam}(R(\bs{z},\omega)),
\end{align*}
where $C_{p_X}$ is a constant only depending on $p_X$ and $C_{\bs\beta}$ is the Lipschitz constant.

Since we assume that $\|\bs{\Omega}(\bs{z})^{-1}\|_2\lesssim 1$ by Assumption \ref{ass:main}(d), it suffices for $(ii)$  that $\text{diam}(R(\bs{z},\omega))=o_\P(\|\Lambda(\bs{z})\|^{1/2})$. For that, we have from Lemma \ref{lem:lower_bound} and the proof of part $(i)$,
\[
\frac{\text{diam}(R(\bs{z},\omega))}{\|\Lambda(\bs{z})\|^{1/2}}\lesssim_\P s^{-\frac{ (1-\eta)\pi K(\alpha)}{2d_Z}} \left(\frac{s^{\frac{1-\eta}{K(\alpha)}}}{n(\log s)^{d_Z}}\right)^{-1/2} \lesssim n^{-\frac{1}{2}\big[\bs\beta(1-\eta)\big(\frac{\pi K(\alpha)}{d_Z}+  \frac{1}{K(\alpha)}\big) -1\big]}(\log n)^{{d_Z}/2}.
\]
By assumption $\bs\beta(1-\eta)\big(\frac{\pi K(\alpha)}{d_Z}+ \frac{1}{K(\alpha)}\big)>1$ hence the right-hand side converges to $0$ is probability which proves $(ii)$.

\subsection{Proof of Theorem \ref{thm:LR_Test}}

Recall the random forest decomposition.
\begin{align*}
    \overline{\bs\beta}(\bs{z},w) -\bs\beta(\bs{z}) &=\bs{\Omega}(\bs{z})^{-1}\frac{1}{B}\sum_{b=1}^B\left\{\Big[I_{d_X} - \Delta(\bs{z},\omega) \big(\Delta(\bs{z},\omega)+\bs{\Omega}(\bs{z})\big)\Big]\right.\\
    &\quad \left.\left[\frac{1}{|\cA(\bs{z},\omega)|}\sum_{i\in\cA(\bs{z},\omega)}\bs{X}_i\bs{X}_i^{\T}\big[\bs\beta(\bs{Z}_i)-\bs\beta(\bs{z})\big] + \frac{1}{|\cA(\bs{z},\omega)|}\sum_{i\in\cA(\bs{z},\omega)}\bs{X}_i\epsilon_i\right]\right\}\nonumber.
\end{align*}
Under $\mathcal{H}_0$, the bias term (the first term in the square brackets) vanishes, and we are left with
\begin{align*}
    \overline{\bs\beta}(\bs{z},w) -\beta_0 &=\bs{\Omega}(\bs{z})^{-1}\frac{1}{B}\sum_{b=1}^B\left\{\Big[I_{d_X} - \Delta(\bs{z},\omega) \big(\Delta(\bs{z},\omega)+\bs{\Omega}(\bs{z})\big)\Big] \frac{1}{|\cA(\bs{z},\omega)|}\sum_{i\in\cA(\bs{z},\omega)}\bs{X}_i\epsilon_i\right\}\nonumber\\
    &=\bs{\Omega}(\bs{z})^{-1}\left[\frac{1}{B}\sum_{b=1}^B\frac{1}{|\cA(\bs{z},\omega)|}\sum_{i\in\cA(\bs{z},\omega)}\bs{X}_i\epsilon_i + \zeta_1(\bs{z})\right]\\
    &=\bs{\Omega}(\bs{z})^{-1}\left[\frac{s}{n}\sum_{i=1}^n \mathcal{H}_1(\bs{z},W_i) + \zeta_1(\bs{z}) + \zeta_2(\bs{z})\right],
\end{align*}
where $\zeta(\bs{z}) = \zeta_1(\bs{z}) + \zeta_2(\bs{z})$ with
\begin{align*}
    \zeta_1(\bs{z}) &:= \frac{1}{B}\sum_{b=1}^B\left\{- \Delta(\bs{z},\omega) \big(\Delta(\bs{z},\omega)+\bs{\Omega}(\bs{z})\big) \frac{1}{|\cA(\bs{z},\omega)|}\sum_{i\in\cA(\bs{z},\omega)}\bs{X}_i\epsilon_i\right\}\\
    \zeta_2(\bs{z}) &:= \frac{1}{B}\sum_{b=1}^B\frac{1}{|\cA(\bs{z},\omega_b)|}\sum_{i\in\cA(\bs{z},\omega_b)}\bs{X}_i\epsilon_i - \frac{s}{n}\sum_{i=1}^n \mathcal{H}_1(\bs{z},W_i)\\
    &=\sum_{i=1}^n\big[\omega(\bs{z},\bs{Z}_i)- s\theta(\bs{z},\bs{Z}_i)/n\big] \epsilon_i \bs{X}_i
\end{align*}
where $\omega(\bs{z},\bs{Z}_i):=\frac{1}{B}\sum_{b=1}^B \frac{1\{\bs{Z}_i\in \cA(\bs{z},\omega_b)\}}{|\cA(\bs{z},\omega_b)|}$. Note that $\E[\zeta_2(Z_j)|Z_1,\dots, Z_n,\omega]=0$ for $j\in[n]$ because $\omega(Z_j,\bs{Z}_i)- s\theta(Z_j,\bs{Z}_i)/n$ is $Z_1,\dots, Z_n,\omega$ measurable and $\E[\epsilon_i|\bs{Z}_i,\bs{X}_i]=0$. Also 
\[
\V[\zeta_2(\bs{Z}_i|\bs{Z})= \sum_{i=1}^n \V[\epsilon_i\bs{X}_i]
\]
\begin{align*}
    \sum_{j=1}^n \zeta_2(Z_j) = \sum_{i=1}^n \epsilon_i \bs{X}_i\sum_{j=1}^n\big[\omega(Z_j,\bs{Z}_i)- s\theta(Z_j,\bs{Z}_i)/n\big] = \sum_{i=1}^n \epsilon_i \bs{X}_iq_i(\bs{Z}^{(n)})
\end{align*}
and
\[
\V[\sum_{j=1}^n \zeta_2(Z_j)|\bs{Z}^{(n)}]=\sigma^ 2\sum_{j=1}^n\E[\bs{\Omega}(Z_j)]q_j^2
\]

\begin{align*}
\E[\zeta_2(Z_\ell)\zeta_2(Z_k)^{\T}] &=\sum_{i=1}^n\sum_{j=1}^n\E\left\{\epsilon_i\epsilon_j\big[\omega(Z_\ell,\bs{Z}_i)- s\theta(Z_\ell,\bs{Z}_i)/n\big]\big[\omega(Z_k,\bs{Z}_i)- s\theta(Z_k,\bs{Z}_i)/n\big]  \bs{X}_iX_j^{\T}\right\}\\
&=\sigma^ 2\sum_{i=1}^n\big[\omega(Z_\ell,\bs{Z}_i)- s\theta(Z_\ell,\bs{Z}_i)/n\big]\big[\omega(Z_k,\bs{Z}_i)- s\theta(Z_k,\bs{Z}_i)/n\big]  \bs{X}_iX_j^{\T}\\
\end{align*}
\[
\V[\sum_{i=1}^n \zeta_2(\bs{Z}_i)] \sum_{i=1}\sum_{j=1}^n\big[\omega(\bs{z},\bs{Z}_i)- s\theta(\bs{z},\bs{Z}_i)/n\big] \epsilon_i \bs{X}_i
\]
write 

Recall that 
\[
\V[\zeta_2(\bs{z})] = \V[T] +\V[T_0]
\]

Then
\begin{align*}
    \Delta:=\RSS_0 - \RSS= \underbrace{2\sum_{i=1}^n\epsilon_i\bs{X}_i^{\T}(\overline{\bs\beta}(\bs{Z}_i)-\widetilde{\bs\beta})}_{=:\Delta_1} + \underbrace{\sum_{i=1}^n(\overline{\bs\beta}(\bs{Z}_i)-\widetilde{\bs\beta})^{\T}\bs{X}_i\bs{X}_i^{\T}(\overline{\bs\beta}(\bs{Z}_i)-\widetilde{\bs\beta})}_{=:\Delta_2} 
\end{align*}

For $\bs{Z}_i,Z_j$ and $\omega$ define
\[
S(\bs{Z}_i,Z_j,\omega):=\begin{cases}
    \left|\{k\in \mathcal{A}:Z_k\in R(Z_j,\omega)\}\right|^{-1}& \text{if $\bs{Z}_i\in R(Z_j,\omega)$} \\
    0 &\text{otherwise}.
    \end{cases}
\]
and
\[
\theta(\bs{Z}_i,Z_j) := \E[S(\bs{Z}_i,Z_j,\omega)|\bs{Z}_i,Z_j].
\]
Note that $\theta$ is symmetric in its two arguments because $S$ is symmetric with respect to its first two arguments. Also, for $i=j$ we have that $S(\bs{Z}_i,\bs{Z}_i,\omega)= (1+B)^{-1}$ where $B\sim\text{Binom}(s-1,p_i)$ conditional on $\omega$ and $\bs{Z}_i$ where $p_i\asymp R(\bs{Z}_i,\omega)$. Then from \eqref{eq:leaves_prob_bounds} on a event with probability approach 1
\[
\E[S(\bs{Z}_i,\bs{Z}_i,\omega)|\omega, \bs{Z}_i] \asymp \frac{1}{(s-1)p_i} \asymp \frac{1}{s(k/s)^{1+o(1)}}
\]
therefore
\[
\theta(\bs{Z}_i,\bs{Z}_i) \asymp_\P \frac{1}{s(k/s)^{1+o(1)}}.
\]
For $i\neq j$ we have $S(\bs{Z}_i,Z_j,\omega)\leq S(\bs{Z}_i,\bs{Z}_i,\omega)$ then $\theta(\bs{Z}_i,Z_j)\lesssim_\P \frac{1}{s(k/s)^{1+o(1)}}$.

Also, $\mathcal{H}_1(\bs{Z}_i,Z_j)=\E[\bs{X}_i\epsilon_i S(\bs{Z}_i,Z_j,\omega)|\bs{Z}_i,\bs{X}_i,Y_i] = \epsilon_i \bs{X}_i \theta(\bs{Z}_i,Z_j)$ thus
\begin{align*}
    \overline{\bs\beta}(\bs{Z}_i)-\widetilde{\bs\beta} &=\overline{\bs\beta}(\bs{Z}_i)- \bs\beta(\bs{Z}_i) +\bs\beta(\bs{Z}_i) - \widetilde{\bs\beta} \\
    &=\bs{\Omega}(\bs{Z}_i)^{-1}\frac{s}{n}\sum_{j=1}^n \mathcal{H}_1(\bs{Z}_i,Z_j) +\zeta(\bs{Z}_i) 
 - \left(\sum_{j=1}^n X_jX_j^{\T}\right)^{-1}\sum_{j=1}^n \epsilon_jX_j  - \left(\sum_{j=1}^n X_jX_j^{\T}\right)^{-1}\sum_{j=1}^n (\bs\beta(Z_j) - \bs\beta(\bs{Z}_i))^{\T}X_j\\
    &=  \bs{\Omega}(\bs{z})^{-1}\frac{s}{n}\sum_{j=1}^n \epsilon_jX_j \theta(\bs{z},Z_j)  -\Big(\E[\bs{X}\bs{X}^{\T}]\Big)^{-1} \frac{1}{n}\sum_{j=1}^n \epsilon_jX_j +\zeta(\bs{z}) +O_\P(n^{-1})
\end{align*}
Under $\mathcal{H}_0$, we have $\bs{\Omega}:=\E[\bs{X}\bs{X}^{\T}]=\bs{\Omega}(\bs{z})$ for $\bs{z}\in[0,1]^d$, then
\begin{equation}\label{eq:beta_decomp_LRT}
    \overline{\bs\beta}(\bs{Z}_i)-\widetilde{\bs\beta} =  \bs{\Omega}^{-1}\frac{1}{n}\sum_{j=1}^n \epsilon_jX_j \big[s\theta(\bs{Z}_i,Z_j) - 1\big]  +\zeta(\bs{Z}_i) +O_\P(n^{-1});\quad i\in[d]
\end{equation}
Now for $\Delta_1$ we have
\begin{align*}
    \Delta_1&:=2\sum_{i=1}^n\epsilon_i\bs{X}_i^{\T}(\overline{\bs\beta}(\bs{Z}_i)-\widetilde{\bs\beta})\\
    &=\frac{2}{n}\sum_{i=1}^n\sum_{j=1}^n\epsilon_i\epsilon_j\big[s\theta(\bs{Z}_i,Z_j) - 1\big]\bs{X}_i^{\T}\bs{\Omega}^{-1}X_j   +2\sum_{i=1}^n\epsilon_i\bs{X}_i^{\T} \zeta(\bs{Z}_i) + O_\P(n^ {-1})\sum_{i=1}^n \epsilon_i\bs{X}_i\\
    &=\frac{2}{n}\sum_{i=1}^n\epsilon_i^ 2\big[s\theta(\bs{Z}_i,\bs{Z}_i) - 1\big]\bs{X}_i^{\T}\bs{\Omega}^{-1}\bs{X}_i +\frac{2}{n}\sum_{i\neq j}^n\epsilon_i\epsilon_j\big[s\theta(\bs{Z}_i,Z_j) - 1\big]\bs{X}_i^{\T}\bs{\Omega}^{-1}X_j \\
    &\qquad +2\sum_{i=1}^n\epsilon_i\bs{X}_i^{\T} \zeta(\bs{Z}_i) +O_\P(n^{-1/2})\\
    &=:\Delta_{11} + \Delta_{12} + \mathcal{E}_1
\end{align*}.

Note that $\E[\Delta_{11}] = 2\sigma^2 d_X \big(s\E[\theta(Z_1,Z_1)] - 1\big)$ and $\V[\Delta_{11}]=O(\V[\theta(\bs{Z},\bs{Z})]s^2/n) =O(\E[\theta^2(\bs{Z},\bs{Z})]s^2/n) \asymp [(k/s)^ {2(1+o(1))} n]^{-1}$. Hence
\begin{align*}
    \Delta_{11} =  2\sigma^2 d_X \big(s\E[\theta(Z_1,Z_1)] - 1\big)+ O_\P\left(\frac{(s/k)^ {1+o(1)}}{\sqrt{n}}\right),
\end{align*}
and therefore,
\begin{align*}
    \Delta_1&= 2\sigma^2 d_X \big(s\E[\theta(Z_1,Z_1)] - 1\big)+\frac{2}{n}\sum_{i\neq j}^n\epsilon_i\epsilon_j\big[s\theta(\bs{Z}_i,Z_j) - 1\big]\bs{X}_i^{\T}\bs{\Omega}^{-1}X_j \\
    &\qquad + O_\P\left(\frac{(s/k)^ {1+o(1)}}{\sqrt{n}}\right) +\mathcal{E}_1.
\end{align*}.

For $\Delta_2$, let $T_{ij}:=s\theta(\bs{Z}_i,Z_j) - 1$ for $i,j\in[d]$, use \eqref{eq:beta_decomp_LRT} and collect terms we are left with
\begin{align*}  
\Delta_2&:=\sum_{i=1}^n(\overline{\bs\beta}(\bs{Z}_i)-\widetilde{\bs\beta})^{\T}\bs{X}_i\bs{X}_i^{\T}(\overline{\bs\beta}(\bs{Z}_i)-\widetilde{\bs\beta})\\
 &=\frac{1}{n^2}\sum_{i=1}^n\sum_{j=1}^n\sum_{k=1}^n \epsilon_j\epsilon_k T_{ij}T_{ik}X_j^{\T}\bs{\Omega}^{-1}\bs{X}_i\bs{X}_i^{\T}\bs{\Omega}^{-1} X_k + \mathcal{E}_2. 
\end{align*}
Decompose the first term in the last expression as $\Delta_{21} +  \Delta_{22}$ where
\begin{align*}
    \Delta_{21} &:= \frac{1}{n^2}\sum_{i=1}^n\sum_{j=1}^n \epsilon_j^2 T_{ij}^2X_j^{\T}\bs{\Omega}^{-1}\bs{X}_i\bs{X}_i^{\T}\bs{\Omega}^{-1} X_j \\
    \Delta_{22} &:=\frac{1}{n^2}\sum_{i=1}^n\sum_{j\neq k}^n\epsilon_j\epsilon_k T_{ij}T_{ik}X_j^{\T}\bs{\Omega}^{-1}\bs{X}_i\bs{X}_i^{\T}\bs{\Omega}^{-1} X_k.
\end{align*}
Also
\[
\E[\Delta_{21}] = \frac{\sigma^2}{n^ 2}\sum_{i=1}^n\sum_{j=1}^n\E[T_{ij}^2\mathsf{tr}\big(\bs{\Omega}(Z_j)\bs{\Omega}(\bs{Z}_i)^{-1}\big)].
\]
 Under $\mathcal{H}_0$, we have $\bs{\Omega}(\bs{Z}_i)=\bs{\Omega}$, hence
\[
\E[\Delta_{21}] = \frac{\sigma^2d_X}{n^2}\sum_{i,j}\E[T_{ij}^2] =\frac{\sigma^2d_X}{n^2}(n\E[T_{11}^2] + n(n-1)\E[T_{12}^2]).
\]
Decompose further $\Delta_{22} = \Delta_{221} +\Delta_{222} $ where
\begin{align*}
    \Delta_{221} &:=\frac{1}{n^2}\sum_{i\notin\{j,k\}}^n\sum_{j\neq k}^n\epsilon_j\epsilon_k T_{ij}T_{ik}X_j^{\T}\bs{\Omega}(\bs{Z}_i)^{-1}\bs{X}_i\bs{X}_i^{\T}\bs{\Omega}(\bs{Z}_i)^{-1} X_k\\
    \Delta_{222} &:= \frac{2}{n^2}\sum_{j\neq k}^n\epsilon_j\epsilon_k T_{jj}T_{jk}X_j^{\T}\bs{\Omega}(Z_j)^{-1}\bs{X}_i\bs{X}_i^{\T}\bs{\Omega}(Z_j)^{-1} X_k
\end{align*}
Define
\begin{align*}
H_{jk}&:=\E[T_{ij}T_{ik}\bs{\Omega}(\bs{Z}_i)^{-1}\bs{X}_i\bs{X}_i^{\T}\bs{\Omega}(\bs{Z}_i)^{-1}|Z_j,Z_k]\\
&=\E\Big[T_{ij}T_{ik}\bs{\Omega}(\bs{Z}_i)^{-1}\E\big[\bs{X}_i\bs{X}_i^{\T}|\bs{Z}_i,Z_j,Z_k\big]\bs{\Omega}(\bs{Z}_i)^{-1}|Z_j,Z_k\Big]\\
&=\E\Big[T_{ij}T_{ik}\bs{\Omega}(\bs{Z}_i)^{-1}|Z_j,Z_k\Big]
\end{align*}
where $\bs{Z}'$ is an independent copy of $Z_1,\dots Z_n$.
Then
\begin{align*}
    \Delta_{221} = \frac{2(n-2)}{n^2}\sum_{1\leq j < k\leq n}\epsilon_j\epsilon_k X_j^{\T}H_{jk}X_k + O_\P()
\end{align*}

Furthermore, under the $\mathcal{H}_0$,
\[
H_{jk} = \bs{\Omega}^{-1}\E[T_{ij}T_{ik}|Z_j,Z_k] = \bs{\Omega}^{-1}\tau(Z_j,Z_k),
\]
where $\tau(\bs{z},\bs{z}') := \E[(s\theta(\bs{Z},z) -1)(s\theta(\bs{Z},z') -1)]$.

Putting everything together, under $\mathcal{H}_0$,
\begin{align*}
    \Delta &= 2\sigma^2 d_X \E[T_{11}] + \frac{\sigma^2d_X}{n^2}\sum_{i,j}\E[T_{ij}^2]\\
    &\qquad+\frac{2}{n}\sum_{i\neq j}^n\epsilon_i\epsilon_jT_{ij}\bs{X}_i^{\T}\bs{\Omega}^{-1}X_j +\frac{2(n-2)}{n^2}\sum_{1\leq j < k\leq n}\epsilon_j\epsilon_k X_j^{\T}H_{jk}X_k + O_\P()\\
    &=2\sigma^2 d_X \E[T_{11}] + \frac{\sigma^2d_X}{n^2}\sum_{i,j}\E[T_{ij}^2]\\
    &\qquad + \frac{2}{n}\sum_{i\neq j}^n\epsilon_i\epsilon_j\big[T_{ij} + \tfrac{n-2}{n}\tau(\bs{Z}_i,Z_j)\big]\bs{X}_i^{\T}\bs{\Omega}^{-1}X_j\\
    &\qquad + o_\P(1).
\end{align*}
Let $\eta(\bs{z},\bs{z}'):= s\theta(\bs{z},\bs{z}')-1 + \tfrac{n-2}{n}\tau(\bs{z},\bs{z}')$ for $\bs{z},\bs{z}'\in[0,1]^{d_Z}$ and 
\begin{align*}
\E\big[\epsilon_1^2\epsilon_2^2\eta(Z_1,Z_2)^2(X_1^{\T}\bs{\Omega}^{-1}X_2)^2\big] &= \sigma^4\E\big[\eta(Z_1,Z_2)^2(X_1^{\T}\bs{\Omega}^{-1}X_2)^2\big]\\
&= \sigma^4\E\Big[\eta(Z_1,Z_2)^2 X_1^{\T}\bs{\Omega}^{-1}\E\big[X_2X_2^{\T}|Z_1,Z_2, X_1\big]\bs{\Omega}^ {-1}X_1\Big]\\
&= \sigma^4\E\Big[\eta(Z_1,Z_2)^2 X_1^{\T}\bs{\Omega}^{-1}X_1\Big]\\
&= \sigma^4 d_X\E\Big[\eta(Z_1,Z_2)^2 \Big].
\end{align*}
Then $\nu^{-1}(\Delta - \mu)\cd \mathsf{N}(0,1)$ where
\[
\mu := 2\sigma^2 d_X (s\E[\theta(\bs{Z},\bs{Z})]-1)+\frac{\sigma^2d_X}{n^2} (n\E[T_{11}^2] + n(n-1)/2\E[T_{12}^2])
\]
and
\[
\nu^2 := \frac{4n(n-1)}{n^2}\sigma^4 d_X\E\Big[\eta(Z_1,Z_2)^2 \Big].
\]
Finally, note that $\tau(\bs{Z},\bs{Z}')\lesssim (s/k)^{2(1+o(1))}$.

\section{Auxiliary Lemmas and Proofs}

\begin{lemma}[Upper bounds on the leaf diameter]\label{lem:upper_bound_diameter} 
For $\bs{z}\in[0,1]^d$, let $R(\bs{z},\omega)$ denote the unique leaf of a tree grown by Algorithm \ref{alg:cap} that contains $z$ and $M_j(\bs{z})$ is the total
number of splits along $Z_j$ to form $R(\bs{z},\omega)$ for $j\in[d_Z]$. Then
\begin{itemize}
    \item[(a)] For $\delta\in(0,1)$
    \begin{equation*}
    \P\left( M_{j}(\bs{z})\leq  \tfrac{(1-\delta)\pi}{d_Z} \frac{\log(s/(2k-1))}{\log(1/(1-\alpha))}\right)\leq \left(\frac{s}{2k-1}\right)^{-\frac{\delta^2\pi^2}{2d_Z^2\log(1/(1-\alpha))}}.
    \end{equation*}
    \item[(b)] For $\delta\in(0,1)$,
    \begin{equation}\label{eq:diameter_bound_pointwise}
        \P\left(\textnormal{diam}(R(\bs{z},\omega)) \gtrsim \left(\frac{k}{s}\right)^ {(1-\delta)\frac{K(\alpha)\pi}{d_Z}}\right)\lesssim  \left(\frac{k}{s}\right)^{\frac{\delta^2\pi^2}{2(1+o(1))^2d_Z^ 2\log (1/\alpha)}} + \frac{1}{s},
    \end{equation}
    where $K(\alpha):= \frac{\log ((1-\alpha)^{-1})}{\log(1/\alpha)}$ for $\alpha\in (0,0.5)$ and $\pi\in (0,1]$;
    \item[(c)] For $t>0$ and $1\leq  p\leq q<\infty$
    \begin{equation}\label{eq:diameter_bound_Lp}
        \P\left(\Big(\E_{\bs{Z}'} \Big[\textnormal{diam}(R(\bs{Z}',\omega))^p\Big]\Big)^{1/p}\gtrsim t\left(\tfrac{k}{s}\right)^{(1-\delta^*)\frac{K(\alpha)\pi}{d_Z}}\right)\lesssim \frac{1}{t^q} + \frac{1}{s},
    \end{equation}
    for some $\delta^*\in (0,1)$ defined by \eqref{def:delta_star}, where $\E_{\bs{Z}'}(\cdot)$ denotes the expectation with respect $\bs{Z}'$ which equals in distribution $\bs{Z}$ but in independent of everything else;
    
    \item[(d)] $\P\left(\sup_{bs{z}\in[0,1]^d} \textnormal{diam}(R(\bs{z},\omega))\geq 1\right)\to 1$ for $d_Z\geq 2$  provided that $k=o(s)$ as $s\to\infty$.
\end{itemize}
\end{lemma}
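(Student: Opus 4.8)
The plan is to follow the split-counting strategy standard for honest, $\alpha$-regular forests: first bound deterministically how many cuts the root-to-leaf path must contain, then randomize over the coordinate chosen at each cut to control the per-coordinate count $M_j(\bs z)$, and finally convert split counts into side lengths through the bounded-density Assumption~\ref{ass:main}(b). Parts~(a)--(b) are the heart; (c) is obtained by integration, and (d) is a separate lower-bound statement. For part~(a) I would use the two deterministic consequences of the construction: along the path to $R(\bs z,\omega)$ each split multiplies the node's $\mathcal B$-count by a factor in $[\alpha,1-\alpha]$, and the terminal count lies in $[k,2k-1]$. With $N_0=\lceil s/2\rceil$ the root count, the inequality $N_0\alpha^{M(\bs z)}\le 2k-1$ forces the total number of splits $M(\bs z)=\sum_{j}M_j(\bs z)$ to obey $M(\bs z)\ge \log(N_0/(2k-1))/\log(1/\alpha)\gtrsim \log(s/(2k-1))/\log(1/\alpha)$. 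Since the splitting coordinate at each node is drawn from the external randomness $\omega$ independently of the data and picks direction $j$ with probability at least $\pi/d_Z$, conditional on the realized topology $M_j(\bs z)$ stochastically dominates a $\mathrm{Binomial}(M(\bs z),\pi/d_Z)$ variable; a multiplicative Chernoff lower-tail bound $\P(\mathrm{Bin}\le(1-\delta)\mu)\le\exp(-\delta^2\mu/2)$ then turns the $\log(s/k)$-sized mean into the stated polynomial-in-$(s/(2k-1))$ decay.

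For part~(b) I would translate counts into diameters. A split on coordinate $j$ leaves the other sides fixed, so the $j$-side length equals the retained Lebesgue fraction; the $\alpha$-regularity of \emph{counts} becomes regularity of \emph{lengths} once the empirical measure is compared with the population measure through $1/C\le f\le C$, giving $\mathrm{diam}(R)_j\le (1-\alpha)^{(1+o(1))M_j(\bs z)}$ on a high-probability event. Inserting the part-(a) floor on $M_j(\bs z)$ and writing $(1-\alpha)^{M_j}=\exp(-M_j\log(1/(1-\alpha)))$ yields exactly the exponent $(1-\delta)K(\alpha)\pi/d_Z$, because the ratio $\log(1/(1-\alpha))/\log(1/\alpha)$ equals $K(\alpha)$. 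A union bound over the $d_Z$ coordinates promotes the per-coordinate tail to the bound on $\mathrm{diam}(R)=\max_j\mathrm{diam}(R)_j$; the additive $1/s$ collects the failure probability of the empirical-to-population measure concentration (a Bernstein estimate), which is also what the $(1+o(1))$ in the exponent absorbs.

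For part~(c) I would pass from the pointwise tail to the $\mathcal L^p(\bs Z')$ statement by a moment computation. Because $q\ge p$, Jensen applied to the convex map $t\mapsto t^{q/p}$ inside $\E_{\bs Z'}$ gives $\big(\E_{\bs Z'}[\mathrm{diam}(R(\bs Z',\omega))^p]\big)^{q/p}\le \E_{\bs Z'}[\mathrm{diam}(R(\bs Z',\omega))^q]$, and Fubini then lets me bound the tree-expectation $\E\,\E_{\bs Z'}[\mathrm{diam}^q]=\E_{\bs Z'}\E[\mathrm{diam}^q]$ by integrating the part-(b) tail (layer-cake) at each fixed $\bs Z'$, choosing the free parameter to be the balancing value $\delta^*$ of \eqref{def:delta_star}. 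Markov's inequality at order $q$ applied to the random variable $\big(\E_{\bs Z'}[\mathrm{diam}^p]\big)^{1/p}$ delivers the $t^{-q}$ tail, while the $1/s$ survives additively from the bound in (b).

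Part~(d) runs in the opposite direction and needs a fresh argument. A leaf can have diameter strictly below $1$ only if its root path splits at least once on every coordinate, so I would lower-bound the expected number of leaves that are never split along a fixed coordinate $j$. There are $\asymp s/(2k)\to\infty$ leaves, each path has length $\gtrsim\log(s/k)$, and the probability a given coordinate is missed along such a path is $\asymp(1-\pi/d_Z)^{\log(s/k)}=(s/k)^{-c}$ with $c=\log(1/(1-\pi/d_Z))/\log(1/\alpha)<1$, the strict inequality holding precisely because $d_Z\ge 2$ and $\alpha<1/2$ force $\pi/d_Z<1-\alpha$; hence the expectation diverges like $(s/k)^{1-c}$. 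A second-moment bound (or a direct construction of shallow leaves adjacent to a face of the cube) then shows at least one such leaf exists with probability tending to one, so $\sup_{\bs z}\mathrm{diam}(R(\bs z,\omega))=1$. I expect the two genuinely delicate points to be the coupling in (a)---the split directions are not an a priori i.i.d.\ sequence because the path length $M(\bs z)$ is data-dependent, so the Chernoff argument must be run against the deterministic floor via a stopping-time coupling to an independent $\mathrm{Bernoulli}(\pi/d_Z)$ stream---and the variance control in (d), where the events ``leaf misses coordinate $j$'' are dependent through the shared tree.
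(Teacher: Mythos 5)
Your proposal follows the paper's proof essentially step for step. In (a) you use the same two ingredients the paper does: the deterministic bracketing of the total split count $M(\bs{z})$ via $\alpha$-regularity and the $[k,2k-1]$ leaf-size condition, followed by a Chernoff lower-tail bound on $M_j(\bs{z})$ conditionally on $M(\bs{z})$ (the paper simply writes $M_j(\bs{z})\mid M(\bs{z})\sim\mathrm{Binomial}(M(\bs{z}),q_j)$ with $q_j\ge\pi/d_Z$; your stopping-time coupling is a more careful justification of the same step). In (b) the conversion of split counts into side lengths on a high-probability event is exactly what the paper imports as Lemmas 12--13 of \cite{WW2016}, with the $(1+o(1))$ factors in the exponent and the additive $1/s$ playing precisely the roles you assign them, then a union bound over the $d_Z$ coordinates. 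In (c) the paper uses the same truncated layer-cake estimate $\E[\mathrm{diam}^q]\le r^q+\P(\mathrm{diam}\ge r)$, the balancing choice $\delta^*$ of \eqref{def:delta_star}, and then Jensen, Fubini and Markov at order $q$, as you propose.

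Two points of comparison are worth recording. First, in (a) your deterministic floor $M(\bs{z})\ge \log(s/(2k-1))/\log(1/\alpha)$ is the inequality actually implied by $s\,\alpha^{M}\le N_M\le 2k-1$; the paper instead writes the chain $k\le s\alpha^{M}\le s(1-\alpha)^{M}\le 2k-1$, whose outer inequalities are reversed relative to what $\alpha$-regularity delivers, and thereby obtains the larger floor $\log(s/(2k-1))/\log(1/(1-\alpha))$ appearing in the statement of (a). Note that it is your smaller floor that, combined with $\mathrm{diam}_j(R)\le(1-\alpha)^{(1+o(1))M_j}$, produces exactly the exponent $(1-\delta)K(\alpha)\pi/d_Z$ claimed in (b) (and the $\log(1/\alpha)$ in the tail exponent of \eqref{eq:diameter_bound_pointwise}); the price is that with your derivation part (a) itself holds with $\log(1/\alpha)$ in place of $\log(1/(1-\alpha))$ in both the threshold and the tail exponent, a slightly weaker tail than literally stated, so you should not claim the displayed constants of (a) verbatim. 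Second, in (d) the paper asserts that the events $\{M_{j,\ell}=0\}$ are conditionally independent across leaves and lower-bounds $\P(\min_\ell M_{j,\ell}=0)$ by $1-\prod_{\ell}\bigl[1-(1-q_j)^{M_\ell}\bigr]$, whereas you flag this dependence (leaves share path segments near the root) and substitute a first-moment count --- roughly $(s/k)^{1-c}$ leaves missing coordinate $j$, with $c=\log(1/(1-\pi/d_Z))/\log(1/\alpha)<1$ precisely when $1-\pi/d_Z>\alpha$, guaranteed by $d_Z\ge2$ and $\alpha<1/2$ --- backed by a second-moment (Paley--Zygmund) bound. Your route is more work but avoids an independence claim the paper does not justify; the underlying counting ($L\asymp s/k$ leaves, per-leaf miss probability $(s/k)^{-c}$) is identical in both arguments.
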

\begin{remark}
    In the proof of part $(c)$ in Lemma \ref{lem:upper_bound_diameter}, we show that with probability approaching 1, there exists a leaf $R_\ell$ that is never split along the variable $Z_j$, which in turn means that the diameter of this leaf is at least 1 with probability approaching 1. Hence, a uniform convergence for a tree grown by Algorithm \ref{alg:cap} is not possible with high probability as long as it depends on the diameter to shrink to zero as $s\to\infty$.
\end{remark}
\begin{proof}[of Lemma \ref{lem:upper_bound_diameter}]
Let  $M(\bs{z}):=\sum_{j\in[d_Z]} M_j(\bs{z})$ denote the total number of (random) splits to form the leaf $R(\bs{z},\omega)$. Since the variable to split is chosen at random, we have 
\[
M_j(\bs{z})|M(\bs{z})\sim\text{Binomial}(M(\bs{z}),q_j);\qquad j\in[d_Z],\; bs{z}\in[0,1]^{d_Z},
\]
where $q_j$ is the probability of a split in $Z_j$ which is lower bounded by $\pi/d_Z>0$ by assumption.  

Let $N^\mathcal{B}(\bs{z})$ be the number of observations of the sample $\mathcal{B}$ on the $R(\bs{z},\omega)$. Then, by the minimum leaf size condition on the tree construction, we have 
\[
k\leq N^{\mathcal{B}}(\bs{z}) \leq 2k-1;\qquad  bs{z}\in[0,1]^{d_Z}.
\]
Also, $N^\mathcal{B}(\bs{z}) =s\prod_{\ell=1}^{M(\bs{z})}\widetilde{\alpha}_\ell$ for some sequence $\{\widetilde{\alpha}_\ell\}$ such that $\alpha\leq \widetilde{\alpha}_\ell\leq (1-\alpha)$ for $\ell\in[M(\bs{z})]$ by the $\alpha$-regularity condition. Thus 
\[
k\leq s\alpha^{M(\bs{z})}\leq s(1-\alpha)^{M(\bs{z})}\leq {2k-1};\qquad bs{z}\in[0,1]^{d_Z}.
\]
Hence, from the last expressions, we obtain a lower bound on the number of splits of any leaf given by
\begin{equation}\label{eq:num_splits_bounds}
    \underline{M}:=\frac{\log(s/(2k-1))}{\log(1/(1-\alpha))}\leq M(\bs{z});\qquad ;\qquad bs{z}\in[0,1]^{d_Z}.
\end{equation}

Recall the Chernoff's inequality. Let  $S_m = W_1 + \dots +W_m$ where $W_1,\dots, W_m$ are independent and  $W_j\in[0,1]$ for $j\in[m]$, then for every $t>0$,
\begin{equation}\label{eq:chernoff_inequality}
    \P(S \leq  \E[S] -t)\leq \exp(-t^2/2m).
\end{equation}
Use \eqref{eq:num_splits_bounds} twice combined with \eqref{eq:chernoff_inequality} to obtain for $\delta\in(0,1)$,
\begin{align*}
    \P\left( M_{j,\ell}\leq  \tfrac{(1-\delta)\pi}{d_Z} \underline{M}\right)&\leq \E\left[\P\left( M_{j,\ell}\leq  \tfrac{(1-\delta)\pi}{d_Z} M_\ell \Big| M_\ell\right)\right]\\
    &\leq\E\left[\exp\left(-\frac{\delta^2\pi^2 M_\ell}{2d_Z^2} \right)\right]\\
    &\leq \exp\left(-\frac{\delta^2\pi^2 \underline{M}}{2d_Z^2} \right)\\ 
    &= \left(\frac{s}{2k-1}\right)^{-\frac{\delta^2\pi^2}{2d_Z^2\log(1/(1-\alpha))}},
\end{align*}
which concludes the proof of part $(a)$.

For $(b)$, we use \eqref{eq:diameter_Lebesgue_measure_relation} to write that on $\mathcal{E}_j$ and for $x>0$,
   \begin{align*}
        \P\left(\text{diam}_j(R_\ell)\geq  x|\mathcal{E}_j\right)\leq \P\left(M_{j,\ell}\leq  \frac{-\log x}{(1+o(1))\log(1/(1-\alpha+o(1)))}\right).
    \end{align*}
Set $\bs{X}=\left(\tfrac{2k-1}{s}\right)^ {(1-\delta)\pi\frac{K(\alpha)}{d_Z}}$ and use the last inequality to write
\begin{align*}
    \P\left(\text{diam}_j(R_\ell) \geq \left(\tfrac{2k-1}{s}\right)^ {(1-\delta)\pi\frac{K(\alpha)}{d_Z}}\Big|\mathcal{E}_j\right)&\leq\P\left(M_{j,\ell} \leq \frac{(1-\delta)\pi\underline{M}}{\I_1d_Z}\Big|\mathcal{E}_j\right)\\
    &=\E\left[\P\left(M_{j,\ell} \leq \frac{(1-\delta)\pi\underline{M}}{\I_1d_Z}|\mathcal{E}_j, M_\ell\right)\right]\\
    &=\E\left[\P\left(M_{j,\ell} \leq \frac{(1-\delta)\pi\underline{M}}{\I_1d_Z}| M_\ell\right)\right]\\
    &\leq \left(\frac{s}{2k-1}\right)^{-\frac{\delta^2\pi^2}{2\I_1(\alpha)^2d_Z^ 2\log (1/\alpha)}},
\end{align*}
and from the union bound
\begin{align*}
    \P\left(\text{diam}_jR((\bs{z},\omega)) \geq \left(\tfrac{s}{2k-1}\right)^ {(1-\delta)\frac{K(\alpha,\pi)}{d_Z}}\right)\leq  \left(\frac{s}{2k-1}\right)^{-\frac{\delta^2\pi^2}{2\I_1(\alpha)^2d_Z^ 2\log (1/\alpha)}} + \frac{1}{s}.
\end{align*}
Therefore, $(b)$ follows from the union bound and the last expression because
\begin{align*}
    \P\left(\text{diam}(R(\bs{z},\omega)) \geq \sqrt{d_Z}\left(\tfrac{s}{2k-1}\right)^{(1-\delta)\frac{K(\alpha,\pi)}{d_Z}}\right) &\leq \P\left(\max_{j\in[d_Z]}\text{diam}_j(R(\bs{z},\omega)) \geq \sqrt{d_Z}\left(\tfrac{s}{2k-1}\right)^{(1-\delta)\frac{K(\alpha,\pi)}{d_Z}}\right)\\
    &\leq d_Z\max_{j\in[d_Z]}\P\left(\sup_{bs{z}\in[0,1]^d}\text{diam}_j(R(\bs{z},\omega)) \geq \sqrt{d_Z}\left(\tfrac{s}{2k-1}\right)^{(1-\delta)\frac{K(\alpha,\pi)}{d_Z}}\right).
\end{align*}

For $(c)$, first note that, for $q\geq 1$ and $r>0$,
\begin{align*}
    \E\big[\text{diam}(R(\bs{z},\omega))^q\big] &= \int_0^1\P\big[\text{diam}(R(\bs{z},\omega))\geq t^{1/q} \big]dt\\
    &= \int_0^{r^q}\P\big[\text{diam}(R(\bs{z},\omega))\geq t^{1/q} \big]dt + \int_{r^q}^1\P\big[\text{diam}(R(\bs{z},\omega))\geq t^{1/q} \big]dt\\
    &\leq r^q +  \P\big[\text{diam}(R(\bs{z},\omega))\geq r \big].
\end{align*}
Set $r=\sqrt{d_Z}\left(\tfrac{2k-1}{s}\right)^{(1-\delta)\frac{K(\alpha)\pi}{d_Z}}$ and use \eqref{eq:diameter_bound_pointwise} to write
    \begin{align*}
    \E\big[\text{diam}(R(\bs{z},\omega))^q|\mathcal{E}\big] &\lesssim \left(\frac{k}{s}\right)^{q(1-\delta)\frac{K(\alpha)\pi}{d_Z}} +  \left(\frac{k}{s}\right)^{\frac{\delta^2\pi^2}{2\I_1(\alpha)^2d_Z^ 2\log (1/\alpha)}}.
    \end{align*}
Note that $\delta\mapsto q(1-\delta)\frac{K(\alpha)\pi}{d_Z}$ is decreasing and vanishes as $\delta\uparrow 1$ and $\delta\mapsto \frac{\delta^2\pi^2}{2\I_1(\alpha)^2d_Z^ 2\log (1/\alpha)}$ is non-negative increasing, so there is a $\delta^*\in(0,1)$ function of $q,\alpha,\pi, d_Z$ such that that those two functions agree,i.e.,
\begin{equation}\label{def:delta_star}
    q(1-\delta^*)\frac{K(\alpha))}{d_Z}=\frac{{\delta^ *}^2\pi^2}{2\I_1(\alpha)^2d_Z^ 2\log (1/\alpha)}.
\end{equation}
Therefore
\begin{align*}
    \E\big[\text{diam}(R(\bs{z},\omega))^q|\mathcal{E}\big] &\lesssim \left(\frac{k}{s}\right)^{q(1-\delta^*)\frac{K(\alpha)\pi}{d_Z}}. 
\end{align*}

Set temporarily $D_p:=\Big(\E_{\bs{Z}'} \Big[\textnormal{diam}(\bs{Z}',\omega)^p\Big]\Big)^{1/p}$. For $1\leq p\leq q<\infty$ and $x>0$, we have, by Markov's inequality followed by Jensen's inequality and Fubini theorem,
        \begin{align*}
            \P(D_p\geq x|\mathcal{E}) = \P(D_p^q\geq \bs{X}^q|\mathcal{E})&\leq \frac{\E\big[(D^p_p)^{q/p}|\mathcal{E}\big]}{\bs{X}^q}\\
            &\leq \frac{\E\Big[\E_{\bs{Z}'} \Big[\text{diam}(R(\bs{Z}',\omega))^q\Big]|\mathcal{E}\Big]}{\bs{X}^q} = \frac{\E_{\bs{Z}'}\Big[\E \Big[\text{diam}(R(\bs{Z}',\omega))^q|\mathcal{E}\Big]\Big]}{\bs{X}^ q}.
        \end{align*}
          Set $x  = t\sqrt{d_Z}\left(\tfrac{2k-1}{s}\right)^{(1-\delta^*)\frac{K(\alpha)\pi}{d_Z}}$ for $t >0$ to conclude that for  $1\leq p\leq q<\infty$,
    \[
        \P\left(D_p\gtrsim t\left(\tfrac{k}{s}\right)^{(1-\delta^*)\frac{K(\alpha)\pi}{d_Z}}\right)\leq \P(D_p\geq x|\mathcal{E}) + \P(\mathcal{E}^c)\lesssim \frac{1}{t^q} + \frac{d_Z}{s}.
        \]
which completes the proof of $(c)$.

Finally for $(d)$, let $M_j(\bs{z},\omega)$ denote the number of splits in $R(\bs{z},\omega) $ along the variable $j$. Now, if $\inf_{bs{z}\in[0,1]^d} M_j(\bs{z},\omega)=\min_{\ell\in[L]} M_{j,\ell}=0$ for some $j\in[d_Z]$ then $\sup_{bs{z}\in[0,1]^d} \text{diam}(R(\bs{z},\omega))\geq 1$ thus
\begin{align*}
    \P\left(\sup_{bs{z}\in[0,1]^d} \text{diam}(R(\bs{z},\omega))\geq 1\right)\geq \P\left(\inf_{bs{z}\in[0,1]^d} M_{j,\ell}=0\right)
\end{align*}
Also, note that $M_{j,\ell}$ and $M_{j,\ell'}$ are independent conditonally on $(M_\ell,M_{\ell'})$ for $\ell\neq \ell'$. 
\begin{align*}
    \P\left(\min_{\ell\in[L]} M_{j,\ell}=0\right) &= 1 - \prod_{\ell=1}^L \big[1-\P(M_{j,\ell}=0)\big]\\
    &= 1 - \prod_{\ell=1}^L \big[1-(1-q_j)^{M_\ell}\big]\\
    &\geq  1 - \big[1-(1-\pi/d_Z)^{\overline{M}}\big]^L\\
    &= 1 - \left[1 - (s/k)^{\frac{\log(1-\pi/d_z)}{\log 1/\alpha}}\right]^L
\end{align*}
Since $L\leq s/2k$ let $\bs\beta := \frac{\log 1/\alpha}{\log(1/(1-\pi/d_z))} $ and note that since $1-\pi/d_Z\geq 1 - 1/dz>\alpha$ for $d_Z\geq 2$ we have that $\bs\beta>1$. Set $v=(s/k)^{1/\bs\beta}\to\infty$ then
\[
\liminf_{s\to\infty} \P\left(\min_{\ell\in[L]} M_{j,\ell}=0\right)
\geq 1 - \lim_{s\to\infty}(1-(s/k)^{-1/{\bs\beta}})^{s/(2k)} = 1 - \lim_{v\to\infty}(1-1/v)^{v^{\bs\beta}/2}
\]
Since $1-x\leq \exp(-x)$ we have
\[
\liminf_{s\to\infty} \P\left(\min_{\ell\in[L]} M_{j,\ell}=0\right)
\geq  1 - \lim_{v\to\infty}\exp(-v^{(\bs\beta-1)/2}) = 1,
\]
which completes the proof of part $(d)$.
\end{proof}

\begin{lemma}[Lower bounds on the number of observations in the leaves]\label{lem:lower_bound} 
Let $L$ denote the number of leaves in a tree  constructed according to Algorithm \ref{alg:cap} and
$N:=(N_1,\dots, N_L)$ where $N_\ell$ is the number of observation of the sample $\mathcal{A}$  the $\ell$-th leave for $\ell\in [L]$. Then for $k\in\N$

\begin{itemize}
    \item[(a)] $\liminf_{s\to\infty}\P(N_\ell =0)\geq  e^{-2\overline{f}(2k-1)}$ for $\ell\in[L]$;
    \item[(b)] $\lim_{s\to\infty}\P(\min_{\ell\in[L]} N_\ell =0) =1$;
\end{itemize}
Also,
\begin{itemize}
    \item[(c)]  If $k\gtrsim s^\epsilon$ for some  $\epsilon\in(0,1)$ then $k(k/s)^\epsilon\lesssim_\P N_\ell\lesssim_\P k(k/s)^{-\epsilon}$ for $\ell\in[L]$;
    \item[(d)] If $k\gtrsim s^{1/2+\epsilon}$ for some  $\epsilon\in(0,1/2)$ then $\min_{\ell\in[L]} N_\ell\gtrsim_\P k(k/s)^\epsilon$.
\end{itemize}

Let  $K(\alpha):= \frac{\log ((1-\alpha)^{-1})}{\log(1/\alpha)}$ for $\alpha\in (0,0.5)$ and $\pi\in (0,1]$. For $k\in[s]$ and $\delta\in(0,1)$
and
\begin{align}
    \P\left(|\mathcal{A}(\bs{z},\omega)| \leq (1-\delta)\underline{f} s\left(\frac{s}{k}\right)^{-\frac{\I_2(\alpha)}{K(\alpha)}}\right)&\leq \frac{1}{\delta^2\underline{f} s(s/k)^{-\frac{\I_2(\alpha)}{K(\alpha)}}}\label{eq:obs_lower_bound};\\
    \P\left(|\mathcal{A}(\bs{z},\omega)|\geq (1+\delta) s \overline{f}\left(\frac{s}{2k-1}\right)^{-\I_1(\alpha)K(\alpha)}\right)&\leq \frac{1+\delta}{\delta^2s\underline{f} \left(\frac{s}{k}\right)^{-\frac{\I_2(\alpha)}{K(\alpha)}}}\label{eq:obs_upper_bound},
\end{align}
where $\underline{f},\overline{f}$ are the lower and upper bound on the density of $\bs{Z}$ respectively, $\I_1(\alpha):=(1+o(1))\frac{\log(1/(1-\alpha +o(1)))}{\log(1/(1-\alpha))}\to 1$, $\I_2(\alpha):=(1+o(1))\frac{\log(1/(\alpha +o(1)))}{\log(1/\alpha)}\to 1$.

In particular, if $s\to\infty$ and $k\asymp s^{\eta}$ for $\eta\in (1- K(\alpha,\pi), 1)\subseteq (0,1)$ then the right-hand side of  \eqref{eq:obs_lower_bound}-\eqref{eq:obs_upper_bound} vanishes and we conclude
\[
\textnormal{diam}(R(\bs{z},\omega)) \lesssim_\P s^{-\frac{ (1-\eta)\pi K(\alpha)}{2d_Z}}\quad\text{and}\quad  s^{1-\frac{1-\eta}{K(\alpha)}}\lesssim_\P|\mathcal{A}(\bs{z},\omega)|\lesssim_
\P s^{1-(1-\eta)K(\alpha)}.
\]
\end{lemma}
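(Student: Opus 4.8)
The plan is to reduce every claim to the conditional law of the $\mathcal{A}$-counts given the tree, and then read off each part from a binomial/multinomial computation. By honesty the partition $\{R_1,\dots,R_L\}$ is a measurable function of $\mathcal{B}$ and $\omega$ and is independent of $\mathcal{A}$, so conditional on the partition the vector $(N_1,\dots,N_L)$ is multinomial with $|\mathcal{A}|=\lfloor s/2\rfloor$ trials and cell probabilities $p_\ell:=\P(\bs{Z}\in R_\ell\mid\mathcal{B},\omega)$; in particular $N_\ell\sim\mathrm{Binomial}(|\mathcal{A}|,p_\ell)$ and, marginally, $|\mathcal{A}(\bs{z},\omega)|\sim\mathrm{Binomial}(|\mathcal{A}|,p(\bs{z},\omega))$. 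The genuinely hard first step is to pin down two-sided bounds on $p_\ell$. I would combine the $\alpha$-regularity and the leaf-size window $k\le N^{\mathcal{B}}_\ell\le 2k-1$, which control the number of splits $M_\ell$ (cf.\ \eqref{eq:num_splits_bounds} and the companion upper bound used in Lemma \ref{lem:upper_bound_diameter}), with a uniform concentration of the empirical $\mathcal{B}$-measure over the finite-VC class of reachable axis-aligned cells. On that high-probability event each split multiplies the \emph{population} mass by a factor in $[\alpha-o(1),1-\alpha+o(1)]$, so taking the product over $M_\ell\in[\underline M,\overline M]$ splits yields $\underline f\,(s/k)^{-\I_2(\alpha)/K(\alpha)}\lesssim p_\ell\lesssim \overline f\,(s/k)^{-\I_1(\alpha)K(\alpha)}$, with the $\I_1,\I_2$ factors being exactly the correction incurred when the empirical regularity constant $\alpha$ is replaced by $\alpha+o(1)$ in the population measure.

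For parts (a)--(b) only the upper mass bound is needed. Since the leaf carries at most $2k-1$ of the $\mathcal{B}$-points and the density is bounded by $\overline f$, on the concentration event $|\mathcal{A}|p_\ell\le\overline f(2k-1)$, whence $\P(N_\ell=0\mid\mathcal{B},\omega)=(1-p_\ell)^{|\mathcal{A}|}\ge e^{-2|\mathcal{A}|p_\ell}\ge e^{-2\overline f(2k-1)}$ using $1-x\ge e^{-2x}$ on $[0,\tfrac12]$; taking $\liminf$ gives (a). For (b) I would invoke that multinomial counts are negatively associated: since $\1\{N_\ell\ge1\}$ is coordinatewise increasing, $\P\big(\bigcap_\ell\{N_\ell\ge1\}\mid\mathcal{B},\omega\big)\le\prod_\ell\P(N_\ell\ge1\mid\mathcal{B},\omega)\le(1-e^{-2\overline f(2k-1)})^{L}$, and because each leaf holds at most $2k-1$ of the $\lceil s/2\rceil$ points we have $L\gtrsim s/k\to\infty$; taking expectations and letting $s\to\infty$ yields (b).

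Parts (c)--(d) are the quantitative regime and rest on multiplicative Chernoff bounds for $N_\ell$ about its conditional mean $\mu_\ell=|\mathcal{A}|p_\ell$. When $k\gtrsim s^\epsilon$ the relative VC error on a single reachable cell is smaller than $(k/s)^{\mp\epsilon}$, so $p_\ell$ lies in $[(k/s)^{1+\epsilon},(k/s)^{1-\epsilon}]$ with probability tending to one, hence $\mu_\ell\in[k(k/s)^\epsilon,k(k/s)^{-\epsilon}]$ and $\mu_\ell\to\infty$ polynomially; Chernoff then concentrates $N_\ell$ on this interval, giving (c). For the uniform statement (d) I would union-bound the lower tail over the $L\lesssim s/k$ leaves, which forces the per-leaf failure probability $e^{-c\mu_\ell}$ to beat $k/s$. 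The stronger requirement $k\gtrsim s^{1/2+\epsilon}$ is precisely what makes the empirical-to-population transfer hold \emph{uniformly} over all leaves (it needs $k/s\gg\sqrt{(\log s)/s}$) while simultaneously ensuring $c\mu_\ell\gg\log L$.

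Finally, the two displayed tail inequalities are one-sided Chebyshev bounds applied conditionally: $\P(N\le(1-\delta)\mu\mid\mathcal{B},\omega)\le \V[N]/(\delta^2\mu^2)\le 1/(\delta^2\mu)$ together with its upper-tail analogue, into which I substitute the mass bounds of the first step (producing the $\I_2/K(\alpha)$ and $\I_1 K(\alpha)$ exponents) and $|\mathcal{A}|\asymp s$. Setting $k\asymp s^\eta$ with $\eta\in(1-K(\alpha,\pi),1)$ makes $s(s/k)^{-\I_2(\alpha)/K(\alpha)}=s^{1-(1-\eta)\I_2(\alpha)/K(\alpha)}\to\infty$, so both right-hand sides vanish for fixed $\delta$; this gives $s^{1-(1-\eta)/K(\alpha)}\lesssim_\P|\mathcal{A}(\bs{z},\omega)|\lesssim_\P s^{1-(1-\eta)K(\alpha)}$, and combining with Lemma \ref{lem:upper_bound_diameter}(b) evaluated at $k\asymp s^\eta$ delivers the stated diameter rate. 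The main obstacle is unmistakably the first step: obtaining the sharp two-sided control of the population leaf mass $p_\ell$ with the correct $K(\alpha)$-type exponents uniformly over data-dependent cells, since everything downstream is a conditional binomial estimate once those bounds are established.
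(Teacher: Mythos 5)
Your proposal is correct and follows the same architecture as the paper's proof: condition on the tree (honesty) so that the leaf counts are multinomial/binomial, derive two-sided bounds on the population leaf masses $p_\ell$ from $\alpha$-regularity plus the empirical-to-population transfer on axis-aligned cells (the paper gets this from Lemmas 12--13 of Wager--Walther, which is exactly the VC-type concentration you describe, producing the $\I_1,\I_2$ corrections), and then read everything off from binomial tails. The differences are tactical and worth noting. For (a) you use the elementary inequality $(1-p)^m \ge e^{-2mp}$ where the paper passes to the Poisson limit of the binomial; both give $e^{-2\overline f(2k-1)}$. For (b) you cite negative association of multinomial counts where the paper proves the product inequality $\P(\bigcap_\ell\{N_\ell\ge 1\})\le\prod_\ell\P(N_\ell\ge 1)$ by hand via induction; these are equivalent. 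For (c)--(d) you use multiplicative Chernoff bounds where the paper uses Feller's second-moment (Chebyshev-type) binomial tail bounds; your choice is actually stronger, and it reveals that the hypothesis $k\gtrsim s^{1/2+\epsilon}$ in (d) is an artifact of the paper's polynomial tail bound --- with Chernoff, the union bound over $L\lesssim s/k$ leaves already closes under $k\gtrsim s^{\epsilon}$, since $\mu_\ell\gtrsim s^{\epsilon^2}$ makes the exponential failure probability beat $\log L$. (Your stated rationale for the $s^{1/2+\epsilon}$ condition --- uniformity of the VC transfer --- is not quite right, since the Wager--Walther event is already uniform over all leaves with probability $1-O(d_Z/s)$; but proving the claim under a stronger-than-necessary hypothesis is harmless.) The displayed tail inequalities are, as you say, exactly the conditional one-sided Chebyshev bounds with the mass bounds substituted in, which is what the paper does.
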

\begin{remark}
    From parts (a) and (b) in Lemma \ref{lem:lower_bound}, we conclude that every leaf from a tree grown by Algorithm \ref{alg:cap} with a minimum leaf parameter $k\in\N$ is empty with probability at least $e^{-2(2k-1)}$ as $s\to\infty$. Therefore, to ensure that the leaves have a minimum number of observations with high probability, we need to have $k\to\infty$ as $s\to\infty$. In fact, by ignoring the $o(1)$ terms in \eqref{eq:diameter_Lebesgue_measure_relation}, we have that $N_\ell\asymp k$ as $k\to\infty$ and $s\to\infty$.
\end{remark}
\begin{proof}[of Lemma \ref{lem:lower_bound}]

Let $L$ denote the number of leaves in a tree and
$N^\mathcal{A}:=(N^\mathcal{A}_1,\dots, N^\mathcal{A}_L)$ where $N^\mathcal{A}_\ell$ is the number of observation of the sample $\mathcal{A}$ on the $\ell$-th leave for $\ell\in [L]$. Also let $p=(p_1,\dots,p_L)$ where $p_\ell:=\P(\bs{Z}\in R_\ell)$. Then
\[
N^\mathcal{A}\sim\text{Multinomial}_L(s,p);\qquad N^\mathcal{A}_\ell \sim\text{Binomial}(s,p_\ell)\quad \ell\in[L].
\]
Let $F_{n',p'}$ denote the cdf of a Binomial distribution with $n'$ trials and probability of success $p'$. Then 
\[
\P(N_\ell^ {\mathcal{A}}\leq x) = F_{s,p_\ell}(\bs{x});\quad  x\in\R,\; \ell\in[L].
\]
Also, since the density of $\bs{Z}$ is bounded away from zero and infinity, we have that  
\[
\underline{f}\prod_{j=1}^{d_Z} \text{diam}_j (R_\ell)=\underline{f}\mu(R_\ell)\leq p_\ell\leq \overline{f} \mu(R_\ell)= \overline{f}\prod_{j=1}^{d_Z} \text{diam}_j (R_\ell);\qquad \ell\in[L].
\]

Let $M_j(\bs{z},\omega)$ denote the number of splits in $Z_j$ forming $R(\bs{z},\omega)$ for $j\in[d_Z]$. By $\alpha$-regularity we have, from Lemma 12 and 13 in \cite{WW2016}, that $\P(\mathcal{E}_j)\geq 1-1/s$ for $j\in[d_Z]$ where
\begin{equation}\label{eq:diameter_Lebesgue_measure_relation}
    \mathcal{E}_j:=\big\{(\alpha+o(1))^{(1+o(1))M_j(\bs{z},\omega)}\leq \text{diam}_j(R(\bs{z},\omega))\leq (1-\alpha+o(1))^{(1+o(1))M_j(\bs{z},\omega)}:bs{z}\in[0,1]^d\big\}.
\end{equation}
Then, on $\mathcal{E}:=\bigcap_j \mathcal{E}_j$,
\[
\alpha^{(1+\zeta_1) M}\leq \mu(R_\ell)\leq (1-\alpha)^{(1+\zeta_2) M}
\]
where $M:=M(\bs{z},\omega):=\sum_{j=1}^{d_Z} M_j(\bs{z},\omega)$ is the total number of splits forming $R(\bs{z},\omega)$ and
\[
\zeta_1 := (1+o(1))\frac{\log(\alpha + o(1))}{\log(\alpha)}-1 = o(1);\qquad \zeta_2 := (1+o(1))\frac{\log(1-\alpha + o(1))}{\log(1-\alpha)}-1 = o(1).
\]
Also, by $\alpha$-regularity
\[
k\leq s\alpha^M\leq s(1-\alpha)^M\leq 2k-1.
\]
Then, on $\mathcal{E}$ which occurs with probability at least $1-d_Z/s$,
\begin{equation}\label{eq:leaves_prob_bounds}
    \underline{f}(k/s)^{1+\zeta_1}\leq p_\ell\leq \overline{f} ((2k-1)/s)^{1+\zeta_2};\qquad \ell\in[L].
\end{equation}

Recall that the Binomial distribution is decreasing in $p$, in the sense that $F_{n,p}\leq F_{n,p'}$ pointwise for $p'\leq p$. Then, using the inequality above
\[
F_{s,\frac{\overline{f}(2k-1)((2k-1)/s)^{o(1)}}{s}}  \leq F_{s,p_\ell}\leq F_{s,\frac{\underline{f}k(k/s)^ {o(1)}}{s}}.
\]
Let $G_{\bs\Lambda}$ denote the Poisson cdf with mean $\lambda$. Note that $0\leq k/s\leq 1$, then if $\underline{\bs\Lambda}:=\liminf_{s\to\infty } (k/s)^{o(1)} >0$ we have, by the pointwise limit $F_{n',p'}\to G_{\lambda'}$ as $n'p'\to\lambda'$,
\[
\limsup_{s\to\infty}F_{s,\frac{\underline{f}k(k/s)^ {o(1)}}{s}}  \leq G_{\underline{\bs\Lambda} \underline{f} k}.
\]
Othewise if $\underline{\bs\Lambda} = 0$ we get a trivial bound $\limsup_{s\to\infty}F_{s,\frac{\underline{f}k(k/s)^ {o(1)}}{s}}\leq 1$. Similarly, we have $0\leq (2k -1)/s\leq 2$ and if $\overline{\bs\Lambda}:=\limsup_{s\to\infty } ((2k-1)/s)^{o(1)} >0$ we have 
\[
\liminf_{s\to\infty}F_{s,\frac{\overline{f}(2k-1)((2k-1)/s)^{o(1)}}{s}}    \geq G_{\overline{\bs\Lambda}\overline{f} (2k-1)}.
\]
Othewise when  $\overline{\bs\Lambda} = 0$ we get a \emph{non} trivial bound $\liminf_{s\to\infty}F_{s,\frac{\overline{f}(2k-1)((2k-1)/s)^{o(1)}}{s}}\geq  1$. Therefore, defining pointwise $G_0(m):=\lim_{\lambda\downarrow 0} G_\lambda(m) = \1\{m\geq 0\} $, we have shown that for $k\in\N$ and $\ell\in[L]$:
\[
G_{\overline{f}\overline{\bs\Lambda}(2k-1)}(\cdot)\leq \liminf_{s\to\infty}\P(N_\ell\leq \cdot) \leq \limsup_{s\to\infty}\P(N_\ell\leq \cdot)\leq G_{\underline{f}\underline{\bs\Lambda}k}(\cdot).
\]
We then obtain the result $(a)$ by evaluating the last inequality at $0$ and using the fact that $\overline{\bs\Lambda}\leq 2$ to conclude 
\begin{equation}\label{eq:prob_empty_leaf}
    \liminf_{s\to\infty}\P(N_\ell=0)\geq  G_{\overline{f}\overline{\bs\Lambda}(2k-1)}(0)\geq  G_{2\overline{f}(2k-1)}(0)= \frac{1}{e^ {2\overline{f}(2k-1)}}.
\end{equation}

For $(b)$, suppose that there are only two leaves $L=2$ indexed by $\ell,\ell'$, then
\begin{align*}
    \P(N_\ell\geq k,N_{\ell'} \geq k)&=\P(N_\ell\geq k|N_{\ell'} \geq k)\P(N_{\ell'} \geq k)\\
    &=\P(N_\ell\geq k|N_{\ell} \leq s-k)\P(N_{\ell'} \geq k)\\
    &=\P(k\leq N_\ell\leq s-k)\P(N_{\ell} \leq s-k)\P(N_{\ell'} \geq k)\\
    &\leq \P( N_\ell\geq k)\P(N_{\ell'} \geq k),
\end{align*}
where we use the fact that $N_\ell + N_{\ell'}=s$. Applying induction, it is easy to verify that 
\begin{equation}\label{eq:multinomial_inequality}
    \P(N_1\geq k,\dots, N_{L} \geq k)\leq\prod_{l=1}^L\P(N_\ell\geq k);\quad k\in\N_0.
\end{equation}
Let $N_{\min}:=\min_{\ell\in[L]} N_\ell$. Use \eqref{eq:multinomial_inequality} with $k=1$, the fact that $1-x\leq e^ {-x}$ for $x\in\R$ and part $(a)$  to write for $k\in\N$
\begin{align*}
    \P(N_{\min} = 0) &= 1 - \P(N_1\geq  1,\dots, N_{L} \geq 1)\\
    &\geq 1 - \prod_{l=1}^L\P(N_\ell\geq 1)\\
    &= 1 - \prod_{l=1}^L\big[1- \P(N_\ell=0)\big]\\
    &\geq 1 - \exp\big[-\sum_{l=1}^L \P(N_\ell=0)\big]\\
    &\geq  1 - \exp\big[-L(e^{-2\overline{f}(2k-1)} + o(1) )\big]\to 1,
\end{align*}
because $L\asymp s/k\to\infty$ for each $k\in\N$ as $s\to\infty$, which concludes the proof of part $(b)$. Note that if $k\geq \log s/4$ then $\P(N_{\min}=0) \not\to 1$.

For $(c)$, we have by Markov's inequality $N_\ell \lesssim_\P \E[N_\ell] = sp_\ell$ hence $N_\ell\lesssim_\P sp_\ell$. For the other direction, we apply tail bounds for the binomial distribution. Refer to \cite[pg 151]{feller1957}. Let $S_n\sim\text{Binom}(m,p)$, then
\begin{align}
    \P(S\leq r)&\leq \frac{(m-r)p}{(mp-r)^2};\qquad  r\leq mp \label{eq:binomial_lower_bound}\\
    \P(S\geq  r)&\leq \frac{r(1-p)}{(mp-r)^2};\qquad r\geq mp\label{eq:binomial_upper_bound}.
\end{align}

Then using  \eqref{eq:binomial_lower_bound}, we write for some $C>1$
\[
\P(N_\ell\leq sp_\ell/C)\leq \frac{(s-sp_\ell/C)p_\ell}{(1-1/C)^ 2(sp_\ell)^2}\leq \frac{sp_\ell}{(1-1/C)^ 2(sp_\ell)^2}=\frac{1}{(1-1/C)^ 2sp_\ell}.
\]
Also, from \eqref{eq:leaves_prob_bounds} we have for $\epsilon>0$
\begin{equation}\label{eq:sp_bounds}
    k\left(\frac{k}{s}\right)^{\epsilon}\lesssim k\left(\frac{k}{s}\right)^{\zeta_1}\lesssim \underline{f}s\left(\frac{k}{s}\right)^{1+\zeta_1}\leq sp_\ell\leq \overline{f} s\left(\frac{2k-1}{s}\right)^{1+\zeta_2}\lesssim   k\left(\frac{2k-1}{s}\right)^{\zeta_2}\lesssim  k\left(\frac{k}{s}\right)^{-\epsilon}
\end{equation}
because $\zeta_1=o(1)$ and $\zeta_2=o(1)$. By assumption $k\gtrsim s^\epsilon$  then the left-hand side 
is at least of order $k^\epsilon\to\infty$ as $s\to\infty$ which ensures that $sp_\ell\to\infty$. Then $N_\ell\gtrsim_\P sp_\ell$ and $N\asymp_\P sp_\ell$. Therefore, from \eqref{eq:leaves_prob_bounds} we have for $\epsilon>0$
\[
k\left(\frac{k}{s}\right)^{\epsilon}\lesssim_\P N_\ell\lesssim_\P  k\left(\frac{k}{s}\right)^{-\epsilon},
\]
which concludes the proof of part $(c)$.

For $(d)$, by the union bound, followed by \eqref{eq:binomial_lower_bound}, we have for $m<sp_{\min}$ where $p_{\min}:=\min_{\ell\in[L]}p_\ell$,
\begin{equation*}
    \P(N_{\min}\leq m) =  \P\left(\bigcup_{\ell=1}^L \{N_\ell\leq  m\}\right)\leq \sum_{\ell=1}^ L\P(X_\ell\leq  m)\leq \sum_{\ell=1}^L\frac{(s-m)p_\ell}{(sp_\ell - m)^2}
    \leq \frac{(s-m)}{(sp_{\min} - m)^2}.
\end{equation*}
Set $m=sp_{\min}/C$ for a constant  $C>1$ and use \eqref{eq:sp_bounds} to write
\begin{equation*}
    \P(N_{\min}\leq sp_{\min}/C) \leq \frac{s-sp_{\min}/C}{(1-1/C)^2sp_{\min}}\leq\frac{s}{(1-1/C)^2(sp_{\min})^2}\lesssim \frac{1}{(1-1/C)^2K^{2\epsilon}},
\end{equation*}
where we use the assumption that $k\gtrsim s^{1/2+\epsilon}$. Then $N_{\min}\gtrsim_\P sp_{\min}\gtrsim_\P k(k/s)^\epsilon$ which demostrate part $(d)$ and concludes the proof of the lemma.

\end{proof}

\begin{lemma}[Tree Variance-type bound]\label{lem:tree_variance_rate} Let $\{W_i:i\in[n]\}$ be $n$ independent copies of the random variable $W$ and define the tree $T_W(\bs{z},\omega):=\frac{1}{|\cA(\bs{z},\omega)|}\sum_{i\in\cA (\bs{z},\omega)} W_i$ and the map $V(\bs{z}):= T_W(\bs{z},\omega) - \E[W|\bs{Z} = z]$ for $\bs{z}\in[0,1]^{d_Z}$. If $\bs{z}\mapsto \E[|W|^q|\bs{Z}=\bs{z}]\leq M_q<\infty$ for some $q\geq 2$ and constant $M_q\geq 0$ and $k\gtrsim s^\epsilon$ for some  $\epsilon\in(0,1)$  then for $t>0$, as $s\to\infty$,
\[
\P\Big(|V(\bs{z})|\gtrsim tk^{-\frac{1}{2}}(s/k)^{\epsilon/2}\Big) \lesssim  t^{-q} + s^{-1},\qquad \bs{z}\in[0,1]^d.  
\] 
Also, let $\Delta_p:=\Big(\E_{\bs{Z}'} \Big[V(\bs{Z}')^p\Big]\Big)^{1/p}$ for $p\in[2,\infty)$ where $\bs{Z}'$ and independent copy of $\bs{Z}$ and $\Delta_\infty:=\sup_{bs{z}\in[0,1]^{d_Z}} V(\bs{z})$, then
\begin{align*}
    \P\Big(\Delta_p\gtrsim tk^{-\frac{1}{2}}(s/k)^ {\epsilon/2}\Big) &\lesssim  t^{-q} + s^{-1};\qquad 2\leq  p\leq q<\infty,\\
    \P\Big(\Delta_\infty\gtrsim tk^{-\frac{1}{2}}(s/k)^ {1/q + \epsilon/2}\Big) &\lesssim  t^{-q} + k^{-1}.
\end{align*}
\end{lemma}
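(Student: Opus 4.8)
The plan is to exploit the honesty (sample-split) property of Algorithm~\ref{alg:cap}: the partition $\{R_\ell\}$, and hence the leaf $R(\bs{z},\omega)$ and the count $N:=|\cA(\bs{z},\omega)|$, are built from $\cB$ and $\omega$ alone, so they are independent of the responses $\{W_i:i\in\cA\}$. Conditioning on the $\sigma$-algebra $\cG$ generated by $\cB$, $\omega$, and $\{\1\{\bs{Z}_i\in R(\bs{z},\omega)\}:i\in\cA\}$, the summands $\{W_i:i\in\cA(\bs{z},\omega)\}$ become i.i.d.\ draws from the law of $W$ given $\bs{Z}\in R(\bs{z},\omega)$, with conditional mean $\E[W\mid\bs{Z}\in R(\bs{z},\omega)]=\E[T_W(\bs{z},\omega)\mid\cG]$ and conditional $q$-th moment at most $M_q$, since $\E[|W|^q\mid\bs{Z}\in R]=\P(\bs{Z}\in R)^{-1}\int_R\E[|W|^q\mid\bs{Z}=\bs{u}]f(\bs{u})\diff{\bs{u}}\le M_q$. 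The target rate $k^{-1/2}(s/k)^{\epsilon/2}$ is then exactly the standard-deviation rate $N^{-1/2}$ evaluated at the high-probability lower bound $N\gtrsim_\P k(k/s)^{\epsilon}$ supplied by Lemma~\ref{lem:lower_bound}(c), so the centered average $V(\bs{z})$ is what this variance-type bound controls.

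For the pointwise statement, first I would apply a Rosenthal / Marcinkiewicz--Zygmund inequality conditionally on $\cG$ to the centered average, giving
\[
\E\big[|V(\bs{z})|^q\mid\cG\big]\;\lesssim_q\;\V[W\mid\text{leaf}]^{q/2}\,N^{-q/2}+\E[|W|^q\mid\text{leaf}]\,N^{1-q}\;\lesssim_q\;M_q\,N^{-q/2},
\]
where the last step uses $N\ge 1$ and $q\ge 2$ so that $N^{1-q}\le N^{-q/2}$, together with $\V[W\mid\text{leaf}]^{q/2}\le M_q$. Conditional Markov then yields $\P(|V(\bs{z})|\ge a\mid\cG)\lesssim_q M_q N^{-q/2}a^{-q}$. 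Taking $a=t\,k^{-1/2}(s/k)^{\epsilon/2}$ and intersecting with $\{N\gtrsim k(k/s)^\epsilon\}$ from Lemma~\ref{lem:lower_bound}(c) makes the right-hand side $\lesssim M_q t^{-q}$, while the complementary (small-leaf) event contributes the residual $\lesssim s^{-1}$; averaging over $\cG$ gives the claim.

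For the $\mathcal{L}^p$ statement I would use the monotonicity $\Delta_p\le\Delta_q$ for $2\le p\le q$ (as $\E_{\bs{Z}'}$ is a probability measure), then Markov and Fubini to write $\P(\Delta_p\ge a)\le a^{-q}\,\E_{\bs{Z}'}\E[|V(\bs{Z}')|^q]$. Conditioning on the whole partition and integrating the display above over $\bs{Z}'$ turns this into $\sum_\ell p_\ell N_\ell^{-q/2}$ with $p_\ell=\P(\bs{Z}'\in R_\ell)$; on the high-probability event $\mathcal{E}$ of Lemma~\ref{lem:lower_bound} (where $p_\ell\asymp k/s$ and $N_\ell\asymp sp_\ell$ across the $L\asymp s/k$ leaves) this sum is $\lesssim k^{-q/2}\le (k(k/s)^\epsilon)^{-q/2}$, so $a=t\,k^{-1/2}(s/k)^{\epsilon/2}$ delivers the $t^{-q}$ bound with residual $\P(\mathcal{E}^c)\lesssim s^{-1}$. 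For the sup-norm bound I would use that $\bs{z}\mapsto V(\bs{z})$ takes only $L\asymp s/k$ distinct values (one per leaf), so $\Delta_\infty=\max_\ell|V_\ell|$ and a union bound multiplies the per-leaf tail by $L$; absorbing $L^{1/q}\asymp(s/k)^{1/q}$ into the threshold inflates the rate to $k^{-1/2}(s/k)^{1/q+\epsilon/2}$, with the union over the small-leaf events producing the residual $\lesssim k^{-1}$.

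The main obstacle is the coupling introduced by the random, data-dependent leaf size $N=|\cA(\bs{z},\omega)|$: it is simultaneously the normalization of the average and the object whose lower tail must be controlled, and it is not independent of the partition probabilities $p_\ell$. Honesty is precisely what decouples \emph{which} observations land in the leaf from their $W$-values, enabling the conditional moment inequality; but the residual terms ($s^{-1}$ pointwise and in $\mathcal{L}^p$, $k^{-1}$ in sup-norm) all trace back to the event that some leaf is atypically underpopulated. Getting these residuals honestly—and, in the sup case, verifying that the union-bound inflation matches the claimed $(s/k)^{1/q}$ factor while keeping $\E[N_\ell^{-q/2}\1\{N_\ell\ge 1\}]\lesssim(sp_\ell)^{-q/2}$ uniformly over the $\asymp s/k$ leaves—is the delicate part of the accounting.
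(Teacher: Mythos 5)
Your architecture coincides with the paper's: condition on the $\sigma$-algebra generated by $\cB$, $\omega$ and the leaf-membership indicators (honesty), apply a conditional Rosenthal/Marcinkiewicz--Zygmund bound plus conditional Markov for the pointwise tail, pass to $\mathcal{L}^p$ via monotonicity, Markov and Fubini, and treat the sup by a union bound over the at most $s/k$ leaves. The one point where you genuinely diverge from the paper is the treatment of the random leaf count $N=|\cA(\bs{z},\omega)|$, and that is exactly where your accounting breaks --- the step you yourself flag as ``the delicate part.'' You intersect with the event $\{N\gtrsim k(k/s)^{\epsilon}\}$ and charge its complement to the residual $s^{-1}$, citing Lemma \ref{lem:lower_bound}(c). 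But that lemma is stated only as a $\lesssim_\P$ bound (no rate), and its proof rests on the Chebyshev-type binomial inequality \eqref{eq:binomial_lower_bound}, which yields a failure probability of order $1/(sp_\ell)\asymp 1/k$ on the diameter event $\mathcal{E}$, not $1/s$. Under the lemma's hypothesis $k\gtrsim s^{\epsilon}$ with $\epsilon$ possibly small, $1/k$ can be of order $s^{-\epsilon}\gg s^{-1}$, so the claimed pointwise residual does not follow. The problem compounds in your $\mathcal{L}^p$ and sup-norm steps, which require $N_\ell\asymp sp_\ell$ \emph{simultaneously} over all $\asymp s/k$ leaves: a union bound costs roughly $(s/k)\cdot(1/k)=s/k^{2}$, which exceeds both $s^{-1}$ and $k^{-1}$, and the paper's own uniform count statement, Lemma \ref{lem:lower_bound}(d), is only proved under the strictly stronger assumption $k\gtrsim s^{1/2+\epsilon}$, which you do not have. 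Note also that the event $\mathcal{E}$ of \eqref{eq:diameter_Lebesgue_measure_relation} constrains only diameters and leaf probabilities, not the $\mathcal{A}$-sample counts, so it cannot absorb these failures.

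The paper's proof avoids lower-tail events on $N$ altogether, and this is the missing ingredient. Conditional on $R(\bs{z},\omega)$, one has $N\sim\mathrm{Binomial}(s,p(\bs{z},\omega))$, and the negative-moment inequality $\E\left[(1+B)^{-q/2}\right]\lesssim (sp)^{-q/2}$ gives
\[
\E\left[\frac{1}{N^{q/2}\lor 1}\,\Big|\,R(\bs{z},\omega),\mathcal{E}\right]\;\lesssim\;\big(s\underline{f}(k/s)^{1+o(1)}\big)^{-q/2},
\]
so the $q$-th moment of the leaf-centered average is bounded on $\mathcal{E}$ with no tail event on $N$ ever introduced; the only residual paid is $\P(\mathcal{E}^c)\lesssim d_Z/s$, which the union bound over $\le s/k$ leaves inflates to $d_Z/k$ in the sup case --- precisely the stated $s^{-1}$ and $k^{-1}$ terms. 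Alternatively, you could rescue your plan by upgrading Chebyshev to a Chernoff lower tail, $\P\big(N\le sp/2\mid R(\bs{z},\omega)\big)\le e^{-sp/8}$, which on $\mathcal{E}$ is exponentially small in a power of $s$ and survives the union over leaves; but as written, with Lemma \ref{lem:lower_bound}(c) as the input, the residuals you claim are not justified.
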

\begin{proof}[Lemma \ref{lem:tree_variance_rate}]
It is convenient re-write $T_W(\bs{z},\omega)$ for a subsample of size $|\mathcal{S}|=:s\leq n$ as
\[
T_W(\bs{z},\omega) = \sum_{i=1}^s S_iW_{i}; \qquad S_i :=S_i(\bs{z},\omega):=\begin{cases}
    \left|\{j\in \mathcal{A}:Z_j\in R(\bs{z},\omega)\}\right|^{-1}& \text{if $\bs{Z}_i\in R(\bs{z},\omega)$} \\
    0 &\text{otherwise}.
\end{cases}
\]
First we note that $T_W(\bs{z},\omega)$ is unbiased for $\E[W|\bs{Z}\in R(\bs{z},\omega)]$. In fact, $T_W(\bs{z},\omega)$ is even conditional (on $\S:=(S_1,\dots, S_d)$) unbiased since
\begin{equation*}\label{eq:generic_tree_unbiasedness}
    \E[T_W(\bs{z},\omega)|\S] = \left[\sum_{i=1}^s S_i\E[W_i|S_i]\right] = \E[W|\bs{Z}\in R(\bs{z},\omega)]\sum_{i=1}^s S_i = \E[W|\bs{Z}\in R(\bs{z},\omega)],
\end{equation*}
where we use the fact that, due to the sample split, $S_i\E[W_i|S_i]=|\mathcal{A}(\bs{z},\omega)|^{-1}\E[W|\bs{Z}\in R(\bs{z},\omega)]$ if $\bs{Z}_i\in R(\bs{z},\omega)$ and $0$ otherwise, and $\sum_{i=1}^s S_i = 1$. Then $\E[T_W(\bs{z},\omega)] = \E_{\S}\big[\E[T_W(\bs{z},\omega)|\S]\big] = \E[W|\bs{Z}\in R(\bs{z},\omega)]$. 

For convenience define $U_i:= W_i-\E[W|\bs{Z}\in R(\bs{z},\omega)]$ for $i\in[s]$ and note that the sequence $\{S_iU_i:i\in[s]\}$ is zero mean and independent conditional on $\S$. Then, by Marcinkiewicz–Zygmund inequality followed by Jensen's inequality, we have, for $q\geq 2$,
    \begin{align*}
        \E\left[\left|\sqrt{|\mathcal{A}(\bs{z},\omega)|}\sum_{i=1}^s S_iU_i\right|^q\Big|\S\right] &\leq C_q \E\left[\left|\sum_{i=1}^s |\mathcal{A}(\bs{z},\omega)|S_i^2U_i^ 2\right|^{q/2}\Big|\S\right]\\
        &= C_q \E\left[\left|\sum_{i=1}^s S_iU_i^ 2\right|^{q/2}\Big|\S\right]\\
        &\leq C_q\left(\sum_{i=1}^s S_i^{q/2}\E[|U_i|^q|S_i]\right)\\
        &\leq C_q\left(\sum_{i=1}^s S_i \E[|U_i|^q|S_i]\right)=  C_q\E[|U|^q|\bs{Z}\in R(\bs{z},\omega)],
    \end{align*}
where $C_q$ is constant only depending on $q$. We also use $|\mathcal{A}(\bs{z},\omega)|^ {k-1}S_i^k = S_i$ and $S_i^k\leq S_i$ for $k\geq 1$. Hence
 \begin{align*}
        \E\left[\left|\sum_{i=1}^s S_iU_i\right|^q\Big|\S\right]&\lesssim \frac{1}{|\mathcal{A}(\bs{z},\omega)|^{q/2}\lor 1 }\E\left[\left|\sqrt{|\mathcal{A}(\bs{z},\omega)|}\sum_{i=1}^s S_iU_i\right|^q\Big|\S\right]\\
        &\lesssim\frac{\E[|U|^q|\bs{Z}\in R(\bs{z},\omega)]}{|\mathcal{A}(\bs{z},\omega)|^{q/2}\lor 1}.
    \end{align*}

From \cite{brazil2000} we have that
    \[
    \E[(1+B)^{-\alpha}]\lesssim (np)^{-\alpha};\qquad B\sim\text{Binomial}(n,p),\quad \alpha\in \R.
    \]
    Recall that $|\mathcal{A}(\bs{z},\omega)|\sim \text{Binomial}(s,p(\bs{z},\omega))$ conditional on $R(\bs{z},\omega)$ and $p(\bs{z},\omega)$ is bounded by below on $\mathcal{E}$ as per \eqref{eq:diameter_Lebesgue_measure_relation}. Then conditional on $R(\bs{z},\omega)$ and $\mathcal{E}$, for $\alpha\in R$, we have
    \[
    \E\left[\frac{1}{|\mathcal{A}(\bs{z},\omega)|^{q/2}\lor 1 }\right]\leq 2\E\left[\frac{1}{(1+ |\mathcal{A}(\bs{z},\omega)|)^{q/2}}\right]\lesssim \frac{1}{(sp(\bs{z},\omega))^{q/2}}\leq \frac{1}{(s\underline{f}(k/s)^{1+\zeta_1})^{q/2}}
    \]
    Therefore
    \begin{align*}
       \E\left[\left|\sum_{i=1}^s S_iU_i\right|^q\Big|\mathcal{E}\right] &=\E\left\{\E\left[\left|\sum_{i=1}^s S_iU_i\right|^q\Big |\S,\mathcal{E}\right]\Big|\mathcal{E} \right\}\\
       &\lesssim\E\left\{ \frac{\E[|U|^q|\bs{Z}\in R(\bs{z},\omega)]}{|\mathcal{A}(\bs{z},\omega)|^{q/2}\lor 1}\Big|\mathcal{E}\right\}\\
       &= \E\left\{\E[|U|^q|\bs{Z}\in R(\bs{z},\omega)]\E\left[ \frac{1}{|\mathcal{A}(\bs{z},\omega)|^{q/2}\lor 1}\Big|\mathcal{E},R(\bs{z},\omega)\right]\Big|\mathcal{E}\right\}\\
       &\lesssim\frac{\E[|U|^q|\bs{Z}\in R(\bs{z},\omega)|\mathcal{E}]}{(s\underline{f}(k/s)^{1+\zeta_1})^{q/2}} \\
       &\leq M_q\left(\frac{1}{(s\underline{f}(k/s)^{1+\zeta_1})^{q/2}}\right).
    \end{align*}

    Finally, by Markov inequality and the last bound, we have
    \[
        \P\left(V(\bs{z})\gtrsim \frac{t}{\sqrt{s(k/s)^{1+\zeta_1}}}\right)\leq \P(V(\bs{z})\geq x|\mathcal{E}) + \P(\mathcal{E}^c)\lesssim \frac{1}{t^q} + \frac{d_Z}{s},
        \]
    which demonstrates the first result.

    For the second result in the case $2\leq q<\infty$, we have, for $p\leq q$ and $x>0$, by Markov's inequality followed by Jensen's inequality and Fubini theorem.
        \[
        \P(\Delta_p\geq x|\mathcal{E}) = \P(\Delta_p^q\geq \bs{X}^q|\mathcal{E})\leq \frac{\E\big[(\Delta^p_p)^{q/p}|\mathcal{E}\big]}{\bs{X}^q}\leq \frac{\E\Big[\E_{\bs{Z}'} \Big[V(\bs{Z}')^q\Big]|\mathcal{E}\Big]}{\bs{X}^q} = \frac{\E_{\bs{Z}'}\Big[\E \Big[V(\bs{Z}')^q|\mathcal{E}\Big]\Big]}{\bs{X}^ q}.
        \]
    Set $x = t(s\underline{f}(k/s)^{1+\zeta_1})^{-1/2}$ for $t >0$ to conclude that for  $2\leq p\leq q<\infty$,
    \[
        \P\left(\Delta_p\gtrsim \frac{\delta}{\sqrt{s(k/s)^{1+\zeta_1}}}\right)\leq \P(\Delta_p\geq x|\mathcal{E}) + \P(\mathcal{E}^c)\lesssim \frac{1}{t^q} + \frac{d_Z}{s}.
        \]
For the case $p=\infty$, we have that the number of leaves is bounded  by $s/k$ and then by the union bound for $x = t(s/k)^{1/q}k^{-\frac{1}{2}}(s/k)^ { \epsilon/2}$
\begin{align*}
    \P\left(\sup_{bs{z}\in[0,1]^d} |V(\bs{z})|\geq x\right) &\leq \frac{s}{k}\sup_{bs{z}\in[0,1]^d}\P\left( V(\bs{z})\geq x\right)\lesssim \frac{s}{k}\left( \frac{1}{(s/k)t^q} + \frac{d_Z}{s}\right) =   \frac{1}{t^q} + \frac{d_Z}{k}
\end{align*}
\end{proof}

\begin{lemma}[Tree Bias-type bound]\label{lem:tree_bias_rate} Consider the same setup of Lemma \ref{lem:tree_variance_rate} and define $ B(\bs{z}):=\E[W|\bs{Z}\in R(\bs{z},\omega)] - \E[W|\bs{Z} = z]$ for $\bs{z}\in[0,1]^{d_Z}$. If $\bs{z}\mapsto \E[W|\bs{Z}=\bs{z}]$ is Lipschitz then for $\delta\in(0,1)$,
\begin{equation}\label{eq:pointwise_bias}
        \P\left(|B(\bs{z})| \gtrsim \left(\frac{k}{s}\right)^ {(1-\delta)\frac{K(\alpha)\pi}{d_Z}}\right)\lesssim  \left(\frac{k}{s}\right)^{\frac{\delta^2\pi^2}{2(1+o(1))^2d_Z^ 2\log (1/\alpha)}} + \frac{1}{s}.
\end{equation}
Also, let $\Pi_p:=\Big(\E_{\bs{Z}'} \Big[B(\bs{Z}')^p\Big]\Big)^{1/p}$ for $p\in[2,\infty)$ where $\bs{Z}'$ and independent copy of $\bs{Z}$ and  then
\begin{equation}\label{eq:bias_bound_Lp}
        \P\left(\Pi_p\gtrsim t\left(\tfrac{k}{s}\right)^{(1-\delta^*)\frac{K(\alpha)\pi}{d_Z}}\right)\lesssim \frac{1}{t^q} + \frac{1}{s};\qquad 1\leq  p\leq q<\infty.
    \end{equation}
\end{lemma}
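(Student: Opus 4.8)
The plan is to reduce both inequalities to the leaf-diameter bounds already furnished by Lemma \ref{lem:upper_bound_diameter}, via the elementary observation that the bias is controlled by the diameter of the leaf containing the evaluation point. Writing the conditional mean over the leaf as a density-weighted average,
\[
\E[W\mid \bs{Z}\in R(\bs{z},\omega)] = \frac{\int_{R(\bs{z},\omega)}\E[W\mid\bs{Z}=\bs{u}]\,f(\bs{u})\diff{\bs{u}}}{\int_{R(\bs{z},\omega)}f(\bs{u})\diff{\bs{u}}},
\]
and using that $\bs{z}\in R(\bs{z},\omega)$ together with the Lipschitz property of $\bs{z}\mapsto\E[W\mid\bs{Z}=\bs{z}]$ (with constant $C$), I would first establish the pointwise envelope
\[
|B(\bs{z})| \leq \sup_{\bs{u}\in R(\bs{z},\omega)}\big|\E[W\mid\bs{Z}=\bs{u}]-\E[W\mid\bs{Z}=\bs{z}]\big| \leq C\,\text{diam}(R(\bs{z},\omega)).
\]

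Given this envelope, the pointwise bound \eqref{eq:pointwise_bias} is immediate: since $|B(\bs{z})|\lesssim\text{diam}(R(\bs{z},\omega))$ with the Lipschitz constant absorbed into $\lesssim$, the event $\{|B(\bs{z})|\gtrsim(k/s)^{(1-\delta)K(\alpha)\pi/d_Z}\}$ is contained, up to constants, in the corresponding diameter event of Lemma \ref{lem:upper_bound_diameter}(b), whose probability is exactly the right-hand side of \eqref{eq:pointwise_bias}.

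For the $\mathcal{L}^p$ bound \eqref{eq:bias_bound_Lp}, I would integrate the envelope against the independent copy $\bs{Z}'$. Since $|B(\bs{Z}')|\leq C\,\text{diam}(R(\bs{Z}',\omega))$ holds pointwise (for each realization of the tree), monotonicity of the $\mathcal{L}^p(\bs{Z}')$-norm gives
\[
\Pi_p = \big(\E_{\bs{Z}'}[|B(\bs{Z}')|^p]\big)^{1/p} \leq C\,\big(\E_{\bs{Z}'}[\text{diam}(R(\bs{Z}',\omega))^p]\big)^{1/p}.
\]
Consequently the event $\{\Pi_p\gtrsim t(k/s)^{(1-\delta^*)K(\alpha)\pi/d_Z}\}$ is contained in the corresponding diameter-norm event, and \eqref{eq:bias_bound_Lp} follows directly from Lemma \ref{lem:upper_bound_diameter}(c) with the same $\delta^*$ of \eqref{def:delta_star}.

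The argument is essentially mechanical once the diameter envelope is secured, so I do not anticipate a substantive obstacle. The only points requiring care are harmless: ensuring the Lipschitz constant and the $\sqrt{d_Z}$ factor appearing in the diameter bounds are absorbed into the $\lesssim$ notation, and verifying that the $\mathcal{L}^p$-reduction respects the conditioning on the tree randomness $\omega$ (equivalently on $R(\cdot,\omega)$), so that Lemma \ref{lem:upper_bound_diameter}(c) applies verbatim.
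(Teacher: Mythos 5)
Your proposal is correct and follows essentially the same route as the paper: bound $|B(\bs{z})|\lesssim \mathrm{diam}(R(\bs{z},\omega))$ via the Lipschitz condition, then invoke Lemma \ref{lem:upper_bound_diameter}(b) for the pointwise bound \eqref{eq:pointwise_bias} and Lemma \ref{lem:upper_bound_diameter}(c) for the $\mathcal{L}^p$ bound \eqref{eq:bias_bound_Lp}. The only difference is that you spell out the density-weighted-average justification of the diameter envelope and the monotonicity step for $\Pi_p$, which the paper leaves implicit.
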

\begin{proof}[Lemma \ref{lem:tree_bias_rate}]
From  the Lipschitz condition and Lemma \ref{lem:lower_bound}, we have as $s\to\infty$
\[
|B(\bs{z})|\lesssim  \sup_{\bs{z},\bs{z}'\in R(\bs{z},\omega)}\|\bs{z} - \bs{z}'\| =:\text{diam}(R(\bs{z},\omega))
\]
Use \eqref{eq:diameter_bound_pointwise} to upper bound the leaf diameter for each $\bs{z}\in[0,1]^d$ and obtain \eqref{eq:pointwise_bias}. Similarly, use \eqref{eq:diameter_bound_Lp} to upper bound the leaf diameter for each $\bs{z}\in[0,1]^d$ and obtain\eqref{eq:bias_bound_Lp}.
\end{proof}

\begin{lemma}[Tree Rate of Convergence]\label{lem:tree_covergence_rate} 
Let $\{W_i:i\in[n]\}$ be $n$ independent copies of the random variable $W$ and define the tree $T_W(\bs{z},\omega):=\frac{1}{|\cA(\bs{z},\omega)|}\sum_{i\in\cA (\bs{z},\omega)} W_i$. If $\bs{z}\mapsto T_0(\bs{z}):=\E[W|\bs{Z}=\bs{z}]$ is Lipschitz, $\bs{z}\mapsto \E[|W|^q|\bs{Z}=\bs{z}]\leq M_q<\infty$ for some $q\geq 2$ and constant $M_q\geq 0$ and $k\gtrsim s^\epsilon$ for some  $\epsilon\in(0,1)$  then for $\delta\in (0,1)$, as $s\to\infty$,
\[
|T_W(\bs{z},\omega) - T_0(\bs{z})|\lesssim_\P k^{-\frac{1}{2}}\left(\frac{s}{k}\right)^{\epsilon/2} + \left(\frac{k}{s}\right)^{(1-\delta)\frac{K(\alpha)\pi}{d_Z}};\qquad \bs{z}\in[0,1]^d.  
\] 
Also, for $2\leq p\leq q$,
\begin{align*}
    \left[\int_{[0,1]^d}|T_W(\bs{z},\omega) - T_0(\bs{z})|^pf(\bs{z}) dz \right]^{1/p} &\lesssim_\P k^{-\frac{1}{2}}\left(\frac{s}{k}\right)^{\epsilon/2} + \left(\frac{k}{s}\right)^{(1-\delta^*)\frac{K(\alpha)\pi}{d_Z}},
\end{align*}
where $\delta^*\in (0,1)$ is defined in Lemma \ref{lem:tree_bias_rate}.
\begin{proof}[Lemma \ref{lem:tree_covergence_rate}]
    By the triangle inequality combined with Lemma \ref{lem:tree_variance_rate} and \ref{lem:tree_bias_rate} we obtain the first result since
    \begin{align*}
        |T(\bs{z},\omega) - \E[W|\bs{Z}=\bs{z}]| &\leq|T(\bs{z},\omega) - \E[W|\bs{Z}\in R(\bs{z},\omega)]| + |\E[W|\bs{Z}\in R(\bs{z},\omega) - \E[W|\bs{Z}=\bs{z}]|\\
        &\lesssim_\P  k^{-\frac{1}{2}}\left(\frac{s}{k}\right)^{\epsilon/2} + \left(\frac{k}{s}\right)^{(1-\delta)\frac{K(\alpha)\pi}{d_Z}}.
    \end{align*}
Similarly, the second result follows from the triangle inequality, Lemma \ref{lem:tree_variance_rate} and \ref{lem:tree_bias_rate} with $p=q$.
\end{proof}

\end{lemma}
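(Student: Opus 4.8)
The plan is to prove both displays by a single bias--variance decomposition of the tree error around the \emph{leaf mean}, so that each piece is handed off verbatim to one of the two preceding auxiliary lemmas. Writing $\mu_R(\bs{z}):=\E[W\mid \bs{Z}\in R(\bs{z},\omega)]$ for the conditional mean of $W$ over the leaf containing $\bs{z}$, I would start from the identity
\[
T_W(\bs{z},\omega)-T_0(\bs{z}) = \big[\,T_W(\bs{z},\omega)-\mu_R(\bs{z})\,\big] + \big[\,\mu_R(\bs{z})-T_0(\bs{z})\,\big].
\]
The first bracket is a centred stochastic fluctuation (it has conditional mean zero given the cell assignments, by the sample-split argument used in Theorem \ref{thm:unbiasedness}), and is exactly the quantity controlled by the variance-type bound Lemma \ref{lem:tree_variance_rate}. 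The second bracket is a deterministic approximation error, and is exactly the quantity controlled by the bias-type bound Lemma \ref{lem:tree_bias_rate}.

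For the pointwise statement I would apply the triangle inequality and bound the two pieces separately. By Lemma \ref{lem:tree_variance_rate}, using the moment hypothesis $\E[|W|^q\mid\bs{Z}=\bs{z}]\le M_q$ together with $k\gtrsim s^\epsilon$, the variance piece is $\lesssim_\P k^{-1/2}(s/k)^{\epsilon/2}$; by Lemma \ref{lem:tree_bias_rate}, using the Lipschitz continuity of $\bs{z}\mapsto T_0(\bs{z})$ and the leaf-diameter bound of Lemma \ref{lem:upper_bound_diameter}(b), the bias piece is $\lesssim_\P (k/s)^{(1-\delta)K(\alpha)\pi/d_Z}$. Since a sum of two $O_\P$ terms is $O_\P$ of the sum of the rates, adding these yields the first claimed bound for every $\bs{z}\in[0,1]^d$.

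For the $\mathcal{L}^p$ statement the same decomposition applies, now measured in the weighted norm $\big(\int |\cdot|^p f\,\mathrm{d}z\big)^{1/p}$. Here I would use Minkowski's inequality to split the integrated norm across the two brackets, then invoke the $\mathcal{L}^p$ versions of the same two lemmas: the integrated variance bound of Lemma \ref{lem:tree_variance_rate}, taken with $p=q$ so that the extra factor $(s/k)^{1/q}$ appearing in the $p=\infty$ case is avoided and the rate stays $k^{-1/2}(s/k)^{\epsilon/2}$; and the integrated bias bound \eqref{eq:bias_bound_Lp} of Lemma \ref{lem:tree_bias_rate}, which produces the $(k/s)^{(1-\delta^*)K(\alpha)\pi/d_Z}$ term with the balancing exponent $\delta^*$ defined there. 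Combining gives the second display.

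Given the auxiliary lemmas, the assembly is essentially bookkeeping, so the genuine difficulty lives upstream (the Binomial inverse-moment control of the random leaf size behind Lemma \ref{lem:tree_variance_rate}, and the $\alpha$-regularity/Chernoff control of the leaf diameter behind Lemma \ref{lem:tree_bias_rate}). The only subtlety at this stage that I would be careful about is that both bounds are $\lesssim_\P$ statements conditioned on the high-probability event $\mathcal{E}$ from \eqref{eq:diameter_Lebesgue_measure_relation} with $\P(\mathcal{E})\ge 1-d_Z/s$; intersecting the two events preserves each rate up to the same $O(1/s)$ correction, so the triangle-inequality combination is legitimate and does not degrade the stated exponents.
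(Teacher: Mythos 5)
Your proposal is correct and matches the paper's own argument essentially verbatim: the same decomposition around the leaf mean $\E[W\mid\bs{Z}\in R(\bs{z},\omega)]$, the triangle (resp.\ Minkowski) inequality, and then direct invocation of Lemma \ref{lem:tree_variance_rate} for the stochastic piece and Lemma \ref{lem:tree_bias_rate} (with $p=q$) for the bias piece. Your added remark about intersecting the high-probability events $\mathcal{E}$ only costing an $O(1/s)$ correction is a sensible bit of bookkeeping the paper leaves implicit.
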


\begin{lemma}\label{lem:generic_RF_assymptotic_normality} Let $\{V_i:= (W_i,\bs{Z}_i):i\in[n]\}$ be $n$ independent copies of the random vector $V:=(W,\bs{Z})$ where $W$ is a random variable . If $\bs{z}\mapsto \E[W|\bs{Z}=\bs{z}]$ and $\bs{z}\mapsto \E[W^2|\bs{Z}=\bs{z}]$ are Lipschitz, $\V[W|\bs{Z}=\bs{z}]\geq 0$ and $k\asymp s^{\eta}$ for $\eta\in (1- K(\alpha,\pi), 1)\subseteq (0,1)$ then, if $s\to\infty$ and $s(\log n)^ {d_Z} = o(n)$
\begin{equation}\label{eq:generic_RF_assymptotic_normality}
    \frac{\overline{T}(\bs{z}) - \E[\overline{T}(\bs{z})]}{\lambda(\bs{z})}\cd \mathsf{N}(0,1)\quad\text{and}\quad  \frac{\V[\overline{T}(\bs{z})]}{\lambda^2(\bs{z})}\to 1,
\end{equation}
where $\overline{T}(\bs{z}) :=\frac{1}{B}\sum_{b=1}^ B T_W(\bs{z},\omega_b)$, $ $ $T_W(\bs{z},\omega_b):=\frac{1}{|\cA(\bs{z},\omega_b)|}\sum_{i\in\cA (\bs{z},\omega_b)} W_i$, $\{\omega_b:b\in[B]\}$ is independent of $\{V_i:i\in[n]\}$ and $\lambda^2(\bs{z})$ is the variance of the H\'ayek Projection of $T_W(\bs{z},\omega_b)$, which can be lower bounded as
\begin{equation}\label{eq:generic_RF_assymptotic_variance_bounds}
    \frac{s^\frac{1-\eta}{K(\alpha)}}{n
(\log s)^{d_Z}}\lesssim \lambda^2(\bs{z}).
\end{equation}
\end{lemma}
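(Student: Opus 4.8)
The plan is to recognize $\overline{\bs T}(\bs z)$ as an (infinite-order) U-statistic and to show that its first-order H\'ajek projection dominates, so that a classical Lindeberg--Lyapunov central limit theorem for i.i.d.\ summands applies. Writing $T_W(\bs z,\omega;V_{i_1},\dots,V_{i_s})$ for the tree kernel on a size-$s$ subsample with internal randomness $\omega$, and letting $B\to\infty$, the forest converges (conditionally on the data) to the complete U-statistic
\[
U_n(\bs z):=\binom{n}{s}^{-1}\sum_{|\cS|=s}\E_\omega\big[T_W(\bs z,\omega;V_{\cS})\big],
\]
the Monte Carlo discrepancy $\overline T(\bs z)-U_n(\bs z)$ having conditional variance $B^{-1}\V_\omega[T_W]$, which is negligible for $B$ large. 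The first step is then the Hoeffding/ANOVA decomposition
\[
U_n(\bs z)-\E[U_n(\bs z)]=\frac{s}{n}\sum_{i=1}^n g_1(V_i)+\cR_n,\qquad g_1(v):=\E\big[T_W(\bs z,\omega;V_1,\dots,V_s)\mid V_1=v\big]-\E[T_W],
\]
where $\cR_n$ collects the mutually orthogonal higher-order layers. The leading term is exactly the H\'ajek projection $\mathring U_n(\bs z)=\tfrac{s}{n}\sum_i g_1(V_i)$, with variance $\lambda^2(\bs z)=\tfrac{s^2}{n}\V[g_1(V_1)]$.

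Given the projection, the CLT $\lambda(\bs z)^{-1}(\mathring U_n-\E\mathring U_n)\cd\mathsf N(0,1)$ follows from a Lyapunov condition on the i.i.d.\ mean-zero array $\tfrac{s}{n}g_1(V_i)$: after the reduction $n^{-\delta/2}\,\E|g_1|^{2+\delta}/\sigma_1^{2+\delta}\to0$ (with $\sigma_1^2=\V[g_1]$ and $2+\delta\le q$), I would control $\E|g_1|^{2+\delta}$ through $|g_1(w,\cdot)|\lesssim|w|\,\theta(\bs z,\bs z)$ and the moment hypothesis $\E[|W|^q\mid\bs Z=\bs z]\le M_q$ via Lemma~\ref{lem:tree_variance_rate}, so that the normalized moment ratio grows no faster than $k^{\delta/2}$ and $(k/n)^{\delta/2}\to0$. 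This leaves two tasks: (i) the lower bound \eqref{eq:generic_RF_assymptotic_variance_bounds} on $\lambda^2$, and (ii) the negligibility $\V[\cR_n]=o(\lambda^2(\bs z))$. Because the ANOVA layers are orthogonal, $\V[U_n]=\lambda^2+\V[\cR_n]$, so (ii) simultaneously delivers $\V[\overline T]/\lambda^2\to1$ once the finite-$B$ Monte Carlo term is sent to zero.

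For (i) I would isolate the own-observation contribution: conditioning on $V_1=(w,\zeta)$ freezes the weight $S_1$, so $g_1(w,\zeta)$ carries the term $w\,\E_\omega[S_1(\bs z,\omega)\mid Z_1=\zeta]=w\,\theta(\bs z,\zeta)$, and $\V[g_1]$ is bounded below by the variance this injects near $\zeta=\bs z$. The leaf-probability and leaf-occupancy bounds of Lemma~\ref{lem:lower_bound}, namely $p(\bs z,\omega)\gtrsim(k/s)^{1+o(1)}$ and $|\cA(\bs z,\omega)|\lesssim_\P s(s/k)^{-K(\alpha)}$, control $\theta(\bs z,\cdot)$ from below on a region whose measure is governed by the leaf geometry; the factor $(\log s)^{d_Z}$ enters through the number of leaves ($\asymp s/k$) and the covering of $[0,1]^{d_Z}$, while the exponent $s^{(1-\eta)/K(\alpha)}$ tracks the worst-case occupancy $s(s/k)^{-1/K(\alpha)}$ under $k\asymp s^\eta$. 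For (ii) the sharp input is the Hoeffding-layer estimate: using the coefficients $c_{n,j}=\binom{s}{j}\binom{n-s}{s-j}/\binom{n}{s}$ together with $\sum_j\binom{s}{j}\sigma_j^2=\V[T_W]$ and $s/n\to0$, one gets
\[
\V[\cR_n]=\sum_{j\ge2}c_{n,j}\sigma_j^2\;\lesssim\;\Big(\frac{s}{n}\Big)^{2}\V[T_W].
\]
Bounding $\V[T_W]\lesssim 1/k$ by the $q=2$ case of Lemma~\ref{lem:tree_variance_rate} and dividing by the lower bound on $\lambda^2$ reduces (ii) to $s^{\,2-\eta-(1-\eta)/K(\alpha)}(\log n)^{d_Z}=o(n)$; since $K(\alpha)<1$ forces the exponent to be at most $1$, this is implied by $\eta\in(1-K(\alpha),1)$ together with the hypothesis $s(\log n)^{d_Z}=o(n)$. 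Slutsky's lemma then assembles the two parts into \eqref{eq:generic_RF_assymptotic_normality}.

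The main obstacle is precisely the pair (i)--(ii): one must show that a single observation exerts enough \emph{first-order} influence on honest trees (their ``incrementality'') that $\lambda^2$ does not decay faster than the higher-order ANOVA variance. This balance is exactly what the admissible subsample window $\eta\in(1-K(\alpha),1)$ and the rate $s(\log n)^{d_Z}=o(n)$ are designed to guarantee, and it is where the quantitative leaf-geometry estimates of Lemma~\ref{lem:lower_bound} carry the decisive weight; the CLT for the projection itself is comparatively routine.
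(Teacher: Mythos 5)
Your architecture coincides with the paper's: both write the forest as a (generalized) U-statistic, isolate the H\'ajek projection $\frac{s}{n}\sum_i g_1(V_i)$, prove a CLT for this i.i.d.\ sum, and show the higher-order ANOVA layers are negligible. Your remainder bound $\V[\cR_n]\le\sum_{j\ge2}\binom{s}{j}^2\binom{n}{j}^{-1}\sigma_j^2\lesssim(s/n)^2\V[T_W]$ is exactly equivalent to the inequality $\rho\le 1+\frac{s}{n}\frac{\V[h]}{s\sigma_1^2}$ that the paper imports from Peng et al.\ and feeds into van der Vaart's Theorem 11.2, and your reduction of the resulting rate condition to $s(\log n)^{d_Z}=o(n)$ via $K(\alpha)<1$ is correct; your Lyapunov verification is, if anything, more careful than the paper's one-line ``i.i.d.\ hence Lindeberg'' claim.

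The genuine gap is your step (i), the lower bound \eqref{eq:generic_RF_assymptotic_variance_bounds} on $\lambda^2(\bs z)$ — which you yourself identify as the crux. You assert that the leaf-probability and occupancy bounds of Lemma \ref{lem:lower_bound} ``control $\theta(\bs z,\cdot)$ from below on a region whose measure is governed by the leaf geometry,'' with the $(\log s)^{d_Z}$ factor coming from ``the number of leaves ($\asymp s/k$).'' This does not work. Lemma \ref{lem:lower_bound} bounds $p(\bs z,\omega)$ and $|\cA(\bs z,\omega)|$ for the \emph{given} leaf, but says nothing about how $\E[S_1\mid Z_1=\zeta]$ varies with $\zeta$, which is what $\V\big[\E[S_1\mid V_1]\big]$ requires. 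The danger is that the random, $\mathcal{B}$-sample-dependent leaf $R(\bs z,\omega)$ could spread the total mass $\E[S_1]=1/s$ over a set of $\zeta$'s of non-negligible measure; in that case $\E\big[(\E[S_1\mid Z_1])^2\big]$ would be of order $1/s^2$, $\lambda^2(\bs z)$ would be far smaller than \eqref{eq:generic_RF_assymptotic_variance_bounds}, and the projection would no longer dominate $\cR_n$. What rules this out is an incrementality argument, which the paper packages in Lemma \ref{lem:variance_ratio}: on the event $\{|\cA(\bs z,\omega)|\ge m\}$ one has $S_1\le \frac{1}{m}P_1$, where $P_1$ indicates that $Z_1$ is an $m$-potential nearest neighbor of $\bs z$; the Wager--Athey counting bound $\P\big(\E[P_1\mid V_1]\ge s^{-2}\big)\lesssim m(\log s)^{d_Z}/s$ then combines with Cauchy--Schwarz and $\E[S_1]=1/s$ to give $\E\big[(\E[S_1\mid Z_1])^2\big]\gtrsim \big(ms(\log s)^{d_Z}\big)^{-1}$, and setting $m\asymp s^{1-\frac{1-\eta}{K(\alpha)}}$ produces \eqref{eq:generic_RF_assymptotic_variance_bounds}. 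The $(\log s)^{d_Z}$ is the price of this PNN covering bound — it has nothing to do with the number of leaves in a single tree, which carries no logarithm. Without this argument (or a substitute for it), both the variance lower bound and the dominance $\V[\cR_n]=o(\lambda^2(\bs z))$ — whose verification divides by that lower bound — are unsupported, so the proof does not close.
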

\begin{proof}[Lemma \ref{lem:generic_RF_assymptotic_normality}]
We start by writing $\overline{T}(\bs{z})$  as a generalized U-statistics following \cite{Peng2022}. For a fixed $\bs{z}\in[0,1]^d$ define the randomized symmetric (in its $s$ arguments) kernel $h$ by,
\[
h(v_1,\dots, v_s) = \sum_{i=1}^s\chi_i w_i;\qquad \chi_i(\bs{z},\xi):=\begin{cases}
    \left|\{j\in \mathcal{A}:z_j\in R(\bs{z},\omega)\}\right|^{-1}& \text{if $z_i\in R(\bs{z},\xi)$} \\
    0 &\text{otherwise}.
    \end{cases}
\]
where $\{\xi:i\in [(n,s)]\}$ are independent copies of $\xi$, which incorporates the randomness of generating the tree with subsample $\mathcal{B}$ and $\omega$.

Then  generalized U-statistic of order $s\in[n]$ is defined as
\[
\overline{T} = \binom{n}{s}^ {-1} \sum_{(n,s)} h(Z_{i_1},\dots, Z_{i_s},\omega),
\]
which have the H-decomposition expressed as
\begin{equation*}\label{eq:H-decompostion}
    \overline{T}-\E[\overline{T}] =\sum_{j=1}^ s U_j;\qquad U_j:=\binom{s}{j}\binom{n}{j}^ {-1} \sum_{(n,j)} h^{(j)}(\bs{X},V_{i_1},\dots, V_{i_s},\omega); \quad j\in[s],
\end{equation*}
where $U_j$'s are zero-mean and pairwise uncorrelated, and 
\begin{align*}
    h_i(v_1,\dots,v_i) &:= \E[h(v_1,\dots,v_i,V_{i+1},\dots, V_s,\xi)] - \E[h];\qquad i\in[s]\\
    h^{(i)}(v_1,\dots,v_i) &:= h_i(v_1,\dots,v_i) -\sum_{j=1}^ {i-1}\sum_{(s,j)}h^ {(j)}(v_{i_1},\dots,v_{i_j});\qquad i\in[s-1]\\
    h^{(s)}(v_1,\dots,v_i) &:= h(v_1,\dots,v_s,\xi) -\sum_{j=1}^ {s-1}\sum_{(s,j)}h^ {(j)}(v_{i_1},\dots,v_{i_j}).
\end{align*}
    In particular, $h^{(1)}(v) = \mathcal{H}_1(v)=\E[h(V_1,\dotsm V_s,\xi|V_1=v)] - \E[h]$ and
\[
U_1 = \frac{s}{n}\sum_{i=1}^n \mathcal{H}_1(V_i). 
\]
Note that the sequence $\{(s/n)\mathcal{H}_1(V_i):i\in[n]\}$ is centered and $i.i.d$ (for fixed $z$ and $n$ and $s$) hence then Linderberg condition is satisfied. Let $\sigma_1^2  :=\V[\mathcal{H}_1(Z_1)]$ then
\[
J_*:=\frac{\sqrt{n}U_1}{s\sigma_1}\cd \mathsf{N}(0,1)\quad\text{in distribution as}\quad n\to\infty.
\]
Consider the following decomposition.
\[
J:=\frac{\overline{T}-\E[\overline{T}]}{\sqrt{\V[\overline{T}]}} = J_* + (J-J_*).
\]
If the second term vanishes in probability, we have that the term on the left-hand side weakly converges to a standard Gaussian. The second term is zero-mean, and its variance equals $2 - 2\C[J,J_*]$. Then, by Theorem 11.2 in \cite{Vaart_1998}, it is sufficient for convergence in the second mean that 
\[
\rho:=\frac{\V[\overline{T}]}{\V[U_1]}\to 1 \quad\text{as}\quad n\to\infty.
\]
Since $\V[\overline{T}]\geq \V[U_1(\bs{x})]$, it suffices to state condition under which $\limsup_{s\to\infty} \rho \leq 1$. Moreover \cite{Peng2022} show that
\[
\rho\leq 1 + \frac{s}{n}\frac{\V[h]}{s\sigma_1}.
\]

If we apply Lemma \ref{lem:variance_ratio} with $m\asymp s^{1-\frac{1-\eta}{K(\alpha)}}$ then by Lemma \ref{lem:lower_bound} we have that $\P(\mathcal{G})\to 1$ and $m/s\to 0$. Therefore there is a constant $C_{f,d_Z}$ depending only  $f$, the density of $\bs{Z}$ (which is assumed to bounded away from $0$ and $\infty$ by Assumption \ref{ass:main}(a)), and $d_Z$ such that with probability approaching 1
\[
\limsup_{s\to\infty}\frac{C_{f_Z,d_Z}}{(\log s)^{d_Z}}\frac{\V[h]}{s\sigma_1}\lesssim 1.
\]
Hence, if $s\to\infty$ and $s(\log n)^ {d_Z} = o(n)$ as $n\to\infty$
\[
J\cd \mathsf{N}(0,1)\quad\text{and}\quad \sqrt{\rho}J\cd \mathsf{N}(0,1)\quad\text{in distribution},
\]
and the proof of \eqref{eq:generic_RF_assymptotic_normality} is complete.

Also, from Lemma \ref{lem:variance_ratio} we have that  
\[
\sigma_1^2=\V[\E[T_W(x;z)|V_1]\gtrsim \V\big[\E[S_1|V_1]\big]\V[W|\bs{Z}=\bs{z}]\gtrsim \frac{C_{f,d_Z}\V[W|\bs{Z}=\bs{z}]}{sm(\log s)^d}\gtrsim \frac{1}{s(\log s)^d} 
\]
then
\[
\lambda^2:=\frac{s^2}{n}\sigma_1^2 = \left(\frac{s}{n}\right)^ 2 n\sigma_1^2\gtrsim\frac{s}{n} \frac{1}{s^{1-\frac{1-\eta}{K(\alpha)}}(\log s)^d} =\frac{s^\frac{1-\eta}{K(\alpha)}}{n
(\log s)^{d_Z}}.
\]
Moreover, there exist a constant $C$ (depending on $C_{f_Z,d_Z}$ and $\underline{\sigma}^2:=\min_{j\in d_X} \V[W|\bs{Z}]>0$ by Assumption \ref{ass:main}(d)) such that
\[
\sigma_1^2(\bs{z})=\V[\E[T_W(x;\bs{Z})|V_1]\gtrsim \V\big[\E[S_1|V_1]\big]\V[W|\bs{Z}=\bs{z}]\gtrsim \frac{C_{f,d_Z}\V[W|\bs{Z}=\bs{z}]}{sk(\log s)^d}\gtrsim \frac{1}{sk(\log s)^d}
\]
Therefore, there are constants $C_1,C_2$ such that 
\[
C_1 s/(nk(\log s)^d)\leq \sigma_1(\bs{x})\leq C_2(s/n)
\]

Therefore we might take $\Lambda(\bs{z})=\V[\E[T(\bs{z},\omega)|(Y_1,X_1,Z_1) = (\bs{X},y,x)]]$ to conclude that
\[
C_1'  \frac{s^{1-\eta}}{n(\log s)^d}\leq \|\Lambda(\bs{z})\|\leq C_2' (s/n) \to 0
\]
hence $\|\Lambda(\bs{z})\|\cp 0$ as $n\to\infty$ and the proof of $(i)$ is  complete.

Now, recall
\begin{align*}
    \mathcal{H}_1(w,z;z')&:=\E[h(V_1,\dots, V_s,\xi|V_1=v)] - \E[h]\\
    &= \E\left[\sum_{i=1}^s W_i S_i(z',\zeta)\Big| W_1=w,Z_1=\bs{z}\right]- \E[W|\bs{Z}\in R(z',\omega)]\\
    &= w\E[S_1(z',\omega)|Z_1=\bs{z}] - \E[W|\bs{Z}\in R(z',\omega)]\\
    &= w\E[S(z',\omega)|\bs{Z}=\bs{z}] - \E[W|\bs{Z}\in R(z',\omega)].
\end{align*}
Note that when $z', z''\in[0,1]^d$ belong to the same leaf $R_\ell$ then $(w,z)\mapsto \mathcal{H}_1(w,z;z')$ equals $(w,z)\mapsto \mathcal{H}_1(w,z;z'')$ so we define $\kappa_\ell(w,z):=w\E[S^{(\ell)}|\bs{Z}=\bs{z}] - \E[W|\bs{Z}\in R_\ell] $ where $S^{(\ell)}$ is defined as above with $R(z',\omega)$ replaced by $R_\ell$ for $l\in[L]$. Finally, we define
\[
S = \frac{1}{\sqrt{n}}\sum_{i=1}^n Y_i;\qquad Y_i:=(Y_{i,1},\dots,Y_{i,L})^ {\T};\qquad Y_{i,\ell} :=\kappa_\ell(W_i,\bs{Z}_i);\quad \ell\in[L].
\]
So $S$ is a sum of centered iid random vectors with covariance structure $\Gamma:=[\sigma_{\ell,\ell'}]_{\ell,\ell' \in[L]}$
\[\
\gamma_{\ell,\ell'}:=\E[Y_{1,\ell},Y_{1,\ell'}]=\E[\kappa_\ell(W,\bs{Z})\kappa_{\ell'}(W,\bs{Z})] = \C\big[W\E[S^{(\ell)}|\bs{Z}],W\E[S^{(\ell' )}|\bs{Z}]\big].
\]
We need an upper bound for
\[
\E\big[W^2\E[S^{(\ell)}|\bs{Z}]\E[S^{(\ell' )}|\bs{Z}]\big]
\]
Define $S_G:=\sum_{i=1}^n G_i$ where $G_i\sim N(0,\Gamma)$ are iid and $S_G^*$ such that $S_G^*|\text{data}\sim N(0,\widehat{\Gamma})$
\[
\widetilde{\Gamma} = \frac{1}{n}\sum_{i=1}^n (Y_i - \overline{Y})(Y_i - \overline{Y})^{\T};\qquad \overline{Y}:=\frac{1}{n}\sum_{i=1}^nY_i.
\]
By the triangle inequality
\begin{align*}
    \sup_{t\in\R}|\P(\|S\|\leq t) - \P(\|S_G^*\|\leq t|D)|&\leq \sup_{t\in\R}|\P(\|S\|\leq t)
    - \P(\|S_G\|\leq t)|\\
    &\qquad + \sup_{t\in\R}|\P(\|S_G\|\leq t) - \P(\|S_G^*\|\leq t|D)|
\end{align*}

The first term can be bounded by 
\[
\frac{\mu_3(\log L)^2}{\sqrt{n}},\qquad \mu_3:=\E[Y^{\T}\Gamma^ {-1}Y\|Y\|_2]\leq \lambda(\Gamma_{\min})\E[\|Y\|^3]
\]

\end{proof}

\begin{lemma}\label{lem:variance_ratio} Let $V_i = (Y_i,\bs{X}_i,\bs{Z}_i)$ for $i\in[n]$ and define the event $ \mathcal{G}:=\mathcal{G}(m,z)=\{m\leq |A(\omega,z)|\}$. Then for $m\in [s]$ and $ bs{z}\in[0,1]^d$ there is for constant $C_{f,d_Z}$ depending only on $f$ and $d_Z$ such that
\[
    \V\big[\E[S_1|V_1]\big]\gtrsim  \frac{C_{f,d_Z}}{ms(\log s)^{d_Z}} \P(\mathcal{G}) - 1/s^2.
\]
Furthermore, if $\V[W|\bs{Z}=\bs{z}]\leq M<\infty$ and $P(\mathcal{G})\to 1$ and $m/s\to 0$ as $s\to\infty$ then
\[
    \frac{\V[T_W(\bs{z},\omega)]}{s\V\big[\E[Z_1|V_1]}\lesssim_\P \frac{M(\log s)^{d_Z} }{ C_{f,d_Z}}.
\]
If further $\bs{z}\mapsto \E[W|\bs{Z}=\bs{z}]$ and $\bs{z}\mapsto \E[W^2|\bs{Z}=\bs{z}]$ are Lipschitz, and $\V[W|\bs{Z}=\bs{z}]\geq 0$ then
\[
    \frac{\V[T_W(\bs{z},\omega)]}{\V\big[\E[T_W(\bs{z},\omega)|V_1]}\lesssim_\P \frac{M(\log s)^{d_Z} }{ C_{f,d_Z}\V[W|\bs{Z}=\bs{z}]}.
\]
\end{lemma}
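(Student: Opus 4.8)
The plan is to establish the three displayed bounds in order; the first---the lower bound on the variance of the weight projection $\E[S_1\mid V_1]$---is where essentially all the work lies, and the two ratio bounds then follow by dividing an upper bound for $\V[T_W(\bs z,\omega)]$ by it.

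I would begin with the structural reductions. Since the weight $S_1=S_1(\bs z,\omega)$ is a function only of the design points $\{\bs Z_i\}$ and the external randomness, conditioning on $V_1=(W_1,\bs Z_1)$ reduces to conditioning on $\bs Z_1$, so $\E[S_1\mid V_1]=\E[S_1\mid\bs Z_1]=:g(\bs Z_1)$. Exchangeability of the $s$ subsampled indices together with $\sum_{i=1}^s S_i=1$ gives $\E[S_1]=1/s$, hence $\V[\E[S_1\mid V_1]]=\E[g(\bs Z_1)^2]-1/s^2$, which isolates the $-1/s^2$ term; it then remains to prove $\E[g(\bs Z_1)^2]\gtrsim C_{f,d_Z}\,\P(\mathcal G)/(ms(\log s)^{d_Z})$.

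For this lower bound I would use the honesty-based identity $\E[g(\bs Z_1)^2]=\E[S_1 S_1']$, where $S_1'$ is the weight placed on observation $1$ by a second tree grown with an independent copy of the tree randomness but sharing the position $\bs Z_1$; the identity is the tower property combined with the conditional independence of the two trees given $\bs Z_1$. I would then restrict to the event that observation $1$ falls in the $\cA$-part of both trees and inside both leaves $R(\bs z,\omega),R(\bs z,\omega')$, with both leaves in the regime $\mathcal G$ so that, by Lemma~\ref{lem:lower_bound}, their $\cA$-counts are of order $m$ and $S_1 S_1'\gtrsim m^{-2}$. Multiplying by the probability that a point drawn from $f$ is simultaneously co-leaf with $\bs z$ in two independent honest trees, and evaluating this probability through the leaf-content bounds $p\asymp(k/s)^{1+o(1)}$ of Lemma~\ref{lem:lower_bound} by integrating the coordinatewise co-leaf probabilities against $f$, produces the claimed order $\P(\mathcal G)/(ms(\log s)^{d_Z})$; the density bounds of Assumption~\ref{ass:main}(b) enter the constant $C_{f,d_Z}$, and the $(\log s)^{d_Z}$ loss records the coordinatewise fluctuation of the random split counts $M_j(\bs z)$---each of the $d_Z$ coordinates contributing one logarithmic factor when the random leaf widths $\asymp(1-\alpha)^{M_j}$ are averaged and integrated. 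This is the main obstacle: it is precisely here that honesty, $\alpha$-regularity and the geometry of random axis-aligned leaves must be combined, and where the delicate $(\log s)^{d_Z}$ bookkeeping resides.

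Given the lower bound, the two ratio bounds are routine. For the second display, Lemma~\ref{lem:tree_variance_rate} (whose variance part rests only on the moment bound $\V[W\mid\bs Z=\bs z]\le M$) gives $\V[T_W(\bs z,\omega)]\lesssim_\P M/m$; indeed, conditionally on the weights, $\V[T_W\mid(S_i)]=\sum_i S_i^2\,\V[W_i\mid\cdot]\le M\sum_i S_i^2=M/|\cA(\bs z,\omega)|\le M/m$ on $\mathcal G$, while the leaf-mean term contributes only the (constant, in the application vanishing) value $\E[W\mid\bs Z=\bs z]$ and so adds no variance. Dividing by $s$ times the first bound, $s\,\V[\E[S_1\mid V_1]]\gtrsim C_{f,d_Z}\,\P(\mathcal G)/(m(\log s)^{d_Z})$, and using $\P(\mathcal G)\to1$ and $m/s\to0$, yields $\V[T_W]/\big(s\,\V[\E[S_1\mid V_1]]\big)\lesssim_\P M(\log s)^{d_Z}/C_{f,d_Z}$. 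For the third display I would replace the bare weight projection by the true H\'ajek projection: under the additional Lipschitz continuity of $\bs z\mapsto\E[W\mid\bs Z=\bs z]$ and $\bs z\mapsto\E[W^2\mid\bs Z=\bs z]$, the leaf-mean contribution to $\E[T_W\mid V_1]$ is asymptotically negligible, so $\E[T_W\mid V_1]=W_1\,\E[S_1\mid\bs Z_1]+o_\P(1)$ and $\V[\E[T_W\mid V_1]]\gtrsim \V[W\mid\bs Z=\bs z]\,\V[\E[S_1\mid\bs Z_1]]$; substituting this denominator into the previous ratio inserts the factor $\V[W\mid\bs Z=\bs z]$ and gives the stated bound.
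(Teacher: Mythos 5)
Your structural reductions all match the paper: conditioning to get $\E[S_1\mid V_1]=\E[S_1\mid \bs Z_1]$, the decomposition $\V[\E[S_1\mid V_1]]=\E[(\E[S_1\mid\bs Z_1])^2]-1/s^2$, the conditional-variance bound $\V[T_W\mid S_1,\dots,S_s]\le M\sum_i S_i^2 = M/|\cA(\bs z,\omega)|\le M/m$ on $\mathcal G$, and the final divisions that produce the two ratio bounds. The gap is in the only step that carries real content: the lower bound $\E[(\E[S_1\mid\bs Z_1])^2]\gtrsim C_{f,d_Z}\,\P(\mathcal G)/(ms(\log s)^{d_Z})$. Your two-tree identity $\E[(\E[S_1\mid\bs Z_1])^2]=\E[S_1S_1']$ is correct but is only a restatement of the problem: the quantity you then need, the probability that $\bs Z_1$ is co-leaf with $\bs z$ in two independent trees, equals $\E[q(\bs Z_1)^2]$ with $q(\bs Z_1):=\P\big(\bs Z_1\in R(\bs z,\omega)\mid\bs Z_1\big)$, i.e.\ exactly the second moment you set out to bound. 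Evaluating it the way you propose---multiplying marginal leaf-content bounds $p\asymp(k/s)^{1+o(1)}$ and integrating coordinatewise co-leaf probabilities against $f$---delivers at best the Jensen bound $\E[q^2]\ge(\E[q])^2\asymp p^2$, which after normalizing by the leaf counts gives $\E[S_1S_1']\gtrsim 1/s^2$: precisely the term cancelled by $-(\E[S_1])^2$, so the resulting variance bound is vacuous. No leaf-content bookkeeping can repair this, because two independent axis-aligned leaves through $\bs z$, each of content $m/s$, can intersect in a set of content of order $(m/s)^2$ (thin slabs along different coordinates); ruling out that degenerate geometry \emph{is} the hard part, and your proposal contains no mechanism for it.

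What is missing is the anti-concentration device the paper imports from the proof of Lemma 3.2 of Wager and Athey (2018), their expression (33). Since $S_1\neq 0$ forces $\bs Z_1$ to be an $m$-potential nearest neighbor (PNN) of $\bs z$, and the number of $m$-PNNs among $s$ points is $O\big(m(\log s)^{d_Z}\big)$ with high probability, one gets, with $P_1$ the indicator that $\bs Z_1$ is an $m$-PNN of $\bs z$,
\[
\P\Big(\E[P_1\mid\bs Z_1]\ge \tfrac{1}{s^2}\Big)\lesssim \frac{m(\log s)^{d_Z}}{s},
\]
so that the first moment $\E[\E[S_1\mid\bs Z_1]]=1/s$ is forced to concentrate on an event of probability $O\big(m(\log s)^{d_Z}/s\big)$; Cauchy--Schwarz then inflates the second moment to the claimed order $1/(ms(\log s)^{d_Z})$. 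This support-counting bound---not the fluctuation of the split counts $M_j(\bs z)$ that you invoke---is the actual source of the $(\log s)^{d_Z}$ factor. Without it (or an equivalent restriction on the size of the set of observations that can ever receive positive weight), your first display fails, and with it both ratio bounds, whose proofs are otherwise fine and parallel to the paper's.
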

\begin{proof}[Lemma \ref{lem:variance_ratio}]
    Recall that $\bs{Z}_i$ is a $m$-potential nearest neighbor (PNN) of $z$ if there is an axis-aligned rectangle containing \emph{only} $z$ and a subset of $Z_1,\dots Z_s$ containing $\bs{Z}_i$ with size between $m$ and $2m-1$ and nothing else. Define
    \[
    P_i:=P_i(m,z):=\1\{\bs{Z}_i\text{ is a $m$-PNN of $z$}\};\quad bs{z}\in[0,1]^d.
    \]
    Note that $P_i=0$ implies $S_i=0$ conditional on $\mathcal{G}$ and then
    \[
    \E[S_1|Z_1,\mathcal{G}] \leq \frac{1}{m}\E[P_1|Z_1 ].
    \]
    Expression (33) in the proof of Lemma 3.2 in \cite{athey_wager2018} gives us.
    \[
    \P\left(\E[P_1|V_1 ]\geq \frac{1}{s^2}\right)\lesssim m\frac{2^ {d+1}(\log s)^d}{(d-1)!s},
    \]
    which implies that
    \[
    \E\big[(\E[S_1|Z_1,\mathcal{G}])^ 2\big]\gtrsim\frac{C_{f,d_Z}}{ms(\log s)^{d_Z}}.
    \]
    Therefore,
    \begin{align*}
        \V\big[\E[S_1|V_1]\big] &= \E\big[(\E[S_1|V_1])^2\big] -  \big(\E[S_1]\big)^2\\
         &=  \E\big[(\E[S_1|V_1])^2\big] - 1/s^2\\
         &=  \E\Big[\big(\E[S_1|Z_1,\mathcal{G}]\P(\mathcal{G}) + \E[S_1|Z_1,\mathcal{G}^c]\P(\mathcal{G}^c)\big)^ 2\Big]- 1/s^2\\
         &\gtrsim  \frac{C_{f,d_Z}}{ms(\log s)^{d_Z}} \P(\mathcal{G}) - 1/s^2.
    \end{align*}

    Now on $\mathcal{G}$, we have
    \[
    m\V[T_W(\bs{z},\omega)]\leq |\mathcal{A}(\omega,z)|\V[T_W(\bs{z},\omega)] =\E\big[|\mathcal{A}(\omega,z)|\V[T_W(\bs{z},\omega)|\S]\big] =\E\big[\V[W|\bs{Z}=\bs{z}]\big] \leq M.
    \]
    Therefore, on $\mathcal{G}$ such that $\P(\mathcal{G})\to 1$ and $m/s\to 0$ we have with probability approaching 1
    \[
    \frac{\V[T_W(\bs{z},\omega)]}{s\V\big[\E[S_1|V_1]}\lesssim
    \frac{M}{\frac{C_{f,d_Z}}{(\log s)^{d_Z}}\P(\mathcal{G})- m/s}\to
    \frac{M(\log s)^{d_Z}}{C_{f,d_Z}}. 
    \]
\end{proof}

\end{document}